\documentclass{article}
\pdfoutput=1
\usepackage{iclr2027_conference,times}

\usepackage{amsmath,amsfonts,bm}

\def\eqref#1{equation~\ref{#1}}

\def\1{\bm{1}}

\def\eps{{\epsilon}}

\def\rvs{{\mathbf{s}}}

\def\rvx{{\mathbf{x}}}
\def\rvy{{\mathbf{y}}}

\DeclareMathAlphabet{\mathsfit}{\encodingdefault}{\sfdefault}{m}{sl}
\SetMathAlphabet{\mathsfit}{bold}{\encodingdefault}{\sfdefault}{bx}{n}

\def\gD{{\mathcal{D}}}

\def\gI{{\mathcal{I}}}

\newcommand{\E}{\mathbb{E}}

\makeatletter
\DeclareRobustCommand{\eqref}[1]{\textup{\tagform@{\ref{#1}}}}
\makeatother
\usepackage{algpseudocode}
\usepackage{hyperref}
\usepackage{url}
\usepackage{subcaption}
\usepackage{tikz}
\usetikzlibrary{patterns}
\usepackage{pgfplots}
\pgfplotsset{compat=1.18}

\usepackage{amsthm}
\usepackage{wasysym}
\usepackage{thm-restate}
\usepackage{dsfont}
\usepackage{xspace}
\usepackage{mathtools}
\usepackage[nameinlink,capitalise]{cleveref}

\newtheorem{theorem}{Theorem}
\newtheorem{lemma}[theorem]{Lemma}

\newtheorem{claim}[theorem]{Claim}
\newtheorem{definition}[theorem]{Definition}
\newtheorem{remark}[theorem]{Remark}

\newcommand{\Opt}{\textsc{Opt}\xspace}
\newcommand{\Det}{\textsc{Det}\xspace}
\newcommand{\Rand}{\textsc{Rand}\xspace}

\newcommand{\smart}{\textsc{SmartCow}\xspace}
\newcommand{\Bid}{\textsc{Bid}\xspace}
\newcommand{\define}{:=}
\renewcommand{\eps}{\varepsilon}
\newcommand{\reals}{\mathbb{R}}
\newcommand{\ints}{\mathbb{Z}}
\newcommand{\ratio}{R}
\DeclareMathOperator{\supp}{\mathsf{supp}}
\DeclarePairedDelimiter{\set}{\{}{\}}

\DeclarePairedDelimiter{\vct}{\langle}{\rangle}

\title{Pure Tail Constraints for Online Problems}

\author{Mateusz Basiak, Marcin Bienkowski, Yongho Shin \& Agnieszka Tatarczuk \\
Institute of Computer Science\\
University of Wrocław\\
Wrocław, Poland \\
\texttt{\{mateusz.basiak, marcin.bienkowski, yongho,} \\
\texttt{agnieszka.tatarczuk\}@cs.uni.wroc.pl}
}

\iclrfinalcopy

\begin{document}

\DeclareFontShape{OT1}{ptm}{m}{scit}{<->ssub*ptm/m/sc}{}

\maketitle
\lhead{}
\renewcommand{\headrulewidth}{0pt}

\begin{abstract}
Controlling tail risk is an important objective in online optimization, and recently it has been studied in the context of competitive analysis. Continuing this line of research, we investigate \emph{pure tail constraints}, which capture the tradeoff between expected and worst-case competitiveness. For two fundamental search problems, online bidding and line search, we derive the Pareto-optimal frontiers of this tradeoff. We then investigate another classic problem, TCP acknowledgment, which has structure similar to the iterated ski rental problem. There, we construct an~algorithm whose tradeoff coincides with the known Pareto-optimal tradeoff for ski rental. The lower bounds for this problem are substantially more involved as the problem exhibits adaptive structure: an online algorithm observes requests of the adversary (packet arrivals) and may \emph{adaptively} adjust its actions (acknowledgments) on this basis. We emphasize that all previous work on tail risk in the context of competitive analysis was restricted to non-adaptive problems, where the feedback given to an algorithm was essentially limited to a~binary indicator of whether the algorithm has succeeded or not. Nonetheless, we identify a set of constraints implied by tail bounds in this adaptive setting, and show that they imply a nontrivial lower bound on the TCP acknowledgment problem.

\end{abstract}

\section{Introduction}

\emph{Online optimization} studies sequential decision-making with uncertainty about the future and under adversarial inputs. In many cases, randomization can substantially improve expected performance guarantees of online algorithms. However, it also introduces variability in solution quality and, consequently, the possibility of poor outcomes on individual runs. Typically, the choices in online optimization are irrevocable. An online algorithm cannot in general be rerun, and we cannot pick its best realization afterwards. Hence, an algorithm with a favorable expected performance guarantee may still exhibit a significant tail risk, i.e., a significant probability that its performance is far from the expected value. Controlling such risk is an important objective and has been studied in various fields of online optimization, including expert advice model~\citep{even2006risk}, multi-armed bandits~\citep{sani2012risk,galichet2013exploration,agrawal2021optimal,ayyagari2023risk}, Markov decision processes~\citep{howard1972risk,geibel2005risk,fei2020risk,ghosh2025online}, portfolio selection~\citep{uziel2018growth}, and submodular optimization~\citep{soma2023online}.

\emph{Competitive analysis} is a standard measure of performance for online algorithms~\citep{BorElY98}. We call a (randomized) online algorithm \emph{$\rho$-competitive} if, for every input, its expected cost is at most $\rho$ times the optimal cost with hindsight (see \cref{def:dependable} for the formal statement). Recently, risk control has been incorporated into the competitive analysis of randomized online algorithms, taking into account not only the expected performance of an algorithm but also the tail of its cost distribution.

This risk control took various forms. \citet{christianson2024risk} measured the expected ratio on a~fraction of worst outcomes. Their notion, called conditional value-at-risk (CVaR) competitive ratio, was applied to ski rental and one-max search, and has been later extended to the variant of the Bahncard problem with infinite discount duration by \citet{himmelreich2026risk}. A slightly different approach was introduced by \citet{DiILMV24}. They studied the expected competitive ratio of a randomized algorithm under the constraint that this ratio exceeds a prescribed threshold with at most a given probability. With the requirement of this probability being zero, we get so-called \emph{pure tail constraints}, which are the focus of this paper. This framework was later extended to two-slope ski rental by \citet{cui2025controlling}. See \citet{christianson2024risk} for a more detailed comparison of CVaR and tail-constraint-based approaches.

\subsection{Our Results and Technical Overview}

In this paper, we study \emph{pure tail constraints} in the competitive analysis of three fundamental online optimization problems: \emph{online bidding}, \emph{line search}, and \emph{TCP acknowledgment}. Informally, we call an algorithm \emph{$\gamma$-dependable} if, for every input and every realization of its internal randomness, its performance is within a factor $\gamma$ of the offline optimum with hindsight (see \cref{def:dependable} for the formal statement). By definition, $\gamma$ cannot be smaller than the optimal deterministic competitive ratio. Moreover, as $\gamma$ tends to infinity, the constraint becomes vacuous, and we expect to recover the optimal randomized competitive ratio. We investigate the tradeoff between these two extremes. For the classical ski rental problem, \citet{DiILMV24} characterized the Pareto-optimal frontier of this tradeoff. We extend this investigation to three other canonical online problems.

\paragraph{Non-adaptive problems.}

We start with non-adaptive problems where the feedback given to an~algorithm is just an indicator whether the algorithm has succeeded or not. \emph{Online bidding} is a~fundamental online problem that captures the challenge of searching for an~unknown target value \citep{ChrKen06}. An algorithm submits a sequence of bids until one reaches or exceeds the target, where the cost is defined as the sum of all submitted bids, including the final one. The optimal deterministic and randomized competitive ratios for this problem are $4$ and $e$, respectively~\citep{ChKeNY08}. Solutions for online bidding have been used as building blocks in algorithms and lower bounds for a variety of problems ranging from machine scheduling~\citep{HaScSW97,EbeSga09,ELMMMS10} and strip packing~\citep{YeHaZh11}, through multi-level aggregation~\citep{BBBCDF16} and minimum latency tours~\citep{GoeKle98,ChChFM04}, to clustering~\citep{ChKeNY08}. In this paper, we characterize the Pareto-optimal frontier of the tradeoff between dependability and expected competitiveness for online bidding as follows (see \cref{fig:bidding}):
\begin{restatable}{theorem}{thmbid} \label{thm:res_bidding}
    Fix $r \in [2, e]$ and let $\gamma(r) \define \frac{r^2}{r-1}$. There exists a $\gamma(r)$-dependable randomized algorithm for online bidding whose expected competitive ratio is $\frac{r}{\ln r}$. Conversely, every $\gamma(r)$-dependable randomized algorithm for online bidding has expected competitive ratio at least $\frac{r}{\ln r}$.
\end{restatable}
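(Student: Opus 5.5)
Both directions rest on the geometric bidding strategy with a uniformly random offset. For the upper bound we fix $r \in [2,e]$ and choose $u$ uniformly from $[0,1)$. We then bid $r^{u}, r^{u+1}, r^{u+2}, \dots$, scaled so that the first bid is below the smallest possible target; we may assume the target is at least $1$. For a target $T$, write $T = r^{k+\theta}$ with $\theta\in[0,1)$. The final bid is $r^{k+u}$ when $u\ge\theta$, and $r^{k+1+u}$ otherwise. The total cost is a geometric sum, at most $\frac{r}{r-1}$ times the final bid.

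For dependability, the worst ratio arises when the final bid only barely exceeds $T$ after the previous bid just missed it. The final bid is then about $rT$, so the cost is at most $\frac{r}{r-1}\cdot rT = \frac{r^2}{r-1}T = \gamma(r)T$, and this holds for every realization of $u$. For the expectation, with $\theta$ fixed we compute $\E[\text{final bid}]/T$. This equals $\int_0^1 r^{u-\theta}\,du$ plus the wrap-around term, which together give $\int_0^1 r^{s}\,ds = \frac{r-1}{\ln r}$. Multiplying by $\frac{r}{r-1}$ gives $\frac{r}{\ln r}$. A boundary correction for the first few bids (truncating the infinite geometric sum) only lowers the cost, so the upper bound holds.

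For the lower bound we would use Yao's principle adapted to dependability. The key observation is that every realization of a $\gamma(r)$-dependable algorithm is itself a deterministic bidding sequence $b_1<b_2<\dots$ satisfying $\sum_{j\le i+1} b_j \le \gamma\, b_i$ for every $i$. This follows by taking the target to be $b_i+\epsilon$. So a $\gamma$-dependable randomized algorithm is exactly a distribution over sequences satisfying this recurrence constraint. Against the adversarial distribution in which $\ln T$ is uniform on $[0,L]$ with $L\to\infty$, we need to show that every constrained sequence has expected cost ratio at least $\frac{r}{\ln r}-o(1)$.

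The expected ratio of a sequence is
\[
\frac{1}{L}\sum_i \Big(\sum_{j\le i+1} b_j\Big)\int_{b_i}^{b_{i+1}}\frac{dt}{t^2},
\]
which reduces to an average over $i$ of $\frac{S_{i+1}}{b_i}\bigl(1-\frac{b_i}{b_{i+1}}\bigr)$ weighted against $\ln(b_{i+1}/b_i)$. Writing $x_i=\ln(b_{i+1}/b_i)$, this is at least $\frac{\sum_i g_i(1-e^{-x_i})}{\sum_i x_i}$ with $g_i=S_{i+1}/b_i\ge b_{i+1}/b_i\cdot(\text{geometric factor})$.

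The hardest step is this deterministic optimization: minimizing the long-run ratio subject to $S_{i+1}\le\gamma b_i$. We would first reduce to stationary geometric sequences $b_i=q^i$. The argument is an averaging, or potential-function, one in the style of the classical proof of the factor-$4$ lower bound for online bidding, showing that deviating from geometric growth does not help. For a geometric sequence the constraint becomes $\frac{q^2}{q-1}\le\gamma(r)$, which forces $q\le r$ because $q\ge2$ on the relevant branch. The ratio is $\frac{q}{\ln q}$, which is decreasing on $[2,e]$, so the best choice is $q=r$, giving $\frac{r}{\ln r}$. If the reduction to geometric sequences resists a direct argument, we would try an LP-duality or factor-revealing formulation over the truncated sequence. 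The dual certificate would be exponential weights matching the equality case $u$ uniform.
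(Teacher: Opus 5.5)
Your upper bound is correct and is the paper's argument: the last bid divided by $T$ has $\log_r$ uniformly distributed in $[0,1)$, and the total cost is at most $\frac{r}{r-1}$ times the last bid. Your lower-bound setup also matches the paper: the min-max principle over dependable deterministic bid sequences, a threshold density $1/(t\ln U)$ on $[1,U]$, and the necessary constraint $S_k \le \gamma(r)\, b_{k-1}$ from thresholds just above $b_{k-1}$.

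The gap is the step you call hardest, which is in fact the entire content of the lower bound. ``Reduce to stationary geometric sequences by an averaging or potential argument'' (with an LP-duality fallback) is a plan, not a proof. It also hides the real obstruction. Dependability does \emph{not} cap the individual step ratios $q_k = b_k/b_{k-1}$ (with $b_0 = 1$) at $r$: for instance, $q_1$ may be as large as $\gamma(r) \ge 4 > e$. Your inequality $q^2/(q-1) \le \gamma(r)$ holds only for exactly geometric sequences. Moreover, in your form $\sum_i g_i(1-e^{-x_i})/\sum_i x_i$, each $g_i = S_{i+1}/b_i$ depends on the whole prefix, so no per-step comparison is available.

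The missing ingredients form the paper's step-ratio lemma.

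First, exchange the order of summation. Bid $b_j$ is paid for every threshold above $b_{j-1}$, so the expected ratio is at least $\frac{1}{\ln U}\sum_{j\le n} q_j - O(1/\ln U)$. Here $n$ is the first index with $b_n \ge U/e$, so that $\ln b_n = \sum_{j\le n}\ln q_j \ge \ln U - 1$. The objective thus becomes separable in the step ratios.

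Second, since $q/\ln q$ is decreasing on $(1,e]$ and continuous, every $q \le r+\theta$ (for $\theta$ small enough) satisfies $q \ge (\frac{r}{\ln r}-\varepsilon)\ln q$. This is trivial when $q \le 1$.

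Third, and this is the key point, dependability forces the step ratios to be \emph{eventually} at most $r+\theta$. Write $B = r^2/(r-1)$. The constraint $S_k \le B\, b_{k-1}$ gives both $q_k \le B - S_{k-1}/b_{k-1}$ and $S_k/b_k \ge \varphi(S_{k-1}/b_{k-1})$, where $\varphi(t) = B/(B-t)$ is increasing. Hence $S_k/b_k \ge \tau_k$, where $\tau_0 = 0$ and $\tau_{k+1} = \varphi(\tau_k)$ increases to the smaller fixed point $r/(r-1)$. Therefore $q_k \le B - \tau_{k-1} \to B - \frac{r}{r-1} = r$. The finitely many earlier steps, each with $q_k \le B$, contribute only an additive constant independent of $U$, which vanishes after dividing by $\ln U$.

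Without an argument of this kind, your lower bound is not established. One minor slip: for geometric sequences the constraint gives $q \in [\frac{r}{r-1}, r]$, which includes values below $2$. So you need monotonicity of $q/\ln q$ on $(1,e]$, not on $[2,e]$.
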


The dependability $\gamma(r)$ is always at least $4$. When $r = e$, we have $\gamma(r) = e^2 / (e - 1) \approx 4.3$, and the expected competitive ratio reaches the unconstrained randomized optimum $e$.

The upper bound is an adaptation of the algorithm of \citet{ChKeNY08}; our main contribution here is the lower bound. We use Yao's min-max principle and analyze a $\gamma(r)$-dependable deterministic algorithm against a random threshold. Its expected ratio is a function of the ratios between consecutive bids, and our main technical tool (\cref{lem:step_ratio}) shows that $\gamma(r)$-dependability constrains these ratios. Roughly speaking, the algorithm cannot do substantially better than the geometric sequence in which every bid is $r$ times larger than the previous one, up to lower-order terms. Details can be found in \cref{sec:bid}.

Next, we extend our results on online bidding to \emph{line search}~\citep{BaCuRa93}. In this classic problem, an algorithm must locate a target at an unknown position on the line while minimizing the total distance traveled before finding it. Line search is closely related to online bidding~\citep{chrobak2006sigact}, and based on this connection, we also provide the Pareto-optimal dependability-competitiveness frontier for this problem as follows (see \cref{fig:line_search} and \cref{sec:line} for details).

\begin{restatable}{theorem}{thmline} \label{thm:res_line}
    Fix $r \in [2, r_0]$, where $r_0 \approx 3.591$ is the solution of $r = \frac{1+r}{\ln r}$, and let $\gamma(r) \define 1 + \frac{2r^2}{r-1}$. There exists a $\gamma(r)$-dependable randomized algorithm for line search whose expected competitive ratio is $1 + \frac{1+r}{\ln r}$. 
    Conversely, every $\gamma(r)$-dependable randomized algorithm for line search has expected competitive ratio at least $1 + \frac{1+r}{\ln r}$.
\end{restatable}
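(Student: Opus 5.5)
We prove \cref{thm:res_line} by reducing line search to the bidding analysis of \cref{thm:res_bidding} and redoing its two halves with line-search costs.

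\textbf{Upper bound.} We use the standard randomized doubling strategy, as in \citet{BaCuRa93}, but with the scaling randomized exactly as in our bidding algorithm. The turn points on alternating sides are $x_i = r^{i+\eps}$, where $\eps$ is uniform on $[0,1)$. The first direction is chosen by a fair coin.

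Suppose the target is at distance $d$ and is first covered on round $k$. The cost is
\[
d + 2\sum_{i<k} x_i \le d + \frac{2 x_{k-1} r}{r-1}.
\]
The relevant previous turn point on the target's side is at least $d/r^2$ away in the worst case. This gives the pointwise bound
\[
1 + \frac{2r^2}{r-1} = \gamma(r)
\]
for every realization, matching the bidding bound $r^2/(r-1)$ with the usual $1+2(\cdot)$ transformation.

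The expected ratio follows the computation of \citet{BaCuRa93} with base $r$. The random exponent gives
\[
\E\bigl[x_{k-1}/d\bigr] = \frac{r-1}{r\ln r}\cdot(\text{side factor}).
\]
Averaging over the coin for the starting side yields $1 + \frac{1+r}{\ln r}$. This expression is minimized at $r_0$, which explains the range $r\le r_0$.

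\textbf{Lower bound.} We apply Yao's principle. Fix a $\gamma(r)$-dependable deterministic strategy, described by turn points $x_1,x_2,\dots$ on alternating sides. We place the target on a uniformly random side at distance $d = e^{U}$, where $U$ is uniform on a long interval $[0,L]$. This is the log-uniform distribution used for bidding, now with a random side.

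The expected ratio is expressed through the ratios $q_i = x_{i+1}/x_i$ of consecutive turn points. We then invoke an analogue of \cref{lem:step_ratio}, applied to the sequence of turn points. Dependability applied to a target placed just beyond $x_{i-2}$ on the same side forces
\[
2\sum_{j\le i} x_j \le (\gamma(r)-1)\,x_{i-2}
\]
up to $o(1)$ terms. This is exactly the bidding constraint with $\gamma' = \frac{2r^2}{r-1}$, after halving, so the same lemma bounds the step ratios. Here we treat the sum over all previous turns as the "bids" and use a step of two indices per side.

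Plugging these bounds into the averaged cost shows the expected ratio is at least $1 + \frac{1+r}{\ln r} - o(1)$ as $L\to\infty$.

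The main obstacle is this transfer of \cref{lem:step_ratio}. In line search, the ratio relevant for dependability is $x_{i}/x_{i-2}$ (same side), while the expected cost involves both sides. We must show that the optimal tradeoff is still achieved by a geometric sequence with base $r$.

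Our plan is to convert the strategy into an online-bidding sequence $b_i = x_i$, whose dependability against the worst target is governed by $\sum_{j\le i} x_j / x_{i-2}$. We would then reprove the lemma with index shift $2$, using convexity of $-\ln$ in the averaged cost, to conclude that deviations from the ratio $r$ only increase the expected cost.
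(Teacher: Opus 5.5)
\textbf{The lower bound rests on an incorrect dependability constraint.} Your overall architecture matches the paper's. For the upper bound you use the randomized geometric strategy with a random starting side; this is $\smart_r$ of Kao, Reif and Tate, and its expected ratio $1+\frac{1+r}{\ln r}$ is their theorem, not a computation from Baeza-Yates et al. For the lower bound you use the min-max principle with a log-uniform target on a random side, reduced to \cref{lem:step_ratio}.

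The problem is the constraint itself. Suppose the target lies just beyond $x_{i-2}$ on the side of $x_{i-2}$ and $x_i$. The searcher visits $x_1,\dots,x_{i-1}$ and then finds the target on its way towards $x_i$, so it never pays for $x_i$. Dependability therefore gives $2\sum_{j\le i-1}x_j + x_{i-2}\le \gamma(r)\,x_{i-2}$, i.e., $S_k\le\frac{r^2}{r-1}\,x_{k-1}$ with $k=i-1$. It does not give your $\sum_{j\le i}x_j\le\frac{r^2}{r-1}\,x_{i-2}$.

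Your inequality is strictly stronger and does not follow from dependability. At $r=2$, $\gamma(2)=9$ is attained by classical doubling, yet doubling has $\sum_{j\le i}x_j/x_{i-2}\to 8$, while your bound demands at most $4$. More generally, every geometric sequence eventually violates it for all $r\in[2,r_0]$, since $q^3/(q-1)\ge 27/4>r^2/(r-1)$.

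So the ``main obstacle'' you describe is an artifact of this off-by-one error. There is no same-side ratio $x_i/x_{i-2}$ forcing a re-proof of the step-ratio lemma with index shift $2$. A lemma proved under your hypothesis would say nothing about dependable algorithms. The correct constraint couples consecutive turn points on opposite sides and is verbatim the hypothesis of \cref{lem:step_ratio}. Your own remark that it is ``exactly the bidding constraint'' is right only after fixing the index, and no new lemma is needed.

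\textbf{What actually changes is the cost, which your proposal leaves implicit.} The searcher reaches $x_i$ exactly when the target lies beyond $x_{i-1}$ on the side of $x_{i-1}$, or beyond $x_i$ on the side of $x_i$. With density $1/(2t\ln U)$, the turn point $x_i$ therefore contributes $(x_i/x_{i-1}+1)/\ln U$. The expected ratio is at least $1+\frac{1}{\ln U}\bigl(n+\sum_{i\le n}x_i/x_{i-1}\bigr)-O(1/\ln U)$.

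The additive $n$ is what turns $r/\ln r$ into $(1+r)/\ln r$. It is handled by applying \cref{lem:step_ratio} with $c=1$, whose condition $r\le\rho_1$ is exactly $r\le r_0$. That is the range on which $(1+q)/\ln q$ is decreasing. This is combined with the fact that, under the correct constraint, the step ratios are eventually at most $r+o(1)$.

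\textbf{A smaller fix in the upper bound.} In your dependability argument, the fact you need is $x_{k-2}<d$ (hence $x_{k-1}<rd$), not $x_{k-2}\ge d/r^2$.
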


\paragraph{An adaptive problem.}

The final problem we consider in this paper is \emph{TCP acknowledgment}~\citep{DoGoSc01}, in which packets sequentially arrive, and an algorithm must acknowledge each packet after its arrival. Each acknowledgment clears all pending (unacknowledged) packets. The objective is to minimize the sum of the number of acknowledgments sent and the total latency incurred by all packets before they are acknowledged. The optimal deterministic competitive ratio is $2$~\citep{DoGoSc01}. By exploiting a connection to ski rental, \citet{KaKeRa03} subsequently obtained a randomized algorithm with competitive ratio $e/(e-1)$, while the matching lower bound was established by \citet{Seiden00}. The problem is also known as \emph{lot sizing}, and has been extensively studied in the operations research literature; see, e.g., a book by~\citet{Kimms97}.

Before we describe our results, we emphasize a key difference between TCP acknowledgment and the previous two problems. In online bidding and line search, an algorithm is necessarily \emph{non-adaptive}: the feedback available to an algorithm is essentially limited to whether the process has terminated; once the goal is reached, no further decisions are required. In contrast, in TCP acknowledgment, a deterministic algorithm can be \emph{adaptive}: it can observe packets as they arrive and may adjust its future acknowledgments to them. This adaptivity makes the analysis of TCP acknowledgment more challenging than that of online bidding and line search. We emphasize that \emph{all problems studied so far in CVaR and tail-constraint-based frameworks were not adaptive}: ski rental, one-max search, two-slope ski rental, and Bahncard with infinite discount duration. Our work is thus the first to study pure tail constraints for an adaptive online problem.

Inspired by the connection identified by \citet{KaKeRa03}, we construct an algorithm for TCP acknowledgment that attains the same dependability-competitiveness tradeoff as the ski rental algorithm of \citet{DiILMV24} (see \cref{fig:tcp}):
\begin{restatable}{theorem}{tcpub} \label{thm:res_tcp}
    Fix $r \in (0, 1]$ and let $\gamma(r) \define 1 + \frac{1}{r}$. There exists a $\gamma(r)$-dependable randomized algorithm for TCP acknowledgment whose expected competitive ratio is $1 + \frac{e^{r}}{(1+r) \cdot e - e^{r}}$. 
\end{restatable}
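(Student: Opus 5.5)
We first recall the ski rental algorithm of \citet{DiILMV24} for dependability $1+\frac1r$. In units where buying costs $1$, it buys at a random time $\tau \in [0,r]$ drawn from the density
\[
p(t)=\frac{e^t}{(1+r)e-e^r}\quad\text{on }[0,r),
\]
and places the remaining mass $\frac{e-e^r}{(1+r)e-e^r}$ as an atom at $t=r$. Its expected ratio is $1+\frac{e^r}{(1+r)e-e^r}$. It is $(1+\frac1r)$-dependable because buying at $\tau\le r$ always costs at most $\tau+1\le (1+\frac1r)\min(\tau',1)$ whenever the season length $\tau'\ge\tau$. This bound is tightest when $\tau=r$ and the season has length $r$. (We will check that the normalization and ratio reproduce the stated formula. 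This is a routine integral of the form $\int_0^r e^t\,dt$ plus the atom.)

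We then lift this to TCP acknowledgment via the reduction of \citet{KaKeRa03}. The algorithm draws a \emph{single} threshold $z\in[0,r]$ from the distribution above, once and for all, and runs deterministically thereafter. After each acknowledgment (and at the start), it monitors the accumulated latency of the currently pending packets. It sends an acknowledgment at the first moment this latency reaches $z$, i.e., $z$ plays the role of the buying time.

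For the analysis, we fix an input and decompose time into the \emph{phases} delimited by the optimal solution's acknowledgments. This follows the standard KKR argument that compares the algorithm phase by phase with \Opt.
\begin{itemize}
\item For the expectation, we reproduce the KKR charging. Within the interval between consecutive \Opt acknowledgments, the packets arriving form a ski-rental-like instance in which the latency \Opt pays plays the role of "renting" and an acknowledgment plays the role of "buying". Summing over phases then gives an expected ratio equal to the ski rental ratio.
\item For dependability, we must prove the same statement \emph{for every fixed $z\in[0,r]$}, with factor $1+\frac1z\ge 1+\frac1r$... but this bound is too weak for small $z$.
\end{itemize}

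The dependability part is therefore the main obstacle. The worst-case ratio of a fixed threshold $z$ can exceed $1+\frac1r$ when $z$ is small, e.g.\ through many acknowledgments each preceded by tiny latency. The fix I would try is to reparametrize the algorithm: acknowledge when accumulated latency reaches $1$ (the deterministic $2$-competitive rule) mixed with the randomized rule, or use a threshold on latency \emph{counted from \Opt's perspective}. Concretely, I would check the charging per phase. In a phase where \Opt pays $\min(L,1)$, the algorithm pays its latency plus the number of its acknowledgments. With threshold $z$, each full acknowledgment costs $z+1$ while \Opt pays at least $\min(z,1)$ worth of latency or an acknowledgment. A potential-function argument, carrying leftover latency across \Opt's acknowledgments, should then bound the per-phase ratio by $\max\bigl(\frac{z+1}{z},\dots\bigr)$.

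If small $z$ genuinely breaks dependability, then truncating the support of the distribution is not an option, since the density starts at $0$. In that case I would instead switch to a rule in which the accumulated latency \emph{since the last algorithm acknowledgment} is compared against $z$ only while the pending set is nonempty. I would verify that \Opt must then also pay either about $z$ latency or an acknowledgment per algorithm acknowledgment. Establishing this per-realization bound is the crux; the expectation bound is then inherited from ski rental by linearity over phases.
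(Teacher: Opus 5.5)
Your proposal has a genuine gap, and it sits where you hesitated: the randomization is placed on the wrong side of $r$. Thresholds below $r$ are not merely hard to analyze; they are genuinely not $(1+\frac1r)$-dependable. Already in ski rental, your inequality $\tau+1\le(1+\frac1r)\min(\tau',1)$ fails at $\tau'=\tau<r$, where buying at $\tau$ gives ratio $1+\frac1\tau>1+\frac1r$. The same happens in TCP acknowledgment. Let the adversary issue bursts, each much larger than all previous ones together, each arriving just after the algorithm acknowledges the previous one at latency $z$. The algorithm pays $1+z$ per burst, while a single acknowledgment at the last of $k$ bursts costs about $1+(k-1)z$, so the ratio tends to $1+\frac1z$. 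Any rule that acknowledges such bursts at latency $z<r$ falls to this instance, including the variant in your last paragraph. Your distribution is also not normalized: its total mass is $\frac{e-1}{(1+r)e-e^r}<1$ for $r\in(0,1]$. The missing idea is that the tradeoff distribution lives on $[r,1]$. It has an atom $q_r=\frac{re^r}{(1+r)e-e^r}$ at the \emph{lower} endpoint $z=r$ and density $p(z)=\frac{(1+r)e^z}{(1+r)e-e^r}$ on $(r,1]$, balanced so that $p(r)=(1+\frac1r)q_r$. So truncating from below is exactly the right move.

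With this distribution, dependability holds realization by realization and needs no potential function. Use the Karlin--Kenyon--Randall rule $\Det_z$: acknowledge at the first $T_i$ admitting $\tau_i\in[T_{i-1},T_i]$ with $P(T_{i-1},\tau_i)\cdot(T_i-\tau_i)=z$. This is not the total-pending-latency rule you describe, and the inequalities below are proved for this rule. First, $z\cdot n_z(\sigma)\le\Opt(\sigma)$, because each interval $(T_{i-1},T_i]$ contains either an acknowledgment of \Opt or at least $z$ of its latency. Second, $W_z(\sigma)\le\Opt(\sigma)$, because for the packets \Opt acknowledges at time $t$, the extra latency of $\Det_z$ is at most $P(T_{i-1},t)\cdot(T_i-t)\le z\le 1$ by the trigger rule. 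Hence $\Det_z$ is $(1+\frac1z)$-dependable, which suffices for every $z\ge r$.

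For the expectation, ``summing ski rental over \Opt's phases'' is not a valid argument, since the acknowledgments of $\Det_z$ are not aligned with \Opt's and its state carries across phases. What works is the integral form of the Karlin--Kenyon--Randall analysis: $\Det_z(\sigma)\le\Opt(\sigma)+n_z(\sigma)-\int_z^1 n_w(\sigma)\,dw$ and $\Opt(\sigma)\ge\int_0^1 n_w(\sigma)\,dw$. The atom forces one extra fact: $n_z(\sigma)$ is nonincreasing in $z$, so $r\cdot n_r(\sigma)\le\int_0^r n_z(\sigma)\,dz$. Since $p'=p$, the density part collapses to $p(r)\int_r^1 n_z\,dz$. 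Then $p(r)-q_r=q_r/r$ gives $\E[\Rand_r(\sigma)]\le(1+\frac{q_r}{r})\cdot\Opt(\sigma)$, which is the claimed ratio.
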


\begin{figure}[t]
\centering
\begin{subfigure}[t]{0.31\textwidth}
    \centering
    \includegraphics[width=\textwidth]{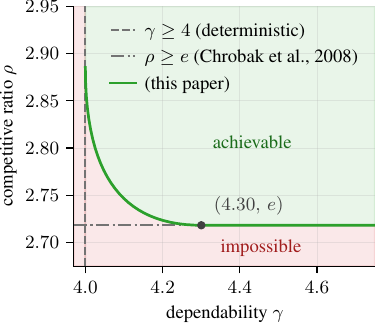}
    \caption{Online bidding, \cref{thm:res_bidding}.}
    \label{fig:bidding}
\end{subfigure}
\hfill
\begin{subfigure}[t]{0.285\textwidth}
    \centering
    \includegraphics[width=\textwidth]{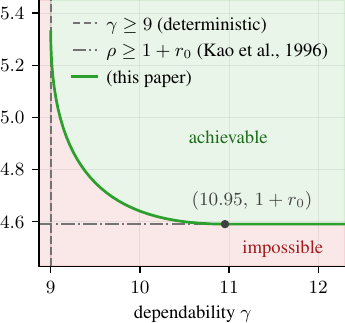}
    \caption{Line search, \cref{thm:res_line}.}
    \label{fig:line_search}
\end{subfigure}
\hfill
\begin{subfigure}[t]{0.36\textwidth}
    \centering
    \includegraphics[width=\textwidth]{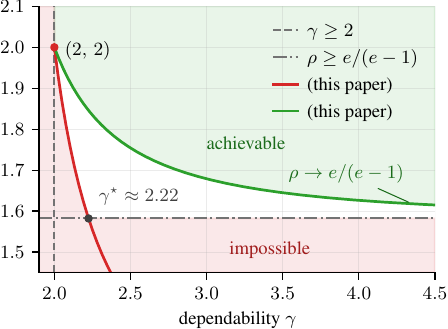}
    \caption{TCP ack.,~\cref{thm:res_tcp,thm:res_tcp_lb}.}
    \label{fig:tcp}
\end{subfigure}
\caption{The dependability-competitiveness tradeoffs for online bidding, line search, and TCP acknowledgment. The curves separate the achievable pairs from the unachievable ones. For TCP acknowledgment, the two bounds do not coincide: every pair above the upper curve is achieved by the algorithm of \cref{thm:res_tcp}, and no pair below the lower curve is achievable by \cref{thm:res_tcp_lb}. The two curves meet at the point $(2,2)$, and our lower bound improves on the randomized one of $e/(e-1)$ for $\gamma < \gamma^\star \approx 2.22$.}
\label{fig:bidding_line}
\end{figure}

Adaptivity makes a matching lower bound for TCP acknowledgment much harder to obtain. An~algorithm plays multiple rounds against the adversary, and it can adapt its future decisions to the packets observed so far. Nevertheless, we establish a nontrivial lower bound on the achievable tradeoff (see also \cref{fig:tcp}):

\begin{restatable}{theorem}{tcplb} \label{thm:res_tcp_lb}
    Fix $r \in [\frac{\sqrt5-1}{2}, 1]$ and let $\gamma(r) \define 1 + \frac{1}{r}$. Every $\gamma(r)$-dependable randomized algorithm for TCP acknowledgment has expected competitive ratio at least
    $\rho(r) = 1 + \frac{(2+r) \cdot r^{2+r}}{(1+r)^2 - r^{2+r}}$.
\end{restatable}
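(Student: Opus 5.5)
We will prove the lower bound with Yao's min-max principle. We fix a distribution over adversarial inputs and show that every deterministic algorithm that is $\gamma(r)$-dependable on every input in its support has expected cost at least $\rho(r)$ times the expected optimal cost. Dependability is a per-realization constraint, so we may condition on the algorithm's random bits; each realization is then a deterministic $\gamma(r)$-dependable algorithm. The remaining work is to choose an input family on which the adaptive choices of such a deterministic algorithm can be described by a few parameters, and to translate $\gamma(r)$-dependability into constraints on those parameters.

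\textbf{Input family.} We use a two-phase construction that forces the algorithm to act as an iterated ski-rental player.
\begin{itemize}
\item A single packet arrives at time $0$. The algorithm's only adaptive decision while no further packet arrives is the time $t$ at which it acknowledges this packet.
\item The adversary ends the input at a random time $T$, drawn from a distribution with a point mass and a density shaped as in the Karlin--Kenyon--Randall argument, truncated according to $r$.
\item With some probability the adversary also injects a second packet (or a burst of packets) right after the algorithm's acknowledgment, which re-creates the same situation at a shifted scale.
\end{itemize}
The second phase is what exploits adaptivity against the algorithm. Waiting costs latency, while an early acknowledgment can be punished by a fresh packet arriving immediately afterwards. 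As a result, the deterministic choice reduces to an acknowledgment time $t_1$ followed by a second-phase decision $t_2$ that may depend on $t_1$.

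\textbf{Constraints from dependability.} We derive the constraints implied by $\gamma(r)=1+1/r$.
\begin{itemize}
\item On the single-packet input stopped at time $T$, the offline optimum pays $\min(T,1)$. An algorithm that has not acknowledged by time $T$ pays $T+1$ if it then acknowledges, so dependability forces $t_1 \le 1/r$, the analogue of the ski-rental rule ``buy by day $1/r$''.
\item Against the burst following an early acknowledgment, requiring cost at most $\gamma$ times the optimum on the two-phase input gives a lower bound on $t_1$, of the form $t_1 \ge$ some function of $r$.
\item Combining the two phases gives an upper bound on $t_2$ expressed through $t_1$.
\end{itemize}
These second-phase constraints are the novel adaptive ingredient: they turn the tail bound into restrictions that remain after the algorithm has observed the input so far.

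\textbf{Optimization.} We minimize the expected cost ratio over the feasible pairs $(t_1,t_2)$ and choose the adversary's distribution so that the algorithm is indifferent over the feasible interval. This indifference yields the exponent $r^{2+r}$: the density behaves like a power law $x^{1+r}$ integrated over a range of multiplicative length $1/r$. Solving the linear indifference equations should then give $\rho(r)=1+\frac{(2+r)r^{2+r}}{(1+r)^2-r^{2+r}}$. The restriction $r\ge(\sqrt5-1)/2$, i.e.\ $r^2+r\ge1$, is where the lower constraint on $t_1$ does not exceed the upper constraint $1/r$ and the relevant case split remains valid, so we check that condition explicitly.

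\textbf{Main obstacle.} The hardest step is the second one: identifying a set of constraints that is both implied by per-realization dependability for every adaptive deterministic strategy and strong enough to give $\rho(r)$. We would first try restricting the adversary to inputs in which further packets arrive only immediately after an acknowledgment. A normal-form argument should then show that any adaptive strategy is dominated by one described by the finitely many thresholds above. Only after that reduction does the dependability constraint decouple into the interval constraints on $t_1$ and $t_2$.
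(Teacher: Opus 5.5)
\textbf{There is a genuine gap.} Your overall frame matches the paper's: the min-max principle over a class of deterministic algorithms defined by constraints that dependability implies, then a distribution that makes the algorithm indifferent. But the constraints, which carry the whole proof, are either missing or wrong.

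Your constraint derivation uses ski-rental reasoning, not TCP acknowledgment. A single packet costs $\Opt$ exactly $1$ (acknowledge on arrival), and there is no stopping time $T$. The single-packet instance only gives $1+t_1\le\gamma$, an upper bound that plays no role in the final argument.

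More seriously, a two-phase input cannot give a nontrivial lower bound on $t_1$. If a large burst arrives at latency $y$ just above $t_1$, the algorithm may acknowledge it immediately. It then pays about $2+t_1$ against $\Opt\approx 1+t_1$, and $2+t_1\le(1+1/r)(1+t_1)$ holds for every $t_1\ge r-1$, which is vacuous for $r\le 1$.

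What is missing is the \emph{penalizing suffix}. This is an arbitrarily long chain of bursts, each much larger than all previous ones and each issued right after the algorithm acknowledges the previous one. The size condition makes the single-acknowledgment schedule cost essentially $1+\sum_j y_j$. Track the post-ack budget $\Phi^+$: $\gamma$ times the cost of that schedule, minus what the algorithm has paid. Along the suffix, $\Phi^+_m\le\gamma(\Phi^+_{m-1}-1)$ up to vanishing error terms. This map has fixed point $\gamma/(\gamma-1)=1+r$. A deficit $\Delta$ below $1+r$ therefore grows by a factor $\gamma$ per burst, and dependability fails after about $\log_\gamma(1/\Delta)$ bursts.

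The invariant $\Phi^+_i\ge 1+r$ yields the two constraints the paper actually uses:
\begin{itemize}
\item from $\Phi^+_1=(\gamma-1)(1+x_1)$, the bound $x_1\ge c=r^2+r-1$;
\item for a second burst at latency $y\in(x_1,1]$, a \emph{lower} bound $(1+r)(1+x_1+r-y)-\varepsilon$ on the algorithm's total cost.
\end{itemize}
The second is not an upper bound on $t_2$, which could not help a lower bound. This also corrects your reading of the range of $r$. The condition $r\ge(\sqrt5-1)/2$ is exactly $c\ge 0$; it is not a compatibility condition with $1/r$, since $c\le 1\le 1/r$ for all $r\le 1$.

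\textbf{The distribution must be oblivious.} ``Inject a burst right after the algorithm's acknowledgment'' cannot be part of the distribution in the min-max principle. Adaptivity belongs only to deriving the per-algorithm constraints. That derivation invokes dependability on the long suffix instances, which lie outside the support of the hard distribution. So you cannot restrict to algorithms that are dependable only on that support.

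\textbf{The paper's distribution} is fixed in advance:
\begin{itemize}
\item with probability $q=\frac{(2+r)r^{1+r}}{(1+r)^2-r^{2+r}}$, a single packet;
\item otherwise, a flat two-burst instance whose second burst arrives at latency $y\in[c,r]$, with density proportional to $(1+y)\,w(y)^r$, where $w(y)=(1+r)-\frac{1+y}{1+r}$.
\end{itemize}
The atom punishes late first acknowledgments, with ratio $1+x_1$. The two-burst instances with $y>x_1$ punish early ones through the cost lower bound above. The density is chosen so that the expected ratio equals $1+qr=\rho$ for every $x_1\in[c,r]$. Algorithms with $x_1>r$ already get $1+qx_1>\rho$.

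\textbf{No normal-form reduction is needed.} You do not have to reduce adaptive strategies to finitely many thresholds. The class used in the min-max principle is simply defined by these two inequalities, quantified over all adaptively generated flat instances. The principle is then applied with this larger class in place of the dependable algorithms.

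Finally, your proposal never computes the constraints or the distribution (``some function of $r$'', ``should then give''). The calculation that produces $\rho(r)$ is therefore absent.
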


To prove this bound, we extend the construction of \citet{Seiden00}. The adversary issues bursts of packets, each much larger than all the preceding ones together, and decides on the next burst only after seeing the algorithm acknowledge the previous one. We show that $\gamma(r)$-dependability forces structural properties on every deterministic algorithm. For example, it cannot acknowledge the first burst too early. This restricts the support of the randomized algorithms we need to consider. We then construct a probability distribution over instances with at most two bursts and apply Yao's min-max principle to this restricted class.

\section{Preliminaries}\label{sec:preliminaries}

All problems which we study are online cost minimization problems: an instance $\sigma$ is revealed to an~algorithm over time, and the algorithm must react to the revealed parts without any knowledge about the future. We write $\Opt(\sigma)$ for the optimal offline cost, $\Det(\sigma)$ for the cost of a deterministic online algorithm \Det on instance $\sigma$, and
\[
    \ratio(\Det, \sigma) \define \frac{\Det(\sigma)}{\Opt(\sigma)}
\]
for the solution quality of \Det on $\sigma$.

\begin{definition}
    \label{def:dependable}
    A randomized algorithm is a probability distribution over deterministic algorithms; we denote such an algorithm by \Rand and its support by $\supp(\Rand)$. It is \emph{$\rho$-competitive}\footnote{This is so-called \emph{strict competitiveness}; for a non-strict one, see \cref{sec:competitive_ratio}.} if $\E_{\Det \sim \Rand}[\ratio(\Det, \sigma)] \leq \rho$ for every instance~$\sigma$, and \emph{$\gamma$-dependable} if $\ratio(\Det, \sigma) \leq \gamma$ for every $\Det \in \supp(\Rand)$ and every instance $\sigma$. 
\end{definition}

Note that $\gamma$-dependability corresponds to the pure tail constraint $(\gamma, 0)$ of \citet{DiILMV24}: the competitive ratio never exceeds $\gamma$, regardless of random choices of \Rand. 

\paragraph{Naming convention.}

Note that for a deterministic algorithm, the notions of competitiveness and dependability coincide. To avoid confusion, \emph{we use dependability whenever we deal with deterministic algorithms}, and to emphasize the distinction we use the term \emph{expected competitiveness} when we deal with the competitiveness of randomized algorithms. Throughout the paper, \Det stands for a deterministic algorithm, \Rand for a randomized one, $\sigma$ for an instance, and $\pi$ for a probability distribution over instances. A parameter on which an algorithm depends is written as a subscript.

\paragraph{Min-max principle.}

In our lower bounds, we show that, for fixed $\gamma$ and $\rho$, no $\gamma$-dependable randomized algorithm has expected competitive ratio better than $\rho$. To this end, we use a slightly modified version of Yao's min-max principle~\citep{Yao77}, stated below. Its proof is given in \cref{sec:competitive_ratio} for the sake of completeness.

\begin{restatable}[Min-max principle]{lemma}{minmax} \label{lem:minmax}
    Let $\gD^\star \subseteq \gD$ be two sets of deterministic algorithms and let $\pi$ be a~probability distribution over instances. Assume that $\E_{\sigma \sim \pi}[\ratio(\Det, \sigma)] \geq \rho$  for every deterministic algorithm $\Det \in \gD$. Then, every randomized algorithm $\Rand$ with $\supp(\Rand) \subseteq \gD^\star$ has expected competitive ratio at least $\rho$.
\end{restatable}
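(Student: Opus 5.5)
The plan is to prove \cref{lem:minmax} directly from linearity of expectation, swapping the order of the expectation over algorithms and over instances. Fix a randomized algorithm $\Rand$ with $\supp(\Rand) \subseteq \gD^\star$, viewed as a probability distribution over deterministic algorithms. Since $\gD^\star \subseteq \gD$, every $\Det \in \supp(\Rand)$ belongs to $\gD$. The hypothesis therefore gives $\E_{\sigma \sim \pi}[\ratio(\Det, \sigma)] \geq \rho$ for each such $\Det$.

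Next, average this inequality over $\Det \sim \Rand$ and exchange the order of expectations:
\[
    \E_{\sigma \sim \pi}\bigl[\E_{\Det \sim \Rand}[\ratio(\Det, \sigma)]\bigr]
    = \E_{\Det \sim \Rand}\bigl[\E_{\sigma \sim \pi}[\ratio(\Det, \sigma)]\bigr]
    \geq \rho .
\]
The exchange is justified by Tonelli's theorem: $\ratio(\Det,\sigma)$ is nonnegative, and the product measure $\Rand \times \pi$ is well defined because $\Rand$ and $\pi$ are independent. The only implicit assumption is that $(\Det,\sigma)\mapsto \ratio(\Det,\sigma)$ is measurable. For the discrete or countable distributions used in the paper this is automatic, and in general I would take it as part of the setup.

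Finally, an average over $\sigma \sim \pi$ that is at least $\rho$ forces some instance $\sigma$ in the support of $\pi$ with $\E_{\Det \sim \Rand}[\ratio(\Det,\sigma)] \geq \rho$; otherwise the average would be strictly below $\rho$. For that instance the expected ratio of $\Rand$ is at least $\rho$. By \cref{def:dependable}, $\Rand$ is then not $\rho'$-competitive for any $\rho' < \rho$, so its expected competitive ratio is at least $\rho$.

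No step here is a real obstacle; the only care needed is the measure-theoretic justification of the exchange in the second step, handled above. The notable point is why the lemma is useful. Taking $\gD$ to be the set of $\gamma$-dependable deterministic algorithms (or any superset of $\supp(\Rand)$ identified by structural lemmas) means that, in the subsequent lower bounds, the expected-ratio bound against $\pi$ needs to be verified only for the restricted class, not for all deterministic algorithms.
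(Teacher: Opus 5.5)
Your proof is correct and follows the same route as the paper's: you exchange the expectations over $\Det \sim \Rand$ and $\sigma \sim \pi$, use $\supp(\Rand) \subseteq \gD^\star \subseteq \gD$ to apply the hypothesis to every algorithm in the support, and compare the $\pi$-average with the worst instance. The paper chains $\sup_\sigma \geq \E_{\sigma \sim \pi}$ directly where you extract an instance attaining at least the average, and your Tonelli remark on nonnegativity and measurability makes explicit an exchange the paper performs without comment.
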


In most applications of this lemma, $\gD^\star$ is the set of all $\gamma$-dependable deterministic algorithms and $\gD$ is the set of all deterministic algorithms satisfying a certain property implied by $\gamma$-dependability. The randomized algorithms in the lemma are then exactly the $\gamma$-dependable randomized algorithms.

\subsection{Problem Definitions}
\label{sec:problems}

All three problems below are classic online problems. The first two problems (online bidding and line search) have a single unknown parameter that describes both the instance and the optimal solution, and are in this sense similar to the ski rental problem. The third problem (TCP acknowledgment), while having a rent-or-buy structure similar to that of the ski rental problem, has more complex instances.

\paragraph{Online bidding.}

In \emph{online bidding} (see, e.g., \citealp{ChKeNY08}), an adversary fixes an unknown threshold $T \geq 1$.\footnote{In the literature, the condition $T \geq 1$ is sometimes dropped. In such a case, one either allows non-strict competitiveness or defines the bids as a bi-infinite sequence that may start with an infinitesimally small value.} The algorithm submits an increasing sequence of bids $x_1 < x_2 < \ldots$; a~bid is \emph{successful} if it is at least $T$. The process stops at the first successful bid, and the algorithm pays the sum of all submitted bids. The optimal offline cost is $T$, achieved by the single bid $T$, so an algorithm whose first successful bid is $x_k$ has solution quality $(\sum_{i \leq k} x_i) / T$.

\paragraph{Line search.}

In \emph{line search}, also known as the \emph{cow-path problem} (see, e.g., \citealp{BaCuRa93}), a searcher (a cow called Bessie) starts at the meeting point of two semi-infinite paths. It must find a target (a gate to a grazing field) placed at an unknown distance $T \geq 1$ from the start, on one of the two paths. The searcher learns the location of the target only upon reaching it, and pays the total distance traveled. The optimal offline cost is $T$, achieved by walking directly to the target.

\paragraph{TCP acknowledgment.}

In \emph{TCP acknowledgment} \citep{DoGoSc01}, an instance is a finite set of packets; packet $j$ arrives at time $a_j \geq 0$ and the algorithm learns about it only then. At any moment, the algorithm may send an acknowledgment. This costs $1$ and clears all pending packets. The cost of a solution is the number of acknowledgments plus the total latency $\sum_j (\text{acknowledgment time of packet $j$} - a_j)$, so the algorithm has to trade acknowledgments against latency.

Related work on these and similar problems is discussed in \cref{sec:related}. 

\section{Online Bidding}
\label{sec:bid}

In this section, we discuss \cref{thm:res_bidding}. Its proof along with the proofs of the lemmas stated below is given in \cref{sec:bid_proofs}; here we focus on the main ideas.

We fix $r \in [2, e]$ and let $\gamma(r) \define r^2/(r-1)$. While our main contribution in this part is the lower bound, we start with a conceptually simpler upper bound. It is attained by the parametrized version of the optimal randomized algorithm by \citet{ChKeNY08}, defined below.

\begin{quote}
\emph{Algorithm $\Bid_r$: first, choose a value $\xi \in [0, 1)$ uniformly at random. Next, submit the increasing bids $r^{\xi}, r^{\xi+1}, r^{\xi+2}, \ldots$ until the first successful one.}
\end{quote}

\begin{restatable}{lemma}{bidub} \label{lem:bid_ub}
    Randomized algorithm $\Bid_r$ is $\gamma(r)$-dependable and its expected competitive ratio is at most~$r/\ln r$.
\end{restatable}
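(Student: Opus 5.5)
We prove the two claims separately, working with a fixed threshold $T \geq 1$. Write $T = r^{m+\tau}$ with $m \in \ints$ and $\tau \in [0,1)$. Note that $m \geq 0$ since $T \geq 1$.

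\textbf{Dependability.} Fix any realization $\xi \in [0,1)$. The bids $r^{\xi+i}$ for $i \ge 0$ form a geometric sequence with ratio $r$. Let $x_k = r^{\xi+k}$ be the first successful bid. There are two cases.

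If $k = 0$, then $x_0 = r^\xi < r \le r\,T$. The cost is $x_0$, so the solution quality is at most $r \le r^2/(r-1)$.

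If $k \geq 1$, then the previous bid was unsuccessful, i.e., $x_{k-1} < T$, so $x_k < rT$. The total cost is a geometric sum:
\[
\sum_{i\le k} x_i < x_k \cdot \frac{r}{r-1} < \frac{r^2}{r-1}\, T = \gamma(r)\, T.
\]
Hence $\ratio(\Det,\sigma) \le \gamma(r)$ for every deterministic algorithm in the support, which is exactly $\gamma(r)$-dependability.

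\textbf{Expected ratio.} For a fixed $T$, the first successful exponent is the smallest $\xi + k \ge m+\tau$. If $\xi \ge \tau$, this is $\xi + m$; if $\xi < \tau$, it is $\xi + m + 1$. Let $e$ denote the winning exponent. The cost is
\[
\sum_{i=0}^{k} r^{\xi+i} = \frac{r^{e+1} - r^{\xi}}{r-1} \le \frac{r^{e+1}}{r-1}.
\]
Writing $e - (m+\tau) = u$, the ratio is at most $\frac{r}{r-1}\, r^{u}$. As $\xi$ ranges uniformly over $[0,1)$, the offset $u$ is uniform on $[0,1)$: it equals $\xi - \tau$ when $\xi \ge \tau$ and $\xi - \tau + 1$ otherwise. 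Therefore
\[
\E\bigl[\ratio\bigr] \le \frac{r}{r-1}\int_0^1 r^u\,du = \frac{r}{r-1}\cdot\frac{r-1}{\ln r} = \frac{r}{\ln r}.
\]

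The only subtlety is the truncation at the first bid when $m = 0$. We handled it by upper-bounding the partial geometric sum with the full infinite sum. This only overestimates the cost, so both bounds remain valid. Nothing else should pose an obstacle; the computation is routine.
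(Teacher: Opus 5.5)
Your proof is correct and follows essentially the same route as the paper. Your offset $u$ is the paper's $\zeta = \log_r(b) - \log_r T$, uniform on $[0,1)$ by the shift-mod-$1$ argument, and both proofs bound the finite geometric sum by the infinite one to get the per-realization bound $\frac{r}{r-1} \cdot r^{u}$, whose expectation gives $r/\ln r$; the paper then obtains dependability from that same bound via $r^{u} < r$ instead of your case split on $k$, and you may want to rename your winning exponent, since $e$ already denotes Euler's number here.
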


In the proof of the above lemma, we argue that the solution quality of $\Bid_r$ is at most $r^{\zeta+1}/(r-1)$, where $\zeta \in [0,1)$ is the fractional part of $\log_r$ of the last bid, which is distributed uniformly. This gives us both the expected competitive ratio and the dependability, depending on whether we take the expectation over $\zeta$ or its worst case.

For the lower bound, we use the min-max principle (\cref{lem:minmax}) and construct an appropriate probability distribution over instances.

\begin{restatable}{lemma}{detbid} \label{lem:det_bid}
    Fix $\delta > 0$. There exists a probability distribution $\pi(\delta)$ over the inputs, such that every $\gamma(r)$-dependable algorithm $\Det$ satisfies $\E_{\sigma \sim \pi(\delta)}[\ratio(\Det, \sigma)] \geq \frac{r}{\ln r} - \delta$.
\end{restatable}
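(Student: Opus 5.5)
We take $\pi(\delta)$ to be a truncated log-uniform distribution. Fix a large integer $N$ (depending on $\delta$) and a small $\eps>0$. Draw $T = e^{u}$ with $u$ uniform on $[0, N]$, so $T$ has density $1/(N t)$ on $[1, e^N]$.

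Let $\Det$ be a $\gamma(r)$-dependable algorithm with bids $x_1<x_2<\dots$ and set $x_0 \define 1$. For $T\in(x_{k-1},x_k]$ the ratio is $S_k/T$, where $S_k=\sum_{i\le k}x_i$. The first bid must be at least $1$, and we may assume $x_1\ge 1$. Integrating $S_k/T$ against the density gives
\[
  \E[\ratio] \;\approx\; \frac1N\sum_k S_k\Bigl(\frac1{x_{k-1}}-\frac1{x_k}\Bigr)
  \;=\;\frac1N\sum_k \frac{S_k}{x_k}\,(q_k-1),
\]
where $q_k = x_k/x_{k-1}$. This is up to boundary terms near $e^N$, which we control by truncating at the last bid below $e^N$. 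Since $\sum_k \ln q_k \approx N$, it suffices to show that, on average over $k$ weighted by $\ln q_k$, the ratio $\frac{S_k}{x_k}(q_k-1)/\ln q_k$ is at least $r/\ln r - \delta$.

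Next we extract the constraint from dependability. The worst threshold in $(x_{k-1},x_k]$ is $T\downarrow x_{k-1}$, so $S_k \le \gamma(r)\,x_{k-1}$ for every $k$. Write $s_k = S_k/x_k$. Then $s_k = 1 + s_{k-1}/q_k$ and $s_k q_k \le \gamma$, i.e., $q_k \le \gamma - s_{k-1}$. The per-step contribution is $s_k(q_k-1)$ against a "length" $\ln q_k$. For the geometric sequence with ratio $r$ we have $s = r/(r-1)$, so $s(r-1)/\ln r = r/\ln r$. At $r\le e$ the function $q\mapsto \frac{q}{q-1}\cdot\frac{q-1}{\ln q} = q/\ln q$ is decreasing on $(1,e]$. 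This is why the algorithm wants larger ratios, and dependability caps them: for a stationary $s$, the cap gives $q\le \gamma - s$, and $s = q/(q-1)$ then forces $q\le r$ exactly when $\gamma = r^2/(r-1)$. This is the content of \cref{lem:step_ratio}, which I would prove as follows. Along any $\gamma$-dependable sequence, the amortized inequality
\[
  s_k(q_k-1) \;\ge\; \frac{r}{\ln r}\ln q_k + \Phi(s_k)-\Phi(s_{k-1})
\]
holds for a suitable bounded potential $\Phi$, linear in $s$. Summing telescopes, the potential contributes $O(1/N)$, and the truncation error also vanishes as $N\to\infty$. This yields $\E[\ratio]\ge r/\ln r-\delta$.

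The main obstacle is finding $\Phi$ and verifying the two-variable inequality over the region $\{(s_{k-1},q_k): q_k\le \gamma - s_{k-1}\}$. Here $s_k = 1+s_{k-1}/q_k$ and $s_{k-1}\in[1,\gamma)$. I would try $\Phi(s) = c\,s$ with $c$ chosen so that equality holds at the fixed point $(s,q)=(r/(r-1),r)$, and so that the derivative in $q$ vanishes there. Then I would check that the remaining function of $(s_{k-1},q_k)$ is minimized on the boundary $q_k=\gamma-s_{k-1}$ or at the fixed point. The monotonicity of $q/\ln q$ on $(1,e]$, i.e., $r\le e$, is used here, and $r\ge 2$ is needed to keep $\gamma\ge4$, which makes the region nonempty for deterministic algorithms. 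If a linear $\Phi$ fails, I would fall back to a lower bound via the $\ln$-weighted averaging argument combined with Jensen on the concave map $\ln$.
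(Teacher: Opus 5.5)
\textbf{There is a genuine gap: the key amortized inequality is never proved, and as you state it, it is false at $r=2$.}

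Your setup matches the paper's. You use the same log-uniform density, the same reduction of the expected ratio to $\frac1N\sum_k s_k(q_k-1)=\frac1N\bigl(\sum_k q_k-s_n\bigr)$, and the same constraint $q_k\le\gamma-s_{k-1}$. The proposal then stops where the substance of the lemma begins. You assert the amortized inequality with a bounded $\Phi$, but call finding $\Phi$ and checking the two-variable inequality ``the main obstacle'', with only a vague Jensen fallback. That inequality is exactly the content of \cref{lem:step_ratio}, so nothing has been proved yet.

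In your stated form (exact constant $r/\ln r$, bounded $\Phi$, all $r\in[2,e]$) the inequality fails at $r=2$. The tight sequence $S_k=4x_{k-1}$ is $x_k=(k+1)2^k$, with $q_k=2+2/k$. With $H_n=\sum_{k\le n}1/k$, it gives $\sum_{k\le n}q_k-\frac{2}{\ln 2}\ln x_n=2H_n-\frac{2}{\ln 2}\ln(n+1)\to-\infty$. Summing your inequality would instead give $\sum_{k\le n}q_k\ge\frac{2}{\ln 2}\ln x_n-O(1)$, so no bounded $\Phi$ can work there.

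Your recipe for choosing $c$ also breaks down. The fixed point $(s^*,r)$, with $s^*=r/(r-1)$, lies on the active boundary $s=\gamma-q$. So the derivative that must vanish is the one along that boundary, and at $r=2$ it equals $1-1/\ln 2$ whatever $c$ is. If you instead make $\partial_q$ vanish at fixed $s$, you get $1+c>0$ for $r<e$. The inequality then fails at the states $(s^*-\eta,r+\eta)$ that the tight sequence visits.

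The paper avoids this by allowing slack $\eps$ and an $\eps$-dependent constant $F$. It shows $s_k\ge\tau_k\to r/(r-1)$, hence $q_k\le z_k\to r$, and then uses continuity of $q/\ln q$ at $r$. This works even though at $r=2$ the caps converge only at rate $O(1/k)$.

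Your route can still be completed, and for $r>2$ it gives an explicit bound with no $\eps$ loss. Take $\Phi(s)=(\alpha-1)s$ with $\alpha=\frac{(1-\ln r)(r-1)}{(2-r)\ln r}\le 0$.
\begin{itemize}
\item The per-step slack is $q-\alpha+\alpha s(1-1/q)-\frac{r}{\ln r}\ln q$.
\item For $q\ge 1$ this is nonincreasing in $s$. So on the feasible region $s\le\gamma-q$ it is at least $h(q)=(1-\alpha)q+\alpha\gamma-\alpha\gamma/q-\frac{r}{\ln r}\ln q$.
\item Since $\alpha\le 0$, $h$ is convex, and it satisfies $h(r)=h'(r)=0$, so $h\ge 0$.
\item Summing gives $\sum_k q_k\ge\frac{r}{\ln r}\ln x_n-|\alpha|\gamma$.
\end{itemize}
The endpoint $r=2$ then follows by monotonicity rather than from a potential. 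Since $\gamma(\cdot)$ is increasing on $[2,e]$, a $4$-dependable algorithm is $\gamma(r')$-dependable for every $r'>2$, and $r'/\ln r'\to 2/\ln 2$. Without these ingredients, or the paper's slack argument, the proposal is a plan rather than a proof.
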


The constructed probability distribution $\pi(\delta)$ is supported on thresholds $T$ from the range $[1, U]$ for a sufficiently large $U$, and we write $v_t$ for its density at $t$. Any $\gamma(r)$-dependable deterministic algorithm $\Det$ for such instances can be fully described by an increasing sequence of bids $x_1 < x_2 < \dots$, all from range~$[1, U]$; we also set $x_0 \define 1$. $\Det$ places bid $x_i$ if all its previous bids were unsuccessful, which happens with probability $\int_{x_{i-1}}^U v_t \, dt$, and the expected contribution of this bid to the solution quality is $x_i \cdot \int_{x_{i-1}}^U (v_t/t) \, dt$. To simplify the latter term, we choose $v_t = 1/(t\cdot \ln U)$, which allows us to eventually obtain 
\begin{equation} \label{eq:bid_cost}
    \E_{\sigma \sim \pi}[\ratio(\Det, \sigma)] = \frac{1}{\ln U} \cdot \sum_{i=1}^n \frac{x_i}{x_{i-1}} - O\left(\frac{1}{\ln U}\right).
\end{equation}

To lower-bound \eqref{eq:bid_cost}, we observe that after the $k$-th bid, the total cost of $\Det$ is $\sum_{i=1}^k x_i$, while the adversary can terminate the instance choosing a threshold $T$ arbitrarily close to $x_{k-1}$. Since the algorithm is $\gamma(r)$-dependable, this imposes the following constraints:
\[
    \sum_{i=1}^k  x_i \leq \gamma(r) \cdot x_{k-1} = \frac{r^2}{r-1} \cdot x_{k-1} \qquad \text{for every $k \in \set{1, \dots, n}$}.
\]
The technical core of this part is the \emph{step ratio lemma} stated below, which we use to provide a lower bound on \eqref{eq:bid_cost}. It shows that, up to lower-order terms, \eqref{eq:bid_cost} is minimized when all consecutive ratios $x_i / x_{i-1}$ equal~$r$. We prove the lemma in \cref{sec:step_ratio_proof}. Since we reuse it for line search, we state it in a generalized form; for online bidding, we set $c = 0$.

\begin{restatable}[Step ratios]{lemma}{stepratio}
    \label{lem:step_ratio}
    Let $1 = x_0, x_1, \dots, x_n$ be a sequence of positive reals. Fix $c \geq 0$ and let $\rho_c$ denote the unique solution of $\rho \cdot \ln \rho = c + \rho$. Fix a real $r \in [2,\rho_c]$ and $\eps \in (0, \frac{c+r}{\ln r})$. Let $S_k \define \sum_{i=1}^{k} x_i$, and assume that
    \[
        S_k \leq \frac{r^2}{r-1} \cdot x_{k-1} \qquad \text{for every $k \in \set{1, \dots, n}$}.
    \]
     There is a constant $F$, depending on $c$, $r$ and $\eps$ only, such that 
    \[
        c \cdot n + \sum_{i=1}^{n} \frac{x_i}{x_{i-1}} \geq \left( \frac{c+r}{\ln r} - \eps \right) \cdot \ln x_n - F .
    \]
\end{restatable}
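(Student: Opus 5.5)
The plan is to work with the step ratios $q_i \define x_i/x_{i-1}$ and the normalized prefix sums $y_k \define S_k/x_k$, with $y_0 \define 1$ by convention. Let $K \define \frac{c+r}{\ln r}$ and $K' \define K-\eps$. Since $\ln x_n = \sum_i \ln q_i$, it suffices to show
\[
\sum_{i=1}^n \bigl(c + q_i - K' \ln q_i\bigr) \geq -F .
\]
The dependability constraint becomes a constraint on the pair $(y_{k-1}, q_k)$. Since $S_k = S_{k-1} + x_k$, the condition $S_k \leq \gamma x_{k-1}$ with $\gamma = \frac{r^2}{r-1}$ reads
\[
q_k \leq \gamma - y_{k-1}, \qquad\text{and}\qquad y_k = 1 + \frac{y_{k-1}}{q_k}.
\]
In particular $y_{k-1} < \gamma$ always, and $y_k \geq 1$.

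So we have a one-dimensional controlled dynamical system. The per-step cost is $f(q) = c + q - K'\ln q$. The geometric sequence $q \equiv r$ is the fixed point $y = \frac{r}{r-1}$, at which the constraint is tight: $\gamma - y = r$. At that point $f(r) = c + r - K'\ln r = \eps \ln r > 0$.

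The danger is that $r \leq \rho_c$ gives $r\ln r \leq c+r$, i.e.\ $K \geq r$. Hence $f$ may be decreasing at $r$, and an algorithm could profit from a locally larger step. The point of the constraint is that a large step $q_k$ is only possible when $y_{k-1}$ is small, and after a step the new $y_k$ depends on how far back the budget was used. I would therefore look for a bounded potential $\Phi$ on the reachable range of $y$ satisfying the one-step inequality
\[
f(q) + \Phi\!\left(1 + \frac{y}{q}\right) - \Phi(y) \geq 0 \qquad \text{for all } y \text{ reachable and all } 0 < q \leq \gamma - y .
\]
Telescoping this inequality gives the claim with $F = \sup\Phi - \inf\Phi$.

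The natural first candidate is $\Phi(y) = -\alpha \ln(\gamma - y)$, possibly plus a linear term $\beta y$. Here $\ln(\gamma - y)$ is exactly the log of the maximum allowed next step, so this measures ``stored'' growth capacity. I would tune $\alpha$ (near $K'$) and $\beta$ so that, at the fixed point $y = \frac{r}{r-1}$, $q = r$, the inequality holds with equality to first order when $\eps = 0$. The minimization over $q$ at fixed $y$ is then a one-variable calculus problem, and the worst case is either an interior stationary point or the boundary $q = \gamma - y$.

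The boundedness of $\Phi$ is the delicate part. The value $\ln(\gamma - y)$ blows up as $y \to \gamma$, which corresponds to the algorithm making many tiny steps. I would handle this using the slack $\eps$. Tiny steps $q$ make $-K'\ln q$ very positive, so they pay for themselves. I would cap $\Phi$ by replacing it with $\min(\Phi, M)$ for a large constant $M = M(c,r,\eps)$ and verify that the inequality survives in the capped region because $f(q)$ is large there. The lower range $y \geq 1$ is harmless.

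The main obstacle is verifying the one-step inequality uniformly. Near the fixed point it should reduce to a second-order check, which will use $r \geq 2$ (the lower end of the range). Away from the fixed point one needs a global argument, probably a case split on whether $q_k$ uses the full budget $\gamma - y_{k-1}$. If the pure log potential fails, I would fall back to an amortization over maximal runs of tight steps: there the recursion $y_k = 1 + \frac{y_{k-1}}{\gamma - y_{k-1}}$ is explicit and converges to the fixed point, and one can bound the total gain of such a run directly.
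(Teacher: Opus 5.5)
Your reduction is sound: with $y_k = S_k/x_k$ and $f(q) = c + q - K'\ln q$, the lemma is exactly $\sum_{i \le n} f(q_i) \ge -F$. A bounded potential also exists in principle, since the value function of this control problem is one. But you never exhibit one. The one-step inequality, which is the entire content of the lemma, is left to ``tuning'' with an unspecified fallback. Worse, the tuning you describe is impossible at the endpoint $r = 2$. Put $G(y,q) \define f(q) + \Phi(1 + y/q) - \Phi(y)$ and move along tight steps $y = \frac{r}{r-1} - t$, $q = r + t$. For differentiable $\Phi$, the derivative of $G$ in $t$ at $t = 0$ is $f'(r) + \Phi'(\frac{r}{r-1}) \cdot \frac{r-2}{r-1}$. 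For $r = 2$ the potential drops out, and $f'(2) = 1 - K'/2 < 0$ once $\eps < K - 2$ (here $K = \frac{c+2}{\ln 2} > 2$). So $G \approx \eps \ln 2 - (K'/2 - 1)\, t$ at the reachable points $y = 2 - \frac{2}{k+1}$, $q = 2 + \frac{2}{k+1}$, obtained with all steps tight. First-order tightness is therefore impossible there, and any smooth $\Phi$ needs curvature of order $1/\eps$ near the fixed point. At $\eps = 0$ this tight sequence even drives the sum to $-\infty$ like $(2-K)\ln n$. A minor point: $y_0$ must be $0$, since $S_0 = 0$. With $y_0 = 1$, your first constraint $q_1 \le \gamma - 1$ is not implied by the hypothesis $x_1 \le \gamma$.

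The missing idea, which makes any potential unnecessary, is that the state can never be recharged. Every feasible step, not only a tight one, satisfies $y_k = 1 + y_{k-1}/q_k \ge 1 + \frac{y_{k-1}}{\gamma - y_{k-1}} = \varphi(y_{k-1})$, where $\varphi(t) = \gamma/(\gamma - t)$ is increasing. By induction, $y_k \ge \tau_k$, where $\tau_0 = 0$ and $\tau_{k+1} = \varphi(\tau_k)$. The $\tau_k$ increase to the smaller fixed point $\frac{r}{r-1}$ of $\varphi$; this is where $r \ge 2$ enters. Therefore $q_k \le \gamma - \tau_{k-1} \to r$, a cap that depends only on $k$ and not on the sequence. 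Now choose $\theta > 0$ with $\frac{c+q}{\ln q} \ge \frac{c+r}{\ln r} - \eps$ on $(1, r+\theta]$; this is possible because the function decreases on $(1, \rho_c]$ and is continuous. Choose $k_0$ with $\gamma - \tau_{k-1} \le r + \theta$ for all $k > k_0$. Then $f(q_k) \ge 0$ for every $k > k_0$, trivially so if $q_k \le 1$. Each of the first $k_0$ steps has $f(q_k) \ge -K'\ln\gamma$ because $q_k \le \gamma$, which gives $F = K' k_0 \ln\gamma$. Your fallback does mention the tight recursion converging to the fixed point, but only for runs of tight steps. The point is that this recursion bounds every trajectory from below.
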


\section{Line Search} \label{sec:line}

In this section we focus on presenting ideas for proving \cref{thm:res_line}; the complete proof is deferred to \cref{sec:line_proofs}. Let $r_0$ be defined as in the theorem, i.e., as the solution to $r = (1+r) / \ln r$. Throughout this section, we fix $r \in [2, r_0]$ and let $\gamma(r) \define 1 + 2 \cdot r^2/(r-1)$.

The upper bound of \cref{thm:res_line} is attained by the algorithm of \citet{KaReTa96} called \smart, parametrized by a number $r \in (1, +\infty)$. \citet{KaReTa96} show that $\smart_r$ is $(1 + (1 + r) /\ln r)$-competitive. We complement this result by showing that it is also $\gamma(r)$-dependable for $r \in [2, r_0]$, and thus achieves the tradeoff of \cref{thm:res_line}. The proof is deferred to \cref{sec:line_proofs}.

We obtain the lower bound from the min-max principle of \cref{lem:minmax} and the lemma below. Its proof is deferred to \cref{sec:line_proofs}, and we provide the main ideas here.

\begin{restatable}{lemma}{detline} \label{lem:det_line}
    Fix $\delta > 0$. There exists a probability distribution $\pi(\delta)$ over the inputs, such that every $\gamma(r)$-dependable algorithm $\Det$ satisfies $\E_{\sigma \sim \pi(\delta)}[\ratio(\Det, \sigma)] \geq 1 + \frac{1 + r}{\ln r} - \delta$.
\end{restatable}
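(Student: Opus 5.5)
We follow the online bidding lower bound (\cref{lem:det_bid}), now using \cref{lem:step_ratio} with $c = 1$. First, note that $\rho_1$, the solution of $\rho \ln \rho = 1 + \rho$, equals $r_0$ from \cref{thm:res_line}, so every $r \in [2, r_0]$ is admissible in \cref{lem:step_ratio}. The distribution $\pi(\delta)$ picks a side (left or right) uniformly at random and, independently, a distance $T \in [1, U]$ with density $v_t = 1/(t \ln U)$, for $U$ large depending on $\delta$.

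We first normalize the deterministic algorithm. A deterministic searcher on this distribution is, without loss of generality, a zig-zag strategy: it alternates sides and has turning points $x_1 < x_2 < \dots$, where $x_i$ is explored on side $i \bmod 2$. The standard exchange argument shows that non-alternating or non-increasing excursions can only increase cost. Conditioned on the side, the target at distance $T$ is found at the first $x_k \geq T$ on the correct side. Since the side is a fair coin, let $k$ be the first index with $x_k \ge T$. Then the target is found either at step $k$, at cost $2S_{k-1} + T$, or at step $k+1$, at cost $2S_k + T$. Taking the expectation over the side and over $T$ with density $1/(t\ln U)$, exactly as in \eqref{eq:bid_cost}, gives
\[
\E[\ratio] \geq 1 + \frac{1}{\ln U}\Bigl(\sum_i \frac{x_i}{x_{i-1}} + \sum_i \frac{x_{i-1}}{x_{i-1}}\Bigr) - O\Bigl(\frac1{\ln U}\Bigr).
\]
Here the two contributions $\sum_{j<k} 2x_j/T$, averaged, yield a term $\frac{1}{\ln U}\sum_i \bigl(x_i/x_{i-1} + 1\bigr)$, up to boundary terms. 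The $+1$ per step is the constant $c = 1$, and it reflects that the first visit to distance $x_{i-1}$ on the opposite side is already paid for. I need to check this bookkeeping carefully, in particular the indexing shift by one. I expect it to give $c \cdot n + \sum_i x_i/x_{i-1}$ over $\ln U$, up to $O(1/\ln U)$.

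Next, the dependability constraints. The adversary places the target just beyond $x_{k-1}$ on the side last explored at $x_{k-1}$. The searcher then completes the excursion to $x_k$ and returns before going to $x_{k+1}$ on that side. This gives cost $2S_k + x_{k-1}$ against $\Opt \approx x_{k-1}$. So $\gamma(r)$-dependability gives $2S_k + x_{k-1} \le (1 + 2r^2/(r-1))\,x_{k-1}$, i.e.\ $S_k \le \frac{r^2}{r-1} x_{k-1}$, which is exactly the hypothesis of \cref{lem:step_ratio}. The case $k = 1$ is handled by $x_0 = 1$ and $T \ge 1$.

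Finally, since $\pi$ is supported on $[1,U]$ and the algorithm must cover distance $U$ on both sides, $x_n \ge U$ for the relevant $n$. \cref{lem:step_ratio} then gives $n + \sum_i x_i/x_{i-1} \ge \bigl(\frac{1+r}{\ln r} - \eps\bigr)\ln U - F$. Dividing by $\ln U$ and choosing $\eps$ small and then $U$ large yields $\E[\ratio] \ge 1 + \frac{1+r}{\ln r} - \delta$. The main obstacle is the second step: deriving the exact expected-cost identity, with the right constant $c = 1$ and correct handling of the random side and the boundary terms at $x_0$ and $x_n$. It also requires justifying the reduction to alternating increasing turning points under dependability, since non-monotone strategies must be shown to be no better.
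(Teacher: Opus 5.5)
Your plan is the paper's plan. It uses the same distribution (uniform side, density $1/(2t\ln U)$ on each side) and the same constraint $S_k\le\frac{r^2}{r-1}x_{k-1}$, obtained from a target just beyond $x_{k-1}$. It then applies \cref{lem:step_ratio} with $c=1$; your remark $\rho_1=r_0$ is exactly why $r\in[2,r_0]$ is admissible. One step, however, does not hold as stated: the normalization to \emph{globally} increasing turning points $x_1<x_2<\cdots$.

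The exchange argument only gives alternation and monotonicity on each side separately ($x_1<x_3<\cdots$ and $x_2<x_4<\cdots$). An excursion that does not pass the previous one on its own side explores nothing, so it can be removed by merging its two neighbours. An excursion shorter than the preceding one on the \emph{other} side does explore new ground, so it cannot be removed. Concretely, take $r=3$ (so $\gamma=10$) and turning points $2,\,1.8,\,4,\,8,\,16,\ldots$. They are per-side increasing and satisfy $S_k\le\frac92 x_{k-1}$ for all $k$, hence they are $10$-dependable. Yet no globally increasing strategy is pointwise as cheap. To match cost $t$ for targets in $[1,2]$ on the first side and cost $5$ for the target at distance $1$ on the other side, the first excursion must go exactly to $2$. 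A second excursion longer than $2$ then makes targets just beyond $2$ on the first side cost more than $8+t$, instead of $7.6+t$. Your cost identity (``found at step $k$ or $k+1$, where $k$ is the first index with $x_k\ge T$'') depends on global monotonicity, so as written it does not cover all $\gamma(r)$-dependable algorithms.

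The fix is the paper's accounting, which also carries out the bookkeeping you left open. Keep only per-side monotonicity and charge each excursion separately. The round trip $2x_i$ is paid iff the target lies on $x_i$'s side beyond $x_i$, or on the other side beyond $x_{i-1}$. With density $1/(2t\ln U)$ per side, this contributes $\frac{1}{\ln U}\bigl(1+\frac{x_i}{x_{i-1}}-\frac{2x_i}{U}\bigr)$ to the expected ratio. So the $c=1$ term comes from the same-side event, not from the opposite side as you suggest.

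Your dependability constraint only uses that a target just beyond $x_{k-1}$ on its own side is first reached in excursion $k+1$. That holds under per-side monotonicity, so \cref{lem:step_ratio} applies verbatim. Truncate at the first $n$ with $x_n\ge U/e$ and forgive all later costs, as the paper does. This gives $\E[\ratio]\ge 1+\frac{1}{\ln U}\bigl(n+\sum_{i\le n}\frac{x_i}{x_{i-1}}\bigr)-\frac{2S_n}{U\ln U}$. The last term is $O(1/\ln U)$ because $S_n\le\frac{r^2}{r-1}x_{n-1}<\frac{r^2}{r-1}\cdot\frac{U}{e}$. The rest of your argument (choose $\eps$ first, then $U$) then goes through.
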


Similarly to the proof of \cref{lem:det_bid}, probability distribution $\pi(\delta)$ is defined only on values from the interval $[1, U]$ on both paths, and the probability density of the target $T$ at any $t \in [1,U]$ is $v_t = 1/(2t\cdot \ln U)$. Any deterministic $\gamma(r)$-dependable algorithm $\Det$ can be described by 
\begin{itemize}
    \item a sequence of numbers $x_1, x_2, \ldots$ from $[1,U]$ satisfying
        $x_1 < x_3 < x_5 < \ldots$ and $x_2 < x_4 < x_6 < \ldots$, and
    \item a starting direction.
\end{itemize}
We also set $x_0 \define 1$. $\Det$ first moves to points $x_1, x_2, x_3, \ldots$ on alternating paths, starting with $x_1$ in the chosen starting direction, until it finds the target.

If $\Det$ visits $k$ of these points and finds the target $T$ on its way to the $(k+1)$-st point, its total cost is $T + 2 \cdot \sum_{i=1}^k x_i$. The algorithm arrives at $x_i$ if and only if $T$ is on the same path and larger than~$x_i$, or on the other path and larger than $x_{i-1}$; the probability of this event is $\int_{x_{i-1}}^U v_t \, dt + \int_{x_i}^U v_t \, dt$. This implies a bound similar to \eqref{eq:bid_cost}, albeit having extra terms because of $T$ in the cost and the second integral.

Note that the cost of $\Det$ finding the target after visiting $k$ points is $T + 2 \cdot \sum_{i=1}^k x_i$, and the adversary can choose $T$ arbitrarily close to $x_{k-1}$. Since $\Det$ is $\gamma(r)$-dependable, it has to satisfy the following property on its partial cost: 
\[
    x_{k-1} + 2\cdot \sum_{i=1}^k  x_i \leq \gamma(r) \cdot x_{k-1} = \left(1 + \frac{2r^2}{r - 1}\right) \cdot x_{k-1} \qquad \text{for every $k \geq 1$}.
\]
This is equivalent to the assumption required in \cref{lem:step_ratio}, which we may use (this time with $c=1$) to prove \cref{lem:det_line}.
\section{Upper Bound for TCP Acknowledgment}
\label{sec:ub_tcp}

In this section, we show the main ideas behind the proof of \cref{thm:res_tcp}; the formal arguments are deferred to \cref{sec:app_tcpub}. To this end, we define an algorithm $\Rand_r$, parametrized by \mbox{$r \in (0, 1]$}. This algorithm is inspired by that of \citet{KaKeRa03}, where we modify the probability distribution to guarantee the dependability of $\gamma(r) = (1 + 1/r)$.

\subsection{Dependability}

We start by defining a family of deterministic algorithms $\{\Det_z\}_{z \in (0, 1]}$ introduced by \citet{KaKeRa03}. Fix $z \in (0, 1]$, and let $T_0 \define 0$ denote the very start of the execution. For $i \in \ints_{> 0}$, $\Det_z$ performs the $i$-th acknowledgment at time $T_i$ inductively defined as the first time for which there exists a time $\tau_i \in [T_{i-1}, T_i]$ such that 
\begin{equation} \label{eq:tcp:detzdef}
    P(T_{i-1}, \tau_i) \cdot (T_i - \tau_i) = z,
\end{equation}
where $P(T, T')$ denotes the number of packets that have arrived in time interval $(T, T']$ for any time points $T \leq T'$.

\begin{figure}
    \centering
\begin{tikzpicture}[scale=1]
    \draw[->] (0,0) -- (10.5,0) node[right] {\small{Time}};
    \draw[->] (0,0) -- (0,3);
    \node[rotate=90, anchor=west] at (-0.3,0.4) {\small{Packet arrivals}};

    \draw[thick, black, dashed] (0,0) -- (10, 3);

    \foreach \a/\b in {1.0/0, 3.0/0.6, 5.0/1.2, 7.0/1.8, 9/2.4} { \draw[draw=blue!50, fill=blue!20,preaction={pattern=north east lines, pattern color=blue!20}] ( \a, \b) rectangle ++(1.0, 0.3);}
      
    \draw[blue, thick] (0,0) -- (2,0) -- (2, 0.6) -- (4, 0.6) -- (4, 1.2) -- (6, 1.2) -- (6, 1.8) -- (8, 1.8) -- (8, 2.4) -- (10, 2.4) -- (10, 3) -- (10.2, 3);

    \draw[red, very thick, densely dotted] (0,-0.02) -- (1.5,-0.02) -- (1.5, 0.45) -- (4.2, 0.5) -- (4.2, 1.26) -- (4.6, 1.26) -- (4.6, 1.38) -- (5.6, 1.38) -- (5.6, 1.68) -- (7, 1.68) -- (7, 2.1) -- (7.7, 2.1) -- (7.7, 2.31) -- (8.6, 2.31) -- (8.6, 2.58) -- (10.1, 2.58);

    \draw[black, dash pattern=on 3pt off 6pt] (1,0) -- (1, 0) node[below] {$\tau_1$};
    \draw[black, dash pattern=on 3pt off 6pt] (2,0) -- (2, 0) node[below] {$T_1$};
    \draw[black, dash pattern=on 3pt off 6pt] (3,0.6) -- (3, 0) node[below] {$\tau_2$};
    \draw[black, dash pattern=on 3pt off 6pt] (4,0.6) -- (4, 0) node[below] {$T_2$};
    \draw[black, dash pattern=on 3pt off 6pt] (5,1.2) -- (5, 0) node[below] {$\tau_3$};
    \draw[black, dash pattern=on 3pt off 6pt] (6,1.2) -- (6, 0) node[below] {$T_3$};
    \draw[black, dash pattern=on 3pt off 6pt] (7,1.8) -- (7, 0) node[below] {$\tau_4$};
    \draw[black, dash pattern=on 3pt off 6pt] (8,1.8) -- (8, 0) node[below] {$T_4$};
    \draw[black, dash pattern=on 3pt off 6pt] (9,2.4) -- (9, 0) node[below] {$\tau_5$};
    \draw[black, dash pattern=on 3pt off 6pt] (10,2.4) -- (10, 0) node[below] {$T_5$};
\end{tikzpicture}
\caption{The latency and the acknowledgments of \Opt (red dotted line) and of $\Det_z$ (blue solid line). The sequence of arrivals $\sigma$ is depicted as the dashed diagonal function. The filled rectangles are the latencies that caused the acknowledgments of $\Det_z$. Each filled rectangle has area $z$.}
\label{fig:tcp_dependable}
\end{figure}

To provide intuitions behind $\Det_z$, \cref{fig:tcp_dependable} shows an illustration of the behaviors of $\Det_z$ and $\Opt$, adapted from \citet{KaKeRa03}. The black dashed line depicts the number of packets that have arrived so far, simplified to a straight line. The blue solid line and the red dotted line represent the numbers of packets acknowledged by $\Det_z$ and $\Opt$, respectively, where every vertical segment of these lines corresponds to an acknowledgment of the respective algorithm.

Note that the $i$-th acknowledgment of $\Det_z$ at time $T_i$ is performed only when the unacknowledged packets until time $\tau_i$ together incur latency $z$ in time interval $[\tau_i, T_i]$, as instructed by the trigger condition \eqref{eq:tcp:detzdef}. This latency cost is represented in \cref{fig:tcp_dependable} by a blue filled rectangle with area exactly~$z$.
Observe that these rectangles are disjoint, and the $i$-th rectangle touches the black dashed line at time $\tau_i$.

We show the dependability of $\Det_z$:

\begin{restatable}{lemma}{dependability} \label{lem:tcp_dependable}
    For every $z \in (0, 1]$, $\Det_z$ is $\gamma(z)$-dependable.
\end{restatable}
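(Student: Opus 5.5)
We will show that on every instance $\sigma$ the cost of $\Det_z$ is at most $(1+1/z)\cdot\Opt(\sigma)$. Let $T_1<T_2<\dots<T_m$ be the acknowledgment times of $\Det_z$, with witnesses $\tau_i\in[T_{i-1},T_i]$ from \eqref{eq:tcp:detzdef}. If the instance ends while some packets are still pending, the final acknowledgment is triggered when the trigger condition is met, so it costs $1$ plus at most $z$ latency. Alternatively, we adopt the convention that pending packets are acknowledged at the end; we handle this boundary case separately and it will contribute the same bound.

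\textbf{Accounting for $\Det_z$.} The cost of $\Det_z$ is $m$ acknowledgments plus its latency. We split the latency of phase $i$, i.e., the packets arriving in $(T_{i-1},T_i]$, into two parts. The first part, $P(T_{i-1},\tau_i)(T_i-\tau_i)=z$, is the latency in $[\tau_i,T_i]$ of the packets that arrived by $\tau_i$; this is the filled rectangle. The second part, $L_i$, is the latency of phase-$i$ packets during $[T_{i-1},\tau_i]$ together with the latency of packets arriving in $(\tau_i,T_i]$. Hence
\[
\Det_z(\sigma)=m(1+z)+\sum_i L_i .
\]

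\textbf{Lower-bounding \Opt by charging each phase.} We claim that in each phase $i$, \Opt pays at least $\min\{1, z+L_i\}$ on the packets of that phase, or more precisely on a charged region disjoint across phases. The idea is to compare with the minimality of $T_i$: because $T_i$ is the \emph{first} time the condition holds, for every $\tau\in[T_{i-1},T_i)$ and every $t<T_i$ we have $P(T_{i-1},\tau)(t-\tau)<z$. There are two cases.
\begin{itemize}
\item If \Opt sends no acknowledgment in $(T_{i-1},T_i)$, it pays all the latency of phase-$i$ packets up to $T_i$, which is $z+L_i$. This charges only latency of this phase, and the regions are disjoint.
\item If \Opt acknowledges in that interval, it pays $1$ there. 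Since the intervals $(T_{i-1},T_i)$ are disjoint, these acknowledgments are distinct.
\end{itemize}
We then need the phase ratio
\[
\frac{1+z+L_i}{\min\{1,z+L_i\}}\le 1+\frac1z .
\]
When $z+L_i\ge 1$ this is not immediate, since the ratio is $1+z+L_i$, which could be large. This is the main obstacle: we must bound $L_i$, or else charge \Opt more finely.

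\textbf{Fixing the obstacle.} The remedy is to let \Opt's latency be charged alongside its acknowledgments. If \Opt acknowledges at times $s_1<\dots<s_k$ inside $(T_{i-1},T_i)$, then between consecutive points of $\{T_{i-1},s_1,\dots,s_k,T_i\}$ \Opt incurs the latency that $\Det_z$ would on that subinterval. Applying first-trigger minimality to the subinterval after $s_k$, with $\tau=s_k$, shows that $\Det_z$'s extra latency there, beyond what \Opt pays after $s_k$, is less than $z$. More generally, minimality gives that on each subinterval the latency-rectangle not paid by \Opt is below $z$, whereas \Opt pays at least $1\ge z$ per acknowledgment. Summing these inequalities, we expect to get $\Det_z$'s phase cost $\le \Opt_{\text{phase}}+1+z\cdot(\#\text{\Opt acks})$-type bounds, which we then convert to the ratio $1+1/z$. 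This works out since $1+z\le (1+1/z)z$ when \Opt pays only the latency $z$, and since $1+z\le (1+1/z)\cdot 1$ when \Opt acknowledges. Verifying this summation, namely the geometric argument that latency \Opt avoids after its acknowledgment is bounded by rectangles of area $<z$ in the style of \cref{fig:tcp_dependable}, is the delicate step.
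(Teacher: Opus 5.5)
Your first charging step is sound, and it already gives the paper's first inequality $z\cdot m \le \Opt(\sigma)$: each phase is charged $\min\{1, z+L_i\} \ge z$ on disjoint parts of \Opt's cost. The gap is in the latency half. The inequality meant to overcome the obstacle is only announced (``we expect to get \dots-type bounds'', ``the delicate step''). The closing check you write down also never engages the obstacle. The checks $1+z \le (1+1/z)\cdot z$ and $1+z \le (1+1/z)\cdot 1$ only concern phases in which $\Det_z$ pays exactly $1+z$, i.e., $L_i = 0$, and that is the case that never needed repair. In addition, the claim that between consecutive points of $\{T_{i-1}, s_1, \dots, s_k, T_i\}$ \Opt incurs the latency $\Det_z$ would is false as stated. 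After acknowledging at $s_l$, \Opt stops paying for the packets that arrived in $(T_{i-1}, s_l]$, and this difference is exactly what has to be bounded.

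The missing step is short, and it is the paper's key claim. Fix an acknowledgment of \Opt at time $s \in (T_{i-1}, T_i)$. The packets it clears that $\Det_z$ still holds arrived in $(T_{i-1}, s]$ and are acknowledged by $\Det_z$ at $T_i$. So $\Det_z$ pays at most $P(T_{i-1}, s)\cdot(T_i - s)$ more latency on them than \Opt does. This is at most $z$ by minimality of $T_i$ applied with $\tau = s$; it is at most $z$, not strictly less, as you wrote (take $s = \tau_i$). Packets that \Opt acknowledges no earlier than $\Det_z$ contribute no excess. Let $W_z, W_{\Opt}$ be the total latencies and $n_{\Opt}$ the number of \Opt's acknowledgments. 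Summing over \Opt's acknowledgments gives $W_z(\sigma) \le W_{\Opt}(\sigma) + z\cdot n_{\Opt}(\sigma) \le \Opt(\sigma)$. With this in hand no per-phase ratio is needed: $\Det_z(\sigma) = m + W_z(\sigma) \le \Opt(\sigma)/z + \Opt(\sigma)$, which is how the paper concludes. Your per-phase route can also be closed. Let $O_i$ be \Opt's latency on the phase-$i$ packets plus its $k_i$ acknowledgments in $(T_{i-1}, T_i)$. The same argument shows that $\Det_z$ pays at most $1 + O_i$ in phase $i$, and your bullets give $O_i \ge z$, so the phase ratio is at most $1 + 1/z$. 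Finally, your boundary remark is inaccurate: the last phase has latency $z + L_m$, not at most $z$. No special convention is needed, since the last phase is handled like every other.
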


The proof is twofold. First, we show that $\Opt$ must incur at least cost $z$ between any two consecutive acknowledgments by $\Det_z$, implying that the total number of acknowledgments by $\Det_z$ is upper-bounded within a factor $1/z$ from the optimal cost. Second, we prove that, for any set of packets acknowledged at the same time by $\Opt$, the \emph{additional} latency cost incurred on them by $\Det_z$ cannot exceed $z$ by the definition of $\Det_z$. Hence, the total latency cost of $\Det_z$ is bounded by the total cost of the optimal solution. These bounds together imply the lemma.

\subsection{Competitive Algorithm}

Since $\Det_z$ is $\gamma(r)$-dependable for any $z \in [r, 1]$, any randomized algorithm supported by $\{\Det_z\}_{z \in [r, 1]}$ is also $\gamma(r)$-dependable. Hence, we aim to attain a probability distribution over $\{\Det_z\}_{z \in [r, 1]}$ minimizing the expected competitive ratio, resulting in the following algorithm.

\begin{quote}
    \emph{Algorithm $\Rand_r$: draw a value $z \in [r, 1]$ from a distribution that has
    \begin{itemize}
        \item a discrete mass of $q_r \define \frac{r \cdot e^r}{(r+1) \cdot e - e^r}$ at $z = r$, and
        \item a density $p(z) \define \frac{r+1}{(r+1) \cdot e-e^r} \cdot e^z$ for any $z \in (r, 1]$,
    \end{itemize}
    and run $\Det_z$.
    }
\end{quote}

\begin{restatable}{lemma}{tcpublem} \label{lem:tcp_ub}
    $\Rand_r$ has expected competitive ratio $1 + \frac{e^{r}}{(1+r) \cdot e - e^{r}}$.
\end{restatable}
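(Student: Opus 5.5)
The plan is to follow the charging analysis of \citet{KaKeRa03} for the deterministic algorithms $\Det_z$ and then average over the distribution of $z$ used by $\Rand_r$. The guiding intuition is that every Opt acknowledgment interval behaves like an independent ski-rental instance, and the density $p(z) \propto e^z$ together with the atom $q_r$ at $z=r$ is exactly the ski-rental distribution of \citet{DiILMV24}, truncated to $[r,1]$.

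\textbf{Step 1 (decomposition).} Fix an instance $\sigma$ and an optimal solution with acknowledgment times $O_1 < O_2 < \dots < O_m$. This splits time into Opt-phases $(O_{j-1}, O_j]$. Let $\ell_j$ be the latency Opt incurs on the packets of phase $j$, so that $\Opt(\sigma) = \sum_j (1 + \ell_j)$.

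\textbf{Step 2 (per-phase cost bound for fixed $z$).} I would bound the cost of $\Det_z$ attributable to phase $j$ by some $g_z(\ell_j)$. Two facts from the proof of \cref{lem:tcp_dependable} drive this.
\begin{itemize}
\item Acknowledgments of $\Det_z$ whose trigger rectangle (area exactly $z$) lies inside phase $j$ sit below Opt's latency region. Since these rectangles are disjoint, their number is at most $\ell_j/z$.
\item Any extra latency of $\Det_z$ on the packets of phase $j$, beyond what Opt pays, is at most $z$. This extra appears only when the phase is "crossed", i.e., when $\Det_z$ has not acknowledged by $O_j$.
\end{itemize}
Organizing this as in ski rental, I aim for a bound of the form: $\Det_z$ pays essentially $\ell_j + z$ (no additional acknowledgment) when $\ell_j < z$, and essentially $\ell_j + 1$ plus the rectangle-counted acknowledgments otherwise. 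The $\lfloor \ell_j / z\rfloor$-type terms must be handled so that in expectation they telescope into the renting/buying tradeoff.

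The cleanest route I would try first is the one used by \citet{KaKeRa03}. Their argument reduces the expected cost of a rectangle-triggered scheme to the statement that the $\Det_z$ cost per unit of "Opt cost mass" behaves like the continuous ski rental cost with buy price $1$ and renting time $\ell$.

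\textbf{Step 3 (averaging).} Writing the bound from Step 2 as $\E_z[g_z(\ell)] \le \rho\,(1+\ell)$ with $\rho = 1 + \frac{e^r}{(1+r)e - e^r}$, I would verify the inequality in three regimes.
\begin{itemize}
\item For $\ell \le r$, only the atom and the no-acknowledgment branch matter.
\item For $\ell \in (r, 1]$, the choice $p(z) \propto e^z$ solves the indifference ODE $p' = p$. This makes $\E_z[g_z(\ell)] - \rho(1+\ell)$ constant in $\ell$, and the atom $q_r$ is fixed so that this constant is $0$ at $\ell = r$.
\item For $\ell > 1$, the bound holds with slack, since the per-unit rate cannot exceed that at $\ell=1$.
\end{itemize}
Summing over phases yields $\E[\ratio(\Det_z,\sigma)] \le \rho$. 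Tightness follows from an instance with a single burst whose Opt latency is around $r$, matching the claimed equality.

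\textbf{Main obstacle.} The main obstacle is Step 2: obtaining a per-phase bound that is additive across phases. The difficulty is that the phases of $\Det_z$ and of Opt interleave. One rectangle of $\Det_z$ may straddle an Opt acknowledgment, and the latency of Opt in one phase may "pay for" a $\Det_z$ acknowledgment triggered by packets of an earlier phase.

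I would handle this by assigning each $\Det_z$ acknowledgment to the Opt-phase containing its $\tau_i$ rather than its $T_i$. Then each phase contributes at most one boundary-crossing acknowledgment plus at most $z$ extra latency, exactly as in the second part of the proof of \cref{lem:tcp_dependable}. If exact additivity fails, I would fall back on a potential/mass argument comparing the areas under Opt's and $\Det_z$'s latency curves.
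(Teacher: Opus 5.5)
\paragraph{Verdict.} There is a genuine gap, and it is in Step~2. That step is where the whole proof has to happen, and you leave it open. Worse, the route it is meant to feed cannot work.

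\paragraph{Why no per-phase bound can work.} Any bound $g_z(\ell_j)$ that sees a phase only through \Opt's latency $\ell_j$ must cover the following single-phase instance, tailored to one fixed $z<1$: Seiden-type bursts of rapidly growing size, each issued shortly after $\Det_z$ acknowledges the previous one. $\Det_z$ acknowledges every burst once it has accrued latency $z$, so it pays $k(1+z)$ on $k$ bursts. An optimal solution acknowledges once, at the last burst, and pays about $1+(k-1)z$. Hence, up to vanishing terms, $g_z(\ell)\ge(1+z)\,(1+\lfloor \ell/z\rfloor)$. Your crude accounting ($1+\lfloor\ell_j/z\rfloor$ acknowledgments, latency at most $\ell_j+z$, no savings) is therefore essentially attained for each single $z$. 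Consequently $\E_z[g_z(\ell)]/(1+\ell)$ tends to at least $1+\E[1/z]$ as $\ell\to\infty$.

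On the other hand, $\rho=1+\E[z]$, because $q_r\,r+\int_r^1 z\,p(z)\,dz=e^r/((1+r)e-e^r)$. Since $\E[1/z]>\E[z]$ for every $r<1$, the inequality $\E_z[g_z(\ell)]\le\rho\,(1+\ell)$ in Step~3 is false for large $\ell$. The claim that $\ell>1$ holds ``with slack'' is unfounded. The indifference ODE and the calibration of $q_r$ at $\ell=r$ are heuristics about a $g_z$ that is never defined.

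The missing idea is that one instance cannot be bad for all thresholds at once. The bound for $\Det_z$ must therefore be coupled with the behaviour of the other $\Det_w$ on the same instance, rather than being a worst case per phase and per $z$. As an aside, tightness is not part of the lemma. Its witness would be the single packet, where \Opt has latency $0$ and $\E_z[\Det_z]=1+\E[z]=\rho$, not an instance with \Opt-latency near $r$.

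\paragraph{What the paper does instead.} This coupling is exactly what the paper imports from \citet{KaKeRa03}, working globally without any phase decomposition. Let $n_z(\sigma)$ be the number of acknowledgments of $\Det_z$. The paper uses two inequalities:
\[
\Det_z(\sigma)\le\Opt(\sigma)+n_z(\sigma)-\int_z^1 n_w(\sigma)\,dw
\quad\text{and}\quad
\Opt(\sigma)\ge\int_0^1 n_w(\sigma)\,dw .
\]
The last term of the first inequality is the aggregated latency saving that your accounting lacks.

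Averaging the first inequality over $z$ and swapping the order of integration, the identity $\int_r^z p(w)\,dw=p(z)-p(r)$ collapses the continuous part to $p(r)\int_r^1 n_z\,dz$; this is where $p\propto e^z$ enters. Adding the atom and using $p(r)-q_r=q_r/r$ gives
\[
\E[\Rand_r(\sigma)]\le\Opt(\sigma)+q_r\,n_r(\sigma)+\frac{q_r}{r}\int_r^1 n_z(\sigma)\,dz .
\]

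The final ingredient, which has no counterpart in your plan, is that $n_z$ is non-increasing in $z$, so $r\,n_r\le\int_0^r n_z\,dz$. This charges the atom at the truncation point $r$ to thresholds below $r$ that $\Rand_r$ never uses. It yields
\[
\E[\Rand_r(\sigma)]\le\Opt(\sigma)+\frac{q_r}{r}\int_0^1 n_z(\sigma)\,dz\le\Bigl(1+\frac{q_r}{r}\Bigr)\,\Opt(\sigma),
\]
and $1+q_r/r$ is the claimed ratio. Your fallback ``potential/mass argument'' points in this direction, but as stated it does not supply either of the two inequalities.
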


Note that, together with \cref{lem:tcp_dependable}, this lemma implies \cref{thm:res_tcp}. To prove this lemma, we exploit the properties derived from \citet{KaKeRa03}.

\section{Lower Bound for TCP Acknowledgment: Proof Overview}
\label{sec:lb_tcp_overview}

In this section, we describe the main ideas behind the proof of \cref{thm:res_tcp_lb}; the full proof is given in \cref{sec:lb_tcp}. Throughout this section, we fix $r \in [(\sqrt5-1)/2, 1]$ and write $\gamma(r) \define 1 + 1/r$ and $c \define r^2 + r - 1 \in [0, 1]$.

We recall that the deterministic algorithm for the problem can be adaptive: it observes packets as they arrive and may adjust its future acknowledgments to them. Such an algorithm is thus a function which maps the choices of the adversary made so far to the delay of the next acknowledgment. Therefore, we analyze the consequences of $\gamma(r)$-dependability by playing \emph{multiple rounds} against the algorithm. In each round, the adversary waits for the algorithm to acknowledge, observes the latency at which this happens, and only then it decides on the next part of the input.

\paragraph{Structural consequences of dependability.}

In the first part of the proof (\cref{sec:latencies}), we study a deterministic algorithm \Det and identify properties which every $\gamma(r)$-dependable algorithm has to satisfy. Following \citet{Seiden00}, we run \Det on instances consisting of \emph{bursts} of packets, each much larger than all the preceding ones together. We call such instances \emph{flat}. The latency of a burst is the total latency accrued by its packets so far. The adversary issues burst $i+1$ when burst~$i$ reaches latency $y_i$, and we write $x_i$ for the latency at which \Det acknowledges burst~$i$. On flat instances, the latency accrued between two bursts is essentially caused by the last burst only, and the solution that acknowledges once, at the last burst, has cost close to $1 + \sum_i y_i$.

Right after \Det acknowledges burst $i$, we define its \emph{post-ack budget} 
$\Phi^+_i$ as $\gamma(r)$ times the cost of the solution that acknowledges once, at this very moment, minus the cost that \Det has paid so far. On flat instances, the former cost is close to $1 + \sum_{j < i} y_j + x_i$. The definition reflects a possible aggressive move of the adversary: it may issue burst $i+1$ right after \Det's acknowledgment of burst~$i$. Then, $\Phi^+_i$ is the amount that $\gamma(r)$-dependability still allows \Det to spend on the future bursts. As the instance may also end at this moment, it holds that $\Phi^+_i \geq 0$.

\paragraph{The penalizing suffix.}

The main ingredient of this part is a threat that the adversary may pose after any acknowledgment of \Det. The adversary may append a \emph{penalizing suffix}, i.e., a sequence of bursts, each issued immediately after \Det acknowledges the previous one. In each round $m$ of this suffix, \Det pays $1 + x_m$ for its acknowledgment, while the cost of the single-acknowledgment solution grows by $x_m$ only. Thus, roughly speaking, the post-ack budgets in consecutive rounds satisfy 
\begin{align*}
    \Phi^+_m 
    & \leq \Phi^+_{m-1} + \gamma(r) \cdot x_m - (1 + x_m) \\
    & = \Phi^+_{m-1} + (\gamma(r) - 1) \cdot x_m - 1 \\
    & \leq \gamma(r) \cdot (\Phi^+_{m-1} - 1),
\end{align*}
where the last inequality uses $x_m \leq \Phi^+_{m-1} - 1$, or equivalently, $1 + x_m \leq \Phi^+_{m-1}$, which is a~consequence of the $\gamma(r)$-dependability of \Det.

That is, the post-ack budget of $\Det$ evolves according to a mapping $z \mapsto \gamma(r) \cdot (z - 1)$, and $1+r = \gamma(r)/(\gamma(r)-1)$ is the fixed point of this mapping. Hence, if the post-ack budget of \Det drops below $1+r$, it remains 
below $1+r$, and moreover, we may show that the gap to the fixed point grows geometrically. This means that the post-ack budget of \Det eventually becomes negative, which contradicts the $\gamma(r)$-dependability of \Det. \cref{lem:test} formalizes this threat: after acknowledging any burst, a $\gamma(r)$-dependable algorithm must keep its post-ack budget at least $1+r$.

\paragraph{Constraints from all bursts.}

By expanding the definition of the post-ack budget, we turn \cref{lem:test} into a family of inequalities which relate the cost of \Det after burst $i$ to its cost after burst $i-1$ (cf.~\cref{lem:certificate}). In particular, for a single burst, $\Phi^+_1 = (\gamma(r)-1) \cdot (1 + x_1)$, and thus $1 + x_1 \geq (1+r)/(\gamma(r)-1) = r \cdot (1+r) = 1 + c$, i.e., the latency $x_1$ of the first burst is at least~$c$. Moreover, if the adversary issues a large second burst at latency $y > x_1$, then \Det pays at least $(1 + r) \cdot (1 + x_1 + r - y)$, up to an arbitrarily small error term.

\paragraph{Adversarial distribution.}

In the second part of the proof (\cref{sec:theorem}), we study the behavior of a deterministic algorithm \Det satisfying the properties above on a random instance. Our instances consist of one or two bursts. With probability $q$, the instance is a single packet; otherwise a second, large burst arrives at a random latency $y \in [c, r]$. We choose the distribution so that the expected competitive ratio of \Det does not depend on $x_1$, and the min-max principle of \cref{lem:minmax} yields \cref{thm:res_tcp_lb}. 

We note that \cref{lem:certificate} provides constraints for instances with any number of bursts, while our distribution uses only two of them. We believe that longer instances are a natural route towards closing the gap to \cref{thm:res_tcp}.

\section{Conclusions}
\label{sec:conclusions}

In this paper, we studied pure tail constraints in competitive analysis, i.e., the tradeoff between the dependability of a randomized online algorithm and its expected competitive ratio. 

For online bidding and line search, we determined the Pareto-optimal frontier of this tradeoff. In both problems, the optimal tradeoff is attained by known randomized algorithms, parametrized by the growth ratio $r$ of consecutive steps. Our main contributions here are the lower bounds: we showed that dependability essentially forces a deterministic algorithm to follow a geometric sequence with ratio $r$. 

For TCP acknowledgment, we gave an algorithm which achieves the tradeoff known for ski rental, and we complemented it with a lower bound that beats the randomized one of $e/(e-1)$ for dependability ratio smaller than $\gamma^\star \approx 2.22$. We do not know where the true frontier for TCP acknowledgment lies between the two curves of \cref{fig:tcp}, and closing this gap is the main problem left open by our work. Our lower bound uses instances with at most two bursts only, while dependability constrains an algorithm also on longer instances, and we believe this leaves room for a stronger bound.

\subsection*{AI Use Statement}

In this work, we used a generative AI tool (ChatGPT 5.6 Sol) for three tasks. 
\begin{itemize}
    \item First, we used it to produce the Python code that plots the dependability-competitiveness tradeoff curves in \cref{fig:bidding_line}. We verified the produced code manually. 
    \item Second, we used it to find the probability distribution over instances used in \cref{sec:theorem}, chosen so that the expected competitive ratio of an algorithm does not depend on the latency at which it acknowledges the first burst. We proved all properties of this distribution manually; the proofs are given in \cref{sec:theorem}. 
    \item Conversations with it were used as an inspiration for some parts of the proof of \cref{lem:tcp_dependable}. The entire proof was written and verified by hand.
\end{itemize}
We did not use generative AI tools for any other task with a required disclosure. We take full responsibility for the final content of this work.

\subsection*{Acknowledgments}

This work has been supported by Polish National Science Centre grants 2022/45/B/ST6/00559 and 2020/39/B/ST6/01641.

\bibliography{references,ref02}
\bibliographystyle{iclr2027_conference}

\appendix

\section{Related Work}
\label{sec:related}

For online bidding, folklore deterministic and randomized algorithms based on geometric scaling achieve competitive ratios of $4$ and $e$, respectively, which were later shown to be optimal by \citet{ChKeNY08}. For line search, \citet{BaCuRa93} proved the optimal deterministic competitive ratio of $9$, while \citet{KaReTa96} established the optimal randomized competitive ratio of approximately $4.591$. The standard deterministic algorithms for online bidding and line search share the same underlying doubling structure; see \citet{chrobak2006sigact} for an explicit connection. This relation, however, does not extend directly to randomized algorithms.

\citet{DoGoSc01} introduced the TCP acknowledgment problem and established the optimal deterministic competitive ratio of $2$. The optimal randomized competitive ratio is $e/(e-1)$: a lower bound was shown by \citet{Seiden00} and an algorithm was given by \citet{KaKeRa03}, who uncovered a connection to ski rental. TCP acknowledgment is the simplest problem in a hierarchy of online aggregation problems. Its two-level extension is the online joint replenishment problem, for which \citet{BuKLMS13} gave a $3$-competitive deterministic algorithm and proved a lower bound of approximately $2.64$; the lower bound was later improved to approximately $2.754$~\citep{BBCDNS15}. Both problems are special cases of the multi-level aggregation problem on trees of depth $D$ that admits $2D$-competitive deterministic algorithms~\citep{AhCKPZ26}, where TCP acknowledgment and joint replenishment correspond to $D = 1$ and $D = 2$, respectively.

Problems studied in this paper have also been studied in the context of learning-augmented online optimization~\citep{PuSvKu18,wei2020optimal,bamas2020primal,anand2021regression,im2023online,AnBiDS24,angelopoulos2024online,shin2025improved,angelopoulos2026learning,cabello2026searching}. In this framework, an online algorithm is given a prediction of unknown quality. A central goal is to characterize the tradeoff between \emph{consistency}, the performance under accurate predictions, and \emph{robustness}, the worst-case performance under arbitrary predictions. The consistency-robustness tradeoff resembles the tradeoff between expected competitiveness and dependability under pure tail constraints, but is of different nature. Consistency and robustness bound the performance of an algorithm under two different qualities of the prediction, while expected competitiveness and dependability constrain it on the same input. 

Finally, we note that while the tradeoffs related to tail risk are rather recent, the idea of bounding the performance of online algorithms with high probability rather than with just the expected value has been studied in literature for a long time. Examples of studied problems include, e.g., call control~\citep{LeMaPR01}, data management strategies~\citep{MaMeVW97}, and online matching~\citep{MihTro24}. Moreover, for some problems, whose algorithms use phase-based strategies and whose costs are potentially unbounded, it is possible to transform the guarantees on the competitive ratio that hold in expectation into high-probability ones \citep{KoKrKM22}. This applies in particular to paging and metrical task systems.

\section{Competitive Ratio and Min-max Principle}
\label{sec:competitive_ratio}

\paragraph{Strict and non-strict competitiveness.}

In the literature, the competitive ratio is sometimes defined in a weaker form, in which the cost of an online algorithm may exceed the required multiple of $\Opt(\sigma)$ by an additive constant which is independent of the instance. In this paper, both notions are \emph{strict}, i.e., we do not allow additive constants. We do this for simplicity only: our upper bounds hold without additive constants, and our lower bounds can be extended to the weaker definition. Namely, our lower bound constructions can be made arbitrarily expensive, which makes any additive term negligible. In online bidding and line search, it suffices to place the hidden value far away, and in TCP acknowledgment, to repeat the construction many times, far apart in time.

\paragraph{Min-max principle.}

The following lemma is a slightly extended and reworded min-max principle, which we use to obtain lower bounds on the competitive ratio of randomized algorithms. 

\minmax*

\begin{proof}
    Let \Rand be a $\rho^\star$-competitive randomized algorithm with $\supp(\Rand) \subseteq \gD^\star$; we will show that $\rho^\star \geq \rho$.
    Recall that \Rand is a probability distribution over deterministic algorithms, and by our assumptions, $\supp(\Rand) \subseteq \gD^\star \subseteq \gD$. Then,
    \begin{align*}
        \rho^\star
        & \geq \sup_\sigma \E_{\Det \sim \Rand}[\ratio(\Det, \sigma)]
            && \text{(by competitiveness of \Rand)}\\
        & \geq \E_{\sigma \sim \pi} \E_{\Det \sim \Rand} [\ratio(\Det, \sigma)] \\
        & = \E_{\Det \sim \Rand} \E_{\sigma \sim \pi}[\ratio(\Det, \sigma)] \\
        & \geq \inf_{\Det \in \supp(\Rand)} \E_{\sigma \sim \pi}[\ratio(\Det, \sigma)] \\
        & \geq \inf_{\Det \in \gD} \E_{\sigma \sim \pi}[\ratio(\Det, \sigma)]
            && \text{(by $\supp(\Rand) \subseteq \gD$)}\\
        & \geq \rho.
            && \text{(by the lemma assumption)}
            \qedhere
    \end{align*}
\end{proof}
\section{Proof of \texorpdfstring{\cref{lem:step_ratio}}{\ref*{lem:step_ratio}} (Step Ratio Lemma)}
\label{sec:step_ratio_proof}

\stepratio*

\paragraph{Notation.}

Throughout this part, we fix sequence $\{x_i\}_{i=0}^n$ and parameters $c$, $r$ and $\eps$ as in the assumptions of \cref{lem:step_ratio}. For the ease of notation, we define $B \define r^2/(r-1)$ and \emph{step ratios} $q_k \define x_k / x_{k-1}$ for $k \geq 1$. Note that $q_k > 0$, but, as the sequence need not be monotone, $q_k$ may be smaller than~$1$. Finally, for $q > 1$, we define 
\[
    g_c(q) \define \frac{c+q}{\ln q}. 
\]
Then, the assumption on $\eps$ reads $\eps < g_c(r)$. Proof of the lemma uses the following simple calculus claim about function $g_c$.

\begin{claim}
    \label{cla:efficiency}
    It holds that $\rho_c > 2$. The function $g_c$ is strictly decreasing on $(1,\rho_c]$ and strictly increasing on $[\rho_c,\infty)$.
\end{claim}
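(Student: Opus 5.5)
The plan is to analyze $g_c(q) = (c+q)/\ln q$ directly through its derivative on $(1,\infty)$. By the quotient rule,
\[
    g_c'(q) = \frac{\ln q - (c+q)/q}{(\ln q)^2} = \frac{q\ln q - q - c}{q\,(\ln q)^2}.
\]
For $q>1$ the denominator is positive, so the sign of $g_c'(q)$ equals the sign of $h(q) \define q\ln q - q - c$. Its derivative is $h'(q) = \ln q$, which is positive on $(1,\infty)$, so $h$ is strictly increasing there. Moreover, $h(1) = -1-c < 0$ and $h(q) \to \infty$ as $q \to \infty$.

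By continuity and strict monotonicity, $h$ has exactly one root in $(1,\infty)$. This root is precisely the unique solution $\rho_c$ of $\rho\ln\rho = c+\rho$, which also justifies the word ``unique'' in the statement of \cref{lem:step_ratio}. It follows that $h<0$ on $(1,\rho_c)$ and $h>0$ on $(\rho_c,\infty)$. Hence $g_c' < 0$ on $(1,\rho_c)$ and $g_c' > 0$ on $(\rho_c,\infty)$, with the only zero at the endpoint $\rho_c$. This gives that $g_c$ is strictly decreasing on $(1,\rho_c]$ and strictly increasing on $[\rho_c,\infty)$.

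For $\rho_c > 2$, we evaluate $h(2) = 2\ln 2 - 2 - c \leq 2\ln 2 - 2 < 0$, using $c \geq 0$ and $\ln 2 < 1$. Since $h$ is strictly increasing and vanishes at $\rho_c$, this forces $\rho_c > 2$.

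There is no real obstacle here. The only point requiring care is to note that $q(\ln q)^2 > 0$ on $(1,\infty)$, so the sign analysis of $g_c'$ reduces entirely to that of $h$.
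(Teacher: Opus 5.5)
Your proposal is correct and takes the same approach as the paper. The paper likewise reduces the sign of $g_c'$ to that of $h_c(q) = c + q - q\ln q$ (the negative of your $h$), uses its strict monotonicity and its limits to obtain the unique zero $\rho_c$, and deduces $\rho_c > 2$ from $h_c(2) > 0$.
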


\begin{proof}[Proof of \cref{cla:efficiency}]
    We have $g_c'(q) = (\ln q - (c+q)/q)/(\ln q)^2$, so the sign of $g_c'$ is opposite to the sign of $h_c(q) \define c + q - q \ln q$. By taking the derivative of $h_c$, we get that $h_c$ is strictly decreasing on $(1,\infty)$. Furthermore, $h_c(1) > 0$ and $\lim_{q \to \infty} h_c(q) = -\infty$. Hence, $h_c$ has a unique zero $\rho_c$ in~$(1,\infty)$. As $h_c$ is positive on $(1,\rho_c)$ and negative on $(\rho_c,\infty)$, the function $g_c$ is monotonic on both intervals as claimed. Finally, as $h_c(2) > 0$, it holds that $\rho_c > 2$.
\end{proof}

\paragraph{Per-step bounds.}

Suppose we would like to prove \cref{lem:step_ratio} by a straightforward induction (setting for simplicity $\eps = 0$ and $F = 0$). Then, the inductive step would add $c + q_k$ to the left hand side of the desired inequality and $g_c(r) \cdot \ln q_k$ to the right hand side. For $q_k = r$, the two terms are equal. For $q_k < r$, the left hand side is at least as large, as $g_c(r) > 0$ and $g_c$ is decreasing on $(1,\rho_c]$ by \cref{cla:efficiency}, so the inductive step would succeed also in this case. We may even allow $q_k$ slightly larger than~$r$, at the expense of replacing $g_c(r)$ by $g_c(r) - \eps$ (cf.~\cref{lem:per_step}).
    
Hence, the hard part is to analyze the case $q_k > r$. This is where the upper bound on $S_k$ from the lemma assumption comes into play: we show that it implies that, from some point on, the step ratios cannot exceed~$r$ by more than a small term. From this point on, we may apply \cref{lem:per_step}, and we bound the initial terms by additive constant $F$.

\begin{lemma}
    \label{lem:per_step}
    Let $\{z_k\}_{k \geq 1}$ be a sequence converging to $r$, and assume that $q_k \leq z_k$ for every $k \geq 1$. Then, there exists~$K$, depending on $c$, $r$, $\eps$ and the sequence $\{z_k\}$ only, such that
    \[
        c + q_k \geq (g_c(r) - \eps) \cdot \ln q_k \qquad \text{for every $k > K$}.
    \]
\end{lemma}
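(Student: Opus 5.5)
We want $c+q \ge (g_c(r)-\eps)\ln q$ for every $q \le z_k$ once $k$ is large. The approach is a case analysis on the value of $q_k$, using \cref{cla:efficiency} to handle $q_k$ up to $r$ and continuity to handle the small overshoot above $r$. Write $G \define g_c(r)-\eps$ and note that $G>0$, since by assumption $\eps < g_c(r)$.

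\textbf{Case $q_k \le 1$.} Here $\ln q_k \le 0$ while $c+q_k > 0$. Since $G>0$, the right-hand side is nonpositive and the inequality holds trivially.

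\textbf{Case $1 < q_k \le r$.} Because $r \le \rho_c$, \cref{cla:efficiency} says $g_c$ is decreasing on $(1,r]$, so $g_c(q_k) \ge g_c(r) > G$. Multiplying by $\ln q_k > 0$ gives
\[
c+q_k = g_c(q_k)\ln q_k \ge G\ln q_k .
\]
This case and the previous one need no condition on $k$.

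\textbf{Case $q_k > r$.} Here we use the hypothesis $q_k \le z_k$. Since $g_c$ is continuous at $r>1$ and $g_c(r) > G$, there is $\eta>0$ such that $g_c(q) > G$ for all $q \in [r, r+\eta]$. This $\eta$ depends only on $c$, $r$ and $\eps$. Because $z_k \to r$, there is $K$, depending only on $\eta$ and the sequence $\{z_k\}$, such that $z_k \le r+\eta$ for all $k>K$. For such $k$ we get $r < q_k \le r+\eta$, so $g_c(q_k) > G$, and multiplying by $\ln q_k>0$ gives the claim again.

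Choosing $\eta$ explicitly is the only nontrivial point. If one prefers to avoid the continuity argument, take $\eta$ with $g_c(r+\eta) \ge G$. On $[r,\rho_c]$ the function $g_c$ is decreasing, and on $[\rho_c,\infty)$ it is increasing, so $g_c \ge g_c(\rho_c)$ there. Hence any such interval works as long as $g_c$ stays above $G$ on it, which again follows from continuity near $r$. If $r=\rho_c$, the inequality $g_c(q)\ge g_c(r)$ holds for all $q>1$ and the case is immediate.
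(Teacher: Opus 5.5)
Your proof is correct and matches the paper's argument: monotonicity of $g_c$ on $(1,r]$ from \cref{cla:efficiency} combined with continuity at $r$ gives a margin $\theta$ (your $\eta$) with $g_c \geq g_c(r)-\eps$ on $(1,r+\theta]$, convergence of $z_k$ then yields $K$, and the case $q_k \leq 1$ follows from $\eps < g_c(r)$. Your final paragraph about avoiding continuity is superfluous, since it falls back on continuity anyway, and can be dropped.
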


\begin{proof}
    As $r \leq \rho_c$, \cref{cla:efficiency} implies that $g_c$ is decreasing on $(1,r]$, and thus $g_c(q) \geq g_c(r)$ for every $q \in (1,r]$. Moreover, $g_c$ is continuous at $r$. Hence, there exists small enough $\theta > 0$, such that
    \begin{equation}
        \label{eq:eff}
        g_c(q) \geq g_c(r) - \eps \qquad \text{for every $q \in (1, r+\theta]$} .
    \end{equation}
    As the sequence $\{z_k\}$ converges to $r$, we may fix a $K$, such that $z_k \leq r + \theta$ for every $k > K$.

    Now fix $k > K$; then $q_k \leq z_k \leq r + \theta$. If $q_k \leq 1$, then $\ln q_k \leq 0$, and thus $c + q_k > 0 \geq (g_c(r) - \eps) \cdot \ln q_k$, where the latter inequality follows by $\eps < g_c(r)$. Otherwise $q_k \in (1, r+\theta]$, and then $c + q_k = g_c(q_k) \cdot \ln q_k \geq (g_c(r) - \eps) \cdot \ln q_k$ by the definition of $g_c$ and \eqref{eq:eff}.
\end{proof}

\paragraph{Recurrence.}

The following function $\varphi(t) \define B/(B-t)$ for $t \in [0,B)$ and the recurrence
\begin{equation}
    \label{eq:tau_recursion}
    \tau_0 \define 0, \qquad \tau_{k+1} \define \varphi(\tau_k) \quad \text{ for $k \geq 0$}
\end{equation}
will be later used to track the evolution of the sequence $\{ S_k/x_k \}_{k \geq 0}$. 

\begin{claim}
    \label{cla:recursion}
    The function $\varphi$ is increasing and the sequence $\{ \tau_k \}_{k \geq 0}$ converges to $\frac{B}{r} = \frac{r}{r-1}$.
\end{claim}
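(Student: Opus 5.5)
Monotonicity of $\varphi$ is immediate: on $[0,B)$ the denominator $B-t$ is positive and strictly decreasing in $t$, so $\varphi(t)=B/(B-t)$ is positive and strictly increasing (equivalently, $\varphi'(t)=B/(B-t)^2>0$).

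For convergence, I will first locate the fixed points of $\varphi$. The equation $t = B/(B-t)$ is equivalent to $t^2 - Bt + B = 0$, whose roots are $t_\pm = \bigl(B \pm \sqrt{B^2-4B}\bigr)/2$. Substituting $B = r^2/(r-1)$ gives $B^2 - 4B = B\,(B-4) = \frac{r^2}{r-1}\cdot\frac{(r-2)^2}{r-1}$, so $\sqrt{B^2-4B} = r(r-2)/(r-1)$; here $r \geq 2$ ensures the discriminant is a perfect square and the sign is as written. Hence
\[
    t_- = \frac{r^2 - r(r-2)}{2(r-1)} = \frac{r}{r-1} = \frac{B}{r}, \qquad t_+ = \frac{r^2 + r(r-2)}{2(r-1)} = r ,
\]
and $t_- \leq t_+ < B$ since $r < r^2/(r-1)$.

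Next I will show by induction that $0 \leq \tau_k \leq \tau_{k+1} \leq t_-$ for all $k$. The base case is $\tau_0 = 0 < 1 = \tau_1 \leq t_-$, using $r/(r-1) \geq 1$. For the step, assume $\tau_k \leq \tau_{k+1} \leq t_-$. All these values lie in $[0,B)$, so applying the increasing map $\varphi$ preserves the order, and $\varphi(t_-) = t_-$. This gives $\tau_{k+1} \leq \tau_{k+2} \leq t_-$, and in particular the recurrence is always well defined.

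The sequence is therefore monotone and bounded, so it converges to some $L \in [0, t_-]$. Since $\varphi$ is continuous on $[0,t_-]$, the limit satisfies $L = \varphi(L)$, so $L$ is a fixed point. The only fixed point in $[0,t_-]$ is $t_-$, hence $L = t_- = B/r = r/(r-1)$.

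The only delicate point is the degenerate case $r = 2$, where $t_- = t_+ = 2$; the argument above still applies unchanged, since convergence then goes to the double root. The main thing to verify carefully is the factorization $B^2 - 4B = r^2(r-2)^2/(r-1)^2$, which identifies the smaller root as exactly $r/(r-1)$.
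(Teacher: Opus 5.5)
Your proof is correct and follows the paper's argument: you identify the fixed points $B/r = r/(r-1)$ and $r$ of $\varphi$, show that $\{\tau_k\}$ increases and stays bounded by $B/r$ because $\varphi$ is increasing, and conclude that the limit is the only fixed point in $[0, B/r]$. The differences are cosmetic. You derive the roots from the discriminant, where the paper simply states them. You obtain monotonicity of the sequence by inducting on $\tau_k \leq \tau_{k+1}$, where the paper uses $\varphi(t) > t$ on $[0, B/r)$.
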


\begin{proof}
    Since the derivative of $\varphi$ is $\varphi'(t) = B/(B-t)^2 > 0$, the function $\varphi$ is increasing. 
    
    The fixed points of $\varphi$ satisfy $B/(B-t) = t$, or equivalently, $t^2 - B \cdot t + B = 0$. That is, the fixed points are $B/r$ and $r$, where $B/r$ is the smaller one of them, as $B/r = r/(r-1) \leq r$.

    For $t \in [0, B/r)$ we have $t^2 - B \cdot t + B > 0$, or equivalently $\varphi(t) = B/(B-t) > t$. Thus, the sequence $\{\tau_k\}_{k \geq 0}$ is strictly increasing as long as it stays below $B/r$. It indeed stays there, as $\varphi$ is increasing and $\varphi(B/r) = B/r$. Thus, it converges to some fixed point $t^* \leq B/r$, and the only fixed point with this property is $B/r$. 
\end{proof}

\begin{proof}[Proof of \cref{lem:step_ratio}]
    We start with applying the recurrence above to analyze how the sequence $\{ S_k/x_k \}_{k \geq 0}$ evolves. Namely, by induction on $k$, we show that 
    \begin{equation}
        \label{eq:S_k_over_x_k}
        S_k/x_k \geq \tau_k \qquad \text{for every $k \geq 0$}.
    \end{equation}
    For $k = 0$, the inequality \eqref{eq:S_k_over_x_k} holds as $S_0 / x_0 = 0 = \tau_0$. For the induction step, fix $k \geq 1$ and suppose that \eqref{eq:S_k_over_x_k} holds for $k-1$. Then, 
    \begin{align*}
        \frac{S_k}{x_k} = 1 + \frac{S_{k-1}}{x_k}
        & \geq 1 + \frac{S_{k-1}}{B \cdot x_{k-1} - S_{k-1}}
            && \text{(by $x_k + S_{k-1} = S_k \leq B \cdot x_{k-1}$)} \\
        & = \frac{B \cdot x_{k-1}}{B \cdot x_{k-1} - S_{k-1}} = \varphi \left( \frac{S_{k-1}}{x_{k-1}} \right) \\
        & \geq \varphi(\tau_{k-1}) = \tau_{k},
            && \text{(as $\varphi$ is increasing by \cref{cla:recursion})} 
    \end{align*}
    which concludes the induction step and proves \eqref{eq:S_k_over_x_k}.

    Next, we bound the step ratios. We set $z_k \define B - \tau_{k-1}$ for $k \geq 1$. By the lemma assumption, $B \geq S_k/x_{k-1} = (x_k + S_{k-1})/x_{k-1} = q_k + S_{k-1}/x_{k-1}$, so by reorganizing the terms and applying~\eqref{eq:S_k_over_x_k}, we obtain
    \begin{equation}
        \label{eq:cap}
        q_k \leq B - \frac{S_{k-1}}{x_{k-1}} \leq B - \tau_{k-1} = z_k \qquad \text{for every $k \geq 1$} .
    \end{equation}
    By \cref{cla:recursion}, the sequence $\{z_k\}_{k \geq 1}$ converges to $B - B/r = r$. Thus, \cref{lem:per_step} applied to $\{z_k\}$ yields $K$, such that
    \begin{equation}
        \label{eq:per_step}
        c + q_k \geq (g_c(r) - \eps) \cdot \ln q_k \qquad \text{for every $k > K$}.
    \end{equation}
    Moreover, by \eqref{eq:cap}, $q_i \leq B$ for every $i$, and thus for any $k$ it holds that 
    \begin{equation}
        \label{eq:per_step_2}
        \ln x_k = \sum_{i=1}^{k} \ln q_i \leq k \cdot \ln B.
    \end{equation}
    Next, we set $F \define (g_c(r) - \eps) \cdot K \cdot \ln B$, and we consider two cases. If $n \leq K$, then $\ln x_n \leq K \cdot \ln B$. In such case, the right hand side of the inequality from the lemma is $(g_c(r) - \eps) \cdot \ln x_n - F \leq (g_c(r) - \eps) \cdot K \cdot \ln B - F = 0$, and the lemma follows. Thus, we may assume that $n > K$, and then
    \begin{align*}
        c \cdot n + \sum_{k=1}^{n} q_k
        & \geq \sum_{k=K+1}^{n} (c + q_k) \\
        & \geq (g_c(r) - \eps) \cdot \sum_{k=K+1}^{n} \ln q_k
            && \text{(by \eqref{eq:per_step})} \\
        & = (g_c(r) - \eps) \cdot (\ln x_n - \ln x_K) \\
        & \geq (g_c(r) - \eps) \cdot (\ln x_n - K \cdot \ln B)
            && \text{(by \eqref{eq:per_step_2})} \\
        & = (g_c(r) - \eps) \cdot \ln x_n - F . 
            && \qedhere
    \end{align*}
\end{proof}

\section{Proofs for \texorpdfstring{\cref{sec:bid}}{\ref*{sec:bid}} (Online Bidding)}
\label{sec:bid_proofs}

Throughout this section, we fix $r \in [2, e]$ and let $\gamma(r) \define r^2/(r-1)$.

\bidub*

\begin{proof}
    Let $T \geq 1$ be the threshold, $\tau = \log_r(T)$, and let $b$ be the largest bid paid by the algorithm, defined by $b/r < T \leq b$. The value $\log_r(b)$ is distributed uniformly in $[\tau, \tau+1)$. Thus, we may view the algorithm as producing bids ending at $r^{\zeta + \tau}$ for $\zeta \define \log_r(b) - \tau$ uniform in $[0,1)$, each bid equal to $r$ times the preceding one. By bounding this finite sequence by the corresponding infinite one, we obtain, for any fixed value of $\zeta$,
    \begin{equation}
    \label{eq:bid_r_cost}
        \ratio(\Bid_r(\xi), T) 
        \leq \frac{1}{T} \cdot \sum_{i=0}^{\infty} r^{\zeta + \tau - i} 
        =r^{\zeta} \cdot \frac{r^{\tau}}{T} \cdot \sum_{i=0}^{\infty} r^{-i} 
        =r^{\zeta} \cdot \frac{r}{r-1}. 
    \end{equation}
    To bound the expected competitive ratio of $\Bid_r$, we first compute the expectation of $r^\zeta$ as $\E_{\zeta \sim [0,1)}[r^{\zeta}] = \int_0^1 r^z \, dz = (r-1)/\ln r$. Thus,
    \[
        \E_{\xi \sim [0,1)}[\ratio(\Bid_r(\xi), T)] 
        \leq \frac{r}{r-1} \cdot \E_{\zeta \sim [0,1)}\left[r^{\zeta}\right] 
        = \frac{r}{\ln r}.
    \]
    Next, we analyze the dependability of $\Bid_r$. By the definition of $\zeta$, we have $r^\zeta < r$. Thus, by \eqref{eq:bid_r_cost}, it holds that $\ratio(\Bid_r(\xi), T) < r \cdot r/(r-1) = \gamma(r)$ regardless of the choice of $\xi$.
\end{proof}

\detbid*

\begin{proof}
    We assume that $\delta < 4r/\ln r$, as otherwise the lemma holds trivially. Let $F(\eps)$ be the parameter (depending on $\eps$) resulting from \cref{lem:step_ratio}, when it is applied with $c = 0$ and $\eps$. We choose a real number $U > 1$ large enough so that 
    \begin{equation}
        \label{eq:choice_of_U}
        \frac{1}{\ln U} \leq \frac{\delta}{4} \cdot \min \left\{ 
            \frac{\ln r}{r},\;
            \frac{1}{F(\delta/4)},\;
            \frac{e \cdot (r-1)}{r^2}
        \right\} .
    \end{equation}
    Distribution $\pi(\delta)$ is then defined as follows: for any $t \in [1, U]$, the density of the threshold at $t$ is $v_t = 1/(t \cdot \ln U)$. Since the algorithm knows $\pi(\delta)$, we can assume without loss of generality that it never bids a number outside the range $[1, U]$.

    Fix a deterministic $\gamma(r)$-dependable algorithm $\Det$. Since $\Det$ is deterministic, bids of $\Det$ are a~sequence of numbers $x_1 < x_2 < \ldots$. To simplify notation, we set $x_0 = 1$. Define $n$ as an index of the bid satisfying
    \[
        x_{n-1} < U / e \leq x_{n}. 
    \]
    Such an index must exist, as otherwise $\Det$ would not be competitive for thresholds greater than~$U/e$. We charge $\Det$ only for the bids $x_1, \dots, x_n$, and we ignore its cost of any subsequent bids. 
    Next, let $S_k \define \sum_{i=1}^{k} x_i$ denote the total cost \Det pays on the first $k$ bids. 
    This gives the following bound
    \begin{align}
        \E_{\sigma \sim \pi(\delta)}[\ratio(\Det, \sigma)] 
        \nonumber
        &\geq \sum_{i=1}^n \left(\int_{x_{i-1}}^{x_i} \frac{S_i}{t} \cdot v_t \, dt \right) + \int_{x_n}^{U} \frac{S_n}{t} \cdot v_t \, dt\\
        \nonumber
        &=\sum_{i=1}^n \left(\int_{x_{i-1}}^{x_i} \frac{\sum_{j=1}^i x_j}{t^2 \cdot\ln U} \, dt\right) + \int_{x_n}^{U} \frac{\sum_{j=1}^n x_j}{t^2 \cdot \ln U} \, dt\\
        \nonumber
        &=\frac{1}{\ln U} \cdot \sum_{i=1}^n  x_i\int_{x_{i-1}}^{U} \frac{1}{t^2} \, dt\\
        \nonumber
        &=\frac{1}{\ln U} \cdot \sum_{i=1}^n  x_i \left( \frac{1}{x_{i-1}} - \frac{1}{U}\right)\\
        \label{eq:det_bid_cost}
        &=\frac{1}{\ln U} \cdot \sum_{i=1}^n  \frac{x_i} {x_{i-1}} - \frac{S_n}{U \ln U} .
    \end{align}

    As \Det is $\gamma(r)$-dependable, its cost on the first $k$ bids, $S_k$, cannot exceed $\gamma(r)$ times the cost of \Opt, which can be arbitrarily close to $x_{k-1}$. Hence 
    \begin{align}
        \label{eq:prefix_sum_bound}
        S_k 
        & \leq \gamma(r) \cdot x_{k-1} = \frac{r^2}{r-1} \cdot x_{k-1} \qquad \text{for every $k \geq 1$},
    \intertext{and thus also for $k = n$,}
        \label{eq:prefix_sum_bound_n}
        S_n 
        & \leq \frac{r^2}{r-1} \cdot x_{n-1} \leq \frac{r^2}{r-1} \cdot \frac{U}{e}.
    \end{align}
    By \eqref{eq:prefix_sum_bound}, we may apply \cref{lem:step_ratio} with $c = 0$ and $\eps = \delta/4$ obtaining
    \begin{equation}
        \label{eq:step_ratio_bound_bid}
        \sum_{i=1}^{n} \frac{x_i}{x_{i-1}} \geq \left( \frac{r}{\ln r} - \frac{\delta}{4} \right) \cdot \ln x_n - F(\delta/4),
    \end{equation}
    which plugged into \eqref{eq:det_bid_cost} gives 
    \begin{align*}
        \E_{\sigma \sim \pi(\delta)}[\ratio(\Det, \sigma)] 
        &\geq \frac{\ln x_n}{\ln U} \cdot \left( \frac{r}{\ln r} - \frac{\delta}{4} \right) 
        - \frac{F(\delta/4)}{\ln U} - \frac{S_n}{U \ln U} \\
        & \geq \left(1 - \frac{1}{\ln U} \right) \cdot \left( \frac{r}{\ln r} - \frac{\delta}{4} \right) 
        - \frac{F(\delta/4)}{\ln U} - \frac{r^2}{e \cdot (r-1) \cdot \ln U} \\
        & \geq \frac{r}{\ln r} - \frac{\delta}{4} 
        - \frac{r}{\ln r \cdot \ln U} - \frac{F(\delta/4)}{\ln U} - \frac{r^2}{e \cdot (r-1) \cdot \ln U} \\
        & \geq \frac{r}{\ln r} - \delta . 
            \qedhere
    \end{align*}
\end{proof}

We can now prove \cref{thm:res_bidding}, restated below.

\thmbid*

\begin{proof}
    The upper bound is \cref{lem:bid_ub}. By combining \cref{lem:det_bid} with the min-max principle (\cref{lem:minmax}), we obtain that no randomized $\gamma(r)$-dependable algorithm can achieve expected competitive ratio better than $r/\ln r - \delta$. As $\delta$ can be arbitrarily small, the theorem follows.
\end{proof}

\section{Proofs for \texorpdfstring{\cref{sec:line}}{\ref*{sec:line}} (Line Search)} 
\label{sec:line_proofs}

Let $r_0$ be the solution of $r_0 = (1 + r_0)/\ln r_0$. Throughout this section, fix $r \in [2, r_0]$ and let $\gamma(r) \define 1 + (2r^2)/(r-1)$.

The upper bound of \cref{thm:res_line} is attained by the following algorithm called \smart, defined in \citet{KaReTa96}. It is parametrized by a number $r \in (1, +\infty)$.

Algorithm $\smart_r$: first, choose a value $\xi \in [0,1)$ uniformly at random and a random starting direction. Then move to points $r^{\xi}, r^{\xi+1}, r^{\xi+2}, \ldots$ in the alternating directions, until the target is found.

\begin{lemma}[{\citealp[Theorem 3.1]{KaReTa96}}] \label{lem:exp_smart}
    Fix $r > 1$. Algorithm $\smart_r$ has expected competitive ratio $1 + \frac{1 + r}{\ln r}$.
\end{lemma}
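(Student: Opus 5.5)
The plan is to compute the expected ratio of $\smart_r$ directly for a fixed target, in the same way as the proof of \cref{lem:bid_ub}. The only new ingredient is accounting for which path the turning points lie on. Fix a target at distance $T \geq 1$ on one of the paths and set $\tau \define \log_r T$. The turning points of $\smart_r$ are $r^{\xi+i}$ for $i = 0,1,2,\dots$, and consecutive points lie on alternating paths.

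Let $b = r^{\xi+k}$ be the first turning point with $b \geq T$. As in \cref{lem:bid_ub}, $\zeta \define \log_r b - \tau$ is uniformly distributed on $[0,1)$. The starting direction is chosen independently of $\xi$, so whether $b$ lies on the target's path is a fair coin independent of $\zeta$. This independence is the step I would verify most carefully. The parity of $k$ is a function of $\xi$, and the starting direction is uniform independently of it, so the side of $b$ is uniform conditionally on every value of $\xi$, hence on every value of $\zeta$.

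I would then split into the two cases.
\begin{itemize}
\item If $b$ is on the target's path, every earlier excursion (to points $r^{\xi+i} < T$, $i<k$) fails: each earlier point on the target's side is short of $T$, and the others are on the wrong path. The searcher therefore pays $2\sum_{i<k} r^{\xi+i}$ and then walks distance $T$. Bounding the finite geometric sum by the infinite one, exactly as in \eqref{eq:bid_r_cost}, the cost is at most $T + 2b/(r-1)$.
\item If $b$ is on the other path, the excursion to $b$ also fails. The next excursion reaches $r b \geq T$ on the target's path and succeeds. The cost is at most $T + 2b + 2b/(r-1) = T + 2br/(r-1)$.
\end{itemize}
If $k=0$, the sum is empty and the same bounds hold.

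Averaging the two cases with probability $1/2$ each, dividing by $\Opt = T$, and using $b/T = r^\zeta$ gives
\[
\E[\ratio] \leq 1 + r^{\zeta}\cdot\frac{1 + r}{r-1} \quad\text{in expectation over } \zeta,
\]
that is, $\E[\ratio] \leq 1 + \frac{1+r}{r-1}\,\E_{\zeta}[r^\zeta]$. With $\E[r^\zeta] = (r-1)/\ln r$, computed as in \cref{lem:bid_ub}, this yields $1 + (1+r)/\ln r$.

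The conditional-uniformity claim above is the main point that needs care. The remaining steps are routine geometric-series bounds. The upper bound is all that is needed for \cref{thm:res_line}, since the matching lower bound comes from \cref{lem:det_line}.
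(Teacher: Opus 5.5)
Your proof is correct. The paper itself does not prove this lemma; it imports it as Theorem 3.1 of \citet{KaReTa96}. Your route is a self-contained derivation built from pieces already in the paper:
\begin{itemize}
\item the uniformity of $\zeta$ and the identity $\E[r^\zeta]=(r-1)/\ln r$ from the proof of \cref{lem:bid_ub};
\item the two-case analysis from the proof of \cref{lem:line_dep}, i.e., whether the first turning point beyond $T$ lies on the target's path;
\item the independence of the direction coin from $\xi$, which you handle correctly.
\end{itemize}
What your version buys is a self-contained \cref{sec:line}. It also makes explicit that dependability and expected competitiveness come from the same per-$\zeta$ case analysis. Dependability takes the worse branch, $1+2r\cdot r^\zeta/(r-1)$ with $\zeta\to 1$, giving $\gamma(r)$. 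The expected ratio instead averages the branches to $1+(1+r)\,r^\zeta/(r-1)$ and then averages over $\zeta$. This is in exact parallel with online bidding. The citation simply gives the result without work.

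One inaccuracy is in your closing remark. \cref{lem:det_line} supplies a matching lower bound only for $r\in[2,r_0]$, since it relies on \cref{lem:step_ratio}. It says nothing for other $r>1$; for $r>r_0$ its conclusion is even false for general $\gamma(r)$-dependable algorithms. You prove only the upper bound. That is all the paper uses, and it reads ``has expected competitive ratio'' the same way in \cref{lem:tcp_ub}.

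If you want exactness for every $r>1$, keep the exact geometric sums in your two cases instead of bounding them by infinite ones. The averaged cost is then $T+\bigl((1+r)b-2r^{\xi}\bigr)/(r-1)$. Since $\E[r^\xi]=\E[r^\zeta]=(r-1)/\ln r$, the expected ratio for a target at distance $T$ is exactly $1+\frac{1+r}{\ln r}-\frac{2}{T\ln r}$. Its supremum over $T$ is the claimed value.
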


This competitive ratio is minimized when $r = (1 + r) /\ln r$, which holds for value $r_0 \approx 3.59112$. Then the expected ratio is $1 + r_0 \approx 4.59112$.

\begin{restatable}{lemma}{linedep} \label{lem:line_dep}
    Fix $r \in [2, r_0]$. Algorithm $\smart_r$ is $(1 + \frac{2r^2}{r-1})$-dependable.
\end{restatable}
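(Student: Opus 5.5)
Fix $\xi \in [0,1)$ and a starting direction; this determines a deterministic algorithm that visits the turning points $x_i = r^{\xi + i - 1}$ for $i \geq 1$ on alternating paths. Fix a target at distance $T \geq 1$ on one of the paths. Let $k$ be the number of turning points the algorithm completes before reaching the target. The cost is then $T + 2 \sum_{i=1}^{k} x_i$, so the ratio is $1 + \frac{2}{T}\sum_{i=1}^k x_i$. It therefore suffices to show that $\sum_{i=1}^k x_i \leq \frac{r^2}{r-1}\, T$ in every case, possibly with strict inequality.

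The key structural fact is what the failure to find the target at the earlier turning points tells us. The algorithm turns at $x_k$ without having found the target, and it also turned at $x_{k-1}$, which lies on the target's path. If $k \geq 2$, then during the excursion to $x_{k-1}$ the algorithm covered $[0, x_{k-1}]$ on that path without finding the target. Hence $T > x_{k-1}$, and in fact the target was not within any earlier excursion on its own path either. If $k=1$, the only excursion was to $x_1$ on the other path, so we merely have $T \geq 1$. If $k = 0$, the target is found on the first leg and the ratio is $1$.

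Case $k \geq 2$. Summing the geometric series gives
\[
\sum_{i=1}^k x_i = \sum_{j=0}^{k-1} r^{\xi+j} \leq r^{\xi+k-1}\cdot \frac{r}{r-1} = x_{k-1}\cdot \frac{r^2}{r-1} < T\cdot\frac{r^2}{r-1}.
\]
The middle equality uses $x_k = r\, x_{k-1}$. This gives the ratio bound $1 + 2r^2/(r-1)$, exactly as in the paper's derivation of the constraint from \cref{lem:step_ratio} (worst case $T \downarrow x_{k-1}$).

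Case $k=1$. The cost is $T + 2r^{\xi} < T + 2r$. Since $T\ge1$, the ratio is at most $1 + 2r^\xi/T < 1+2r$, and $2r \leq 2r^2/(r-1)$ because $r \le r^2/(r-1)$ is equivalent to $r-1\le r$.

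The main point requiring care is the precise definition of $k$ and the claim $T > x_{k-1}$, which relies on the consecutive turning points on the same path being increasing. They are increasing because $x_{i+2} = r^2 x_i$. There is also the boundary subtlety that $T$ can equal a turning point, in which case it is found there and no turn occurs. The argument does not use $r \le r_0$ or $r \geq 2$ beyond $r>1$; these hypotheses matter only for the competitive-ratio side via \cref{lem:exp_smart}. Since the bound holds for every $\xi$, direction and $T$, $\smart_r$ is $\bigl(1 + \frac{2r^2}{r-1}\bigr)$-dependable.
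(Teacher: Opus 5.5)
Your proof is correct and follows essentially the same route as the paper: both bound the sum of the turning points by the geometric-series estimate $\frac{r^2}{r-1}$ times the last turning point on the target's path, which the target must strictly exceed. The paper indexes by the $k$ with $r^{\xi+k} < T \le r^{\xi+k+1}$ and takes the worse of the two starting-direction cases, whereas you index by the number of completed turns and handle the one-turn case separately via $T \ge 1$; this is only a bookkeeping difference.
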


\begin{proof}
    Fix the value of $\xi$ and let $k$ be an integer such that
    \[ 
        r^{\xi+k} < T \leq r^{\xi+k+1}. 
    \]
    The cost of $\smart_r$ depends on how the chosen starting direction aligns with the direction of~$T$. If the cow travels to $r^{k+\xi}$ for the first time in the direction opposite to~$T$, then the total cost is $2 \cdot \sum_{i=0}^k r^{\xi+i} + T.$ Otherwise, after visiting $r^{k+\xi}$, it then visits $r^{k+1+\xi}$ in the direction opposite to~$T$ and only then comes back to reach $T$. Its cost is then $2 \cdot \sum_{i=0}^{k+1} r^{\xi+i} + T.$

    Clearly, the second case corresponds to the higher cost. This cost can be bounded as
    \begin{align*}
        2 \cdot \sum_{i=0}^{k+1} r^{\xi+i} + T &\leq  2 \cdot \frac{r^\xi(r^{k+2} - 1)}{r-1} + T \\
        &\leq \frac{2r^2 \cdot r^{\xi+k}}{r-1} + T \\
         &< \frac{2r^2}{r-1} \cdot T + T .
    \end{align*}
    Since the optimal cost is $T$, this gives us the dependability bound.
\end{proof}

Combining \cref{lem:exp_smart} with \cref{lem:line_dep} shows that $\smart_r$ gives the upper bound part of \cref{thm:res_line} for any $r \in [2, r_0]$. To prove a tight lower bound, we need the following lemma.

\detline*

\paragraph{Notation and probability distribution.} 
We assume that $\delta < 4 \cdot (1+r)/\ln r$, as otherwise the lemma holds trivially. Let $F(\eps)$ be the parameter (depending on $\eps$) resulting from \cref{lem:step_ratio}, when it is applied with $c = 1$ and $\eps$. We choose a real number $U > 1$ large enough so that 
\begin{equation*}
    \frac{1}{\ln U} \leq \frac{\delta}{4} \cdot \min \left\{
        \frac{\ln r}{1 + r},\;
        \frac{1}{F(\delta/4)},\;
        \frac{e \cdot (r-1)}{2 \cdot r^2}
    \right\} .
\end{equation*}
Distribution $\pi(\delta)$ is then defined as follows: for any $t \in [1, U]$ on any of the two paths, the density of the target at $t$ is $v_t = 1/(2t \cdot \ln U)$. Note that the distribution is symmetric on both paths. Since the algorithm knows $\pi(\delta)$, we can assume without loss of generality that it never travels further than~$U$ from the starting point.

Fix a deterministic $\gamma(r)$-dependable algorithm $\Det$. It can be described by 
\begin{itemize}
    \item a sequence of numbers $x_1, x_2, \ldots$ from $[1,U]$ satisfying
        $x_1 < x_3 < x_5 < \ldots$ and $x_2 < x_4 < x_6 < \ldots$, and
    \item a starting direction.
\end{itemize}
$\Det$ first moves to points $x_1, x_2, x_3, \ldots$ on alternating paths, starting with $x_1$ in the chosen starting direction, until it finds the target. Any algorithm that works differently can be easily transformed into such an algorithm without increasing its cost. For succinctness, we set $x_{-1} = x_0 = 1$. Define $n$ as an index satisfying
\[
    x_{n-1} < U/e \leq x_{n}.
\]
Such an index must exist, as otherwise $\Det$ would not be competitive for targets further than $U/e$. If after reaching the point $x_n$, the searcher has not yet reached the target, it moves back to $0$ and is not required to walk any further. The cost of \Det is then increased by the position of the target, which lower-bounds the cost the algorithm would pay from that point. All further costs of \Det are forgiven.

Let $S_k \define \sum_{i=1}^{k} x_i$. Then $2 \cdot S_k$ is the total cost of \Det in the first $k$ steps.

\begin{lemma}\label{lem:comp_line}
    Fix $\delta > 0$ and let distribution $\pi(\delta)$ be defined as above. Then,
    \[
        \E_{\sigma \sim \pi(\delta)}[\ratio(\Det, \sigma)] \geq 1 + \frac{1}{\ln U} \left(n + \sum_{i=1}^n \frac{x_i}{x_{i-1}} \right) - \frac{2\cdot S_n}{U \ln U}.
    \]
\end{lemma}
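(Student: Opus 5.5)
We decompose the expected ratio as $1$ plus the expected excess $2\cdot(\text{distance of completed excursions})/T$, since the final walk to the target always contributes exactly $T$ to the cost. Under the modified cost model (the searcher stops after $x_n$, returns to $0$, and is charged $T$ extra), the cost on target $T$ is at least $T + 2 S_k$, where $k$ is the number of turning points visited before $T$ is found, with $k \le n$. Writing $\mathbb{1}[\text{\Det reaches } x_i]$ for the event that the searcher actually completes the excursion to $x_i$, we get
\[
    \ratio(\Det, \sigma) \geq 1 + \sum_{i=1}^n \frac{2 x_i}{T} \cdot \mathbb{1}[\text{\Det reaches } x_i].
\]
After taking expectations, by linearity it suffices to compute, for each $i \le n$, the quantity $\E[\mathbb{1}[\text{reach } x_i]/T]$.

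As noted in \cref{sec:line}, \Det reaches $x_i$ if and only if either $T$ lies on the path of $x_i$ beyond $x_i$, or on the other path beyond $x_{i-1}$. The second condition is correct because the last visit to the other path before $x_i$ went out to $x_{i-1}$, which exceeds all earlier points on that path, so a target at distance at most $x_{i-1}$ there would already have been found. The same reasoning settles the first condition: the previous excursion on $x_i$'s own path reached $x_{i-2}$, so targets there at distance at most $x_{i-2}$ were found earlier, and the reach condition on that path is $T > x_i$ if we count the excursion as needing to reach the turning point. To be safe, I would lower-bound this term by $\int_{x_i}^U$, which only decreases the bound. The convention $x_{-1} = x_0 = 1$ handles $i=1,2$; for $i=1$, "the other path beyond $x_0 = 1$" is exactly the whole other path, which is correct.

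With density $v_t = 1/(2t\ln U)$, each integral evaluates as $\int_a^U \frac{1}{t}\cdot\frac{1}{2t\ln U}\,dt = \frac{1}{2\ln U}\left(\frac1a - \frac1U\right)$. Hence the contribution of $x_i$ is
\[
    2x_i \cdot \frac{1}{2\ln U}\left(\frac{1}{x_i} + \frac{1}{x_{i-1}} - \frac{2}{U}\right) = \frac{1}{\ln U}\left(1 + \frac{x_i}{x_{i-1}}\right) - \frac{2x_i}{U\ln U}.
\]
Summing over $i = 1, \dots, n$ gives exactly $1 + \frac{1}{\ln U}\bigl(n + \sum_i x_i/x_{i-1}\bigr) - \frac{2S_n}{U\ln U}$.

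The main obstacle is justifying the reach event and the cost accounting carefully. One point is that $T$ near $x_i$ on the same path counts as "reaching $x_i$" only in a measure-zero sense, so it does not matter. The other point is that the truncation at $x_n$, with the extra charge $T$ and forgiven costs afterwards, only lowers \Det's true cost; this is what makes the bound a valid lower bound on the actual ratio.
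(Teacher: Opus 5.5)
Your proof is correct and is essentially the paper's argument. The paper partitions by the target's location, writes the exact cost $2S_{i-1}+t$ on each interval $(x_{i-2},x_i]$, and then swaps the order of summation. This produces the same per-$x_i$ integrals $\int_{x_i}^U+\int_{x_{i-1}}^U$ that you reach directly through linearity over the reach indicators. (Your ``to be safe'' hedge is unnecessary, since on $x_i$'s own path the event of completing excursion $i$ is exactly $T>x_i$.)
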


\begin{proof}
Let $t$ denote the position of the target. For every $i \geq 1$, if the target is positioned on the same path as $x_i$ and in the interval $(x_{i-2}, x_i]$, then \Det pays exactly $2 \cdot S_{i-1} + t$. Furthermore, if the target is placed on the same path as $x_n$ and $t > x_n$, due to forgiveness mentioned above, the algorithm pays $2 \cdot S_n + t$. Symmetrically, if it is placed above $x_{n-1}$ on the other path, \Det also pays $2 \cdot S_n + t$. Therefore, its expected ratio is at least
\begin{align*}
    \E_{\sigma \sim \pi(\delta)}&[\ratio(\Det, \sigma)]  \\
        & \geq \sum_{i=1}^n \left(\int_{x_{i-2}}^{x_i} \frac{2 S_{i-1}+t}{t} \cdot v_t \, dt \right) + \int_{x_n}^{U} \frac{2 S_n+t}{t} \cdot v_t \, dt + \int_{x_{n-1}}^{U} \frac{2 S_n+t}{t} \cdot v_t \, dt \\
        &= 1 + \sum_{i=1}^n \left(\int_{x_{i-2}}^{x_i} \frac{2 \cdot S_{i-1}}{t} \cdot v_t \, dt \right) + \int_{x_n}^{U} \frac{2 \cdot S_n}{t} \cdot v_t \, dt + \int_{x_{n-1}}^{U} \frac{2 \cdot S_n}{t} \cdot v_t \, dt \\
        &= 1 + \sum_{i=1}^n \left(\int_{x_{i-2}}^{x_i} \frac{\sum_{j=1}^{i-1} x_j}{t^2\ln U} \, dt \right) + \int_{x_n}^{U} \frac{\sum_{j=1}^{n} x_j}{t^2\ln U} \, dt + \int_{x_{n-1}}^{U} \frac{\sum_{j=1}^{n} x_j}{t^2\ln U} \, dt
\end{align*}
By changing the order of summation and, for every $x_i$, summing all integrals in which $x_i$ appears, we obtain
\begin{align*}
\E_{\sigma \sim \pi(\delta)}[\ratio(\Det, \sigma)] 
    &\geq 1 + \frac{1}{\ln U} \cdot \sum_{i=1}^n  x_i\left(\int_{x_{i}}^{U} \frac{1}{t^2} \, dt + \int_{x_{i-1}}^{U} \frac{1}{t^2} \, dt\right)\\
    &=1 + \frac{1}{\ln U} \cdot \sum_{i=1}^n  x_i \left(\frac{1}{x_{i}} - \frac{1}{U} + \frac{1}{x_{i-1}} - \frac{1}{U}\right)\\
    &=1 + \frac{1}{\ln U} \cdot \left(n + \sum_{i=1}^n \frac{x_i}{x_{i-1}} \right) - \frac{2 \cdot S_n}{U \ln U}. 
    \qedhere
\end{align*}
\end{proof}

\begin{proof}[Proof of \cref{lem:det_line}] 
As \Det is $\gamma(r)$-dependable, its cost of reaching any $x \in (x_{k-1}, x_{k+1}]$, which is $x + 2 \cdot S_k$, cannot exceed $\gamma(r)$ times the cost of \Opt, which is $x$. As $x$ can be arbitrarily close to~$x_{k-1}$, 
\begin{equation*}
    x_{k-1} + 2\cdot S_k \leq \left(1 + \frac{2r^2}{r - 1}\right) \cdot x_{k-1} \qquad \text{for every $k \geq 1$}.
\end{equation*}
Equivalently,
\begin{align}
    \label{eq:prefix_sum_line}
    S_k &\leq \frac{r^2}{r - 1} \cdot x_{k-1} \qquad \text{for every $k \geq 1$},
    \intertext{and thus also for $k = n$,}
        S_n 
        & \leq \frac{r^2}{r-1} \cdot x_{n-1} \leq \frac{r^2}{r-1} \cdot \frac{U}{e}. \nonumber
\end{align}
By \eqref{eq:prefix_sum_line}, we may apply \cref{lem:step_ratio} with $c = 1$ and $\eps = \delta/4$, obtaining
\begin{equation}
    \label{eq:step_ratio_bound_line}
    n + \sum_{i=1}^n \frac{x_i}{x_{i-1}} \geq \left(\frac{1+r}{\ln r} - \frac{\delta}{4} \right) \cdot \ln x_n - F(\delta/4).
\end{equation}
Finally, we combine \eqref{eq:step_ratio_bound_line} with \cref{lem:comp_line} to obtain
\begin{align*}
    \E_{\sigma \sim \pi(\delta)}[\ratio(\Det, \sigma)] &\geq 1 + \frac{1}{\ln U} \cdot \left(n + \sum_{i=1}^n \frac{x_i}{x_{i-1}} \right) - \frac{2 \cdot S_n}{U \ln U}\\
    &\geq 1 + \frac{\ln x_n}{\ln U} \cdot \left(\frac{1+r}{\ln r} - \frac{\delta}{4} \right) - \frac{F(\delta/4)}{\ln U} - \frac{2 \cdot S_n}{U \ln U}\\
    &\geq 1 + \left(1 - \frac{1}{\ln U}\right) \cdot \left(\frac{1+r}{\ln r} - \frac{\delta}{4}\right) - \frac{F(\delta/4)}{\ln U} - \frac{2 \cdot r^2}{e \cdot (r-1) \cdot \ln U}\\
    &\geq 1 + \frac{1+r}{\ln r} - \frac{\delta}{4} - \frac{1 + r}{\ln r \cdot \ln U} - \frac{F(\delta/4)}{\ln U} - \frac{2 \cdot r^2}{e \cdot (r-1) \cdot \ln U}\\
    &\geq 1 + \frac{1+r}{\ln r} - \delta. \qedhere
\end{align*}
\end{proof}

 We are ready to prove \cref{thm:res_line}, restated below.

\thmline*

\begin{proof}
    The upper bound comes from combining \cref{lem:exp_smart} with \cref{lem:line_dep}. By combining \cref{lem:det_line} with the min-max principle (\cref{lem:minmax}), we obtain that no randomized $\gamma(r)$-dependable algorithm can achieve expected competitive ratio better than $1 + (1 + r)/\ln r - \delta$. As $\delta$ can be arbitrarily small, the theorem follows.
\end{proof}

\section{Proofs for \texorpdfstring{\cref{sec:ub_tcp}}{Section \ref*{sec:ub_tcp}} (TCP Acknowledgment Upper Bound)}
\label{sec:app_tcpub}

\subsection{Dependability}

\dependability*

\begin{proof}
    Consider a deterministic algorithm $\Det_z$ for some $z \in (0, 1]$ and an instance $\sigma$. Let $n_z(\sigma)$ and $W_z(\sigma)$ denote the number of acknowledgments and total latency of $\Det_z$ on $\sigma$, respectively.
    We show the following two inequalities.
    \begin{align}
        z \cdot n_z(\sigma) &\leq \Opt(\sigma) \label{eq:tcp_dep1}\\
        W_z(\sigma) &\leq \Opt(\sigma) \label{eq:tcp_dep2}
    \end{align}
    By combining the two inequalities, we have
    \[
        \Det_z(\sigma)
        = n_z(\sigma) + W_z(\sigma) 
        \leq \left(1 + \frac{1}{z}\right) \cdot \Opt(\sigma),
    \]
    completing the proof.

    We prove \eqref{eq:tcp_dep1} by showing that, for every acknowledgment of $\Det_z$, we can disjointly charge at least $z$ to the optimal solution. 
    Consider the $i$-th acknowledgment of $\Det_z$ at time $T_i$ for $i \in \mathbb{Z}_{> 0}$. 
    If the optimal solution performs an acknowledgment in time interval $(T_{i-1}, T_i]$, we charge $z \leq 1$ to this acknowledgment of the optimal solution for the $i$-th acknowledgment of $\Det_z$. Otherwise, if the optimal solution does not perform any acknowledgment in interval $(T_{i-1}, T_i]$, observe that every packet that has arrived in interval $(T_{i-1}, \tau_i]$ is unacknowledged until time $T_i$ in the optimal solution, implying that the optimal solution incurs latency at least 
    \[
        P(T_{i-1}, \tau_i) \cdot (T_i - \tau_i) = z
    \]
    in interval $(T_{i-1}, T_i]$, where the equality is due to \eqref{eq:tcp:detzdef} from the definition of $\Det_z$. Hence, we can charge $z$ to the latency incurred by the optimal solution during $(T_{i-1}, T_i]$ for the $i$-th acknowledgment of $\Det_z$. Notice that every fraction of the optimal cost $\Opt(\sigma)$ is charged at most once, completing the proof of \eqref{eq:tcp_dep1}.
    
    We now turn to proving \eqref{eq:tcp_dep2}. Let $n_{\Opt}(\sigma)$ and $W_{\Opt}(\sigma)$ respectively denote the number of acknowledgments and latency of the optimal solution, i.e., $\Opt(\sigma) = n_{\Opt}(\sigma) + W_{\Opt}(\sigma)$. It thus suffices to show
    \begin{equation} \label{eq:tcp_dep2_alt}
        W_z(\sigma) - W_{\Opt}(\sigma) \leq n_{\Opt}(\sigma).
    \end{equation}
    Let $K_{\Opt}$ denote the set of all acknowledgment times of the optimal solution, and for any $t \in K_{\Opt}$, let $R_t$ denote the set of all packets acknowledged at time $t$ by the optimal solution. We remark that $\{R_t\}_{t \in K_{\Opt}}$ partitions the entire set of input packets. For each packet $j \in R_t$, let $d_j$ denote the time at which packet $j$ is acknowledged by $\Det_z$. Let $R'_t \subseteq R_t$ denote the set of packets in $R_t$ that are acknowledged by $\Det_z$ later than by the optimal solution, i.e., $R'_t \define \{ j \in R_t : d_j > t \}$.
    
    We claim that, for any acknowledgment time $t \in K_{\Opt}$ of the optimal solution,
    \begin{equation} \label{eq:tcp_dep2_piece}
        \sum_{j \in R'_t} (d_j - t) \leq z.
    \end{equation}
    Observe that, if this claim is true, we immediately derive \eqref{eq:tcp_dep2_alt} since
    \begin{align*}
        W_z(\sigma) - W_{\Opt}(\sigma)
        & = \sum_{t \in K_{\Opt}} \sum_{j \in R_t} (d_j - t)
            \leq \sum_{t \in K_{\Opt}} \sum_{j \in R'_t} (d_j - t) \\
        & \leq \sum_{t \in K_{\Opt}} z
            \leq z \cdot n_{\Opt}(\sigma)
            \leq n_{\Opt}(\sigma),
    \end{align*}
    where the equality comes from the fact that $\{R_t\}_{t \in K_{\Opt}}$ partitions the entire set of input packets.

    To prove the claim, we assume that $R'_t$ is non-empty, as otherwise the claim is trivial. 
    Observe first that all packets in $R'_t$ are acknowledged at the same time, say $T_i$, by $\Det_z$ because they are all unacknowledged at time $t$ in the execution of $\Det_z$.
    Moreover, $R'_t$ is a subset of the packets unacknowledged by $\Det_z$ at time $t$, yielding that $T_{i-1} \leq t$ and $|R'_t| \leq P(T_{i-1}, t)$.
    We can thus bound the left-hand side of \eqref{eq:tcp_dep2_piece} as
    \[
        \sum_{j \in R'_t} (d_j - t) = |R'_t| \cdot (T_i - t) \leq P(T_{i-1}, t) \cdot (T_i - t).
    \]
    Suppose toward contradiction that $\sum_{j \in R'_t} (d_j - t) > z$.
    We then have $P(T_{i-1}, t) \cdot (T_i - t) > z$, implying that $\Det_z$ would have performed the $i$-th acknowledgment strictly earlier than $T_i$ by definition.
    This yields a contradiction to the choice of $T_i$, completing the proof of the claim, and hence the entire proof.
\end{proof}

\begin{remark}
The proof of \cref{lem:tcp_dependable} actually shows $\gamma(z)$-dependability not only for $\Det_z$, but also for any (deterministic or randomized) algorithm such that, for any $i \in \mathbb{Z}_{> 0}$, the $i$-th acknowledgment time $T_i$ is determined as the first time for which there exists $\tau_i \in [T_{i-1}, T_i]$ satisfying
\[
    P(T_{i-1}, \tau_i) \cdot (T_i - \tau_i) = z_i,
\]
where $T_0 \define 0$, and $z_i \in [z, 1]$ is chosen at time $T_{i-1}$ using any information revealed so far.
\end{remark}

\subsection{Competitiveness}

In our analysis of $\Rand_r$, we use the following properties shown by \citet{KaKeRa03}.
\begin{lemma}
    \label{lem:KaKeRa03_properties}
    Let $n_z(\sigma)$ denote the number of acknowledgments made by $\Det_z$ on input $\sigma$. Then,
    \begin{align}
        \Det_z(\sigma) &\leq \Opt(\sigma) + n_z(\sigma) - \int_z^1 n_w(\sigma) \, dw; \label{eq:tcp_Dz_bound}\\
        \Opt(\sigma) &\geq \int_0^1 n_w(\sigma) \, dw. \label{eq:tcp_opt_bound}
    \end{align}
\end{lemma}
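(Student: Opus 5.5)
Both inequalities are about a single fixed instance $\sigma$, so I would treat $\sigma$ as given and work with the whole family $\{\Det_w\}_{w\in(0,1]}$ at once. I would decompose $\Opt(\sigma)=n_{\Opt}(\sigma)+W_{\Opt}(\sigma)$ and reuse the two charging arguments from the proof of \cref{lem:tcp_dependable}, but now with the dependence on $w$ tracked quantitatively rather than bounded by its worst case.

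\emph{Inequality \eqref{eq:tcp_opt_bound}.} Fix $w$ and split the acknowledgments of $\Det_w$ into two kinds.
\begin{itemize}
\item \emph{Type A:} acknowledgments $T_i$ whose interval $(T_{i-1},T_i]$ contains an acknowledgment of $\Opt$. There are at most $n_{\Opt}(\sigma)$ of these.
\item \emph{Type B:} the remaining acknowledgments. For each of them, $\Opt$ keeps all packets of $(T_{i-1},\tau_i]$ pending throughout $[\tau_i,T_i]$, so its latency inside $(T_{i-1},T_i]$ is at least $w$.
\end{itemize}
A naive bound, charging each acknowledgment a full $1$ or $w$, only gives $w\cdot n_w\le\Opt$. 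To get the integral, I would instead describe the acknowledgments of $\Det_w$ through a single potential. Let $\Lambda(t)$ be the latency that $\Opt$ accrues on packets that are pending for $\Opt$ at time $t$. I would show that between two consecutive $\Opt$ acknowledgments, $\Det_w$ acknowledges roughly $\lfloor\Lambda/w\rfloor$ times. Then the layer-cake identity $\int_0^1\lfloor L/w\rfloor\,dw\le L$ (suitably truncated), summed over $\Opt$ blocks, plus one unit per $\Opt$ acknowledgment for type A, gives $\int_0^1 n_w\,dw\le W_{\Opt}+n_{\Opt}$.

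\emph{Inequality \eqref{eq:tcp_Dz_bound}.} Since $\Det_z(\sigma)=n_z(\sigma)+W_z(\sigma)$, it suffices to show
\[
W_z(\sigma)\le\Opt(\sigma)-\int_z^1 n_w(\sigma)\,dw .
\]
I would refine the claim \eqref{eq:tcp_dep2_piece}. For each $\Opt$ block $R_t$, the excess latency $\sum_{j\in R'_t}(d_j-t)$ of $\Det_z$ is at most $z$. Moreover, it is at most $z$ minus the amount by which larger thresholds $w>z$ "fire" inside the same region. Concretely, I would compare $\Det_z$ with $\Det_w$ for $w\in(z,1]$ and show a nesting property: the acknowledgment sequence of $\Det_z$ is at least as dense as that of $\Det_w$, since a lower trigger fires earlier. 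This lets me assign each type-B acknowledgment of $\Det_w$ a piece of $\Opt$'s latency of measure $w$ that $\Det_z$ does \emph{not} also pay. Integrating over $w\in[z,1]$ and adding the saving of $1-z$ per $\Opt$ acknowledgment, as compared with the crude bound in \cref{lem:tcp_dependable}, should produce the subtracted term $\int_z^1 n_w\,dw$.

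\emph{Main obstacle.} The hard step is the bookkeeping across different $w$, because the acknowledgment times of $\Det_w$ depend on $w$ through the recursive trigger \eqref{eq:tcp:detzdef}. The interval partitions for different thresholds are therefore not aligned. I would first try to prove a monotonicity/nesting lemma: $n_w$ is non-increasing in $w$, and between consecutive acknowledgments of $\Det_{w'}$ the number of acknowledgments of $\Det_w$ for $w<w'$ is controlled. Then I would reduce both inequalities to the single-block, layer-cake computation above.
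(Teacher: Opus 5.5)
\textbf{Gap.} The paper does not prove this lemma. It imports \eqref{eq:tcp_Dz_bound} and \eqref{eq:tcp_opt_bound} from \citet{KaKeRa03}, so your self-contained attempt has to supply that content itself. It has a genuine gap, at exactly the step that carries all the difficulty: bounding an \emph{integral over thresholds} $\int n_w\,dw$ by an area.

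\textbf{What works.} Your type-A/type-B split is fine. So is the reduction of \eqref{eq:tcp_Dz_bound} to $W_z\le\Opt-\int_z^1 n_w\,dw$. Paying for the type-A acknowledgments of $\Det_w$, $w\in(z,1]$, with the leftover $1-z$ per $\Opt$ acknowledgment is also correct: for fixed $w$ the intervals $(T_{i-1},T_i]$ are disjoint, so there are at most $n_{\Opt}$ type-A acknowledgments.

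\textbf{Why the type-B charging fails.} Both halves of your plan charge each type-B acknowledgment of $\Det_w$ an area of order $w$. Such charges can be disjoint only for a fixed $w$. This yields at best $w\cdot n^B_w\le X$ for the area $X$ you charge into, hence $\int_z^1 n^B_w\,dw\le X\ln(1/z)$. That is too weak for every $z<1/e$ and infinite as $z\to 0$. In particular, the identity $\int_0^1\lfloor L/w\rfloor\,dw\le L$ that you invoke for \eqref{eq:tcp_opt_bound} is false: the left-hand side equals $\sum_{k\ge1}\min(1,L/k)=\infty$ for every $L>0$. No truncation rescues it, because $\Det_w$ really can acknowledge many times inside one $\Opt$ block. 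For evenly spread arrivals it does so of order $\sqrt{L/w}$ times rather than $L/w$ times, because its rectangles sit on a rising staircase. Your argument never uses that geometry.

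\textbf{Counterexample for \eqref{eq:tcp_Dz_bound}.} The assignment you propose cannot exist even for one threshold at a time. Take single packets at times $1$ and $1.55$, and $z=0.5$.
\begin{itemize}
\item $\Opt=1.55$, with one acknowledgment at time $1.55$.
\item $\Det_{0.5}$ acknowledges at $1.5$ (type B) and at $2.05$.
\item The latency that $\Opt$ pays on packets already acknowledged by $\Det_{0.5}$ is therefore only $0.05$.
\item Every $\Det_w$ with $w\in(0.5,0.55)$ has a type-B acknowledgment at time $1+w$, which would need its own piece of measure $w>0.5$ of that region.
\end{itemize}
The lemma is nevertheless tight on this instance: $\int_{0.5}^1 n^B_w\,dw=0.05$, and \eqref{eq:tcp_Dz_bound} holds with equality ($3=1.55+2-0.55$).

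\textbf{What is missing.} A type-B acknowledgment of $\Det_w$ can only be charged an amount of order $dw$, disjointly across thresholds. This is a genuinely cross-threshold accounting. Nesting does not provide it, since nesting only gives $n_w\le n_z$ and no area. That accounting is precisely what the paper takes from \citet{KaKeRa03}, and you would have to prove it or cite it.

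\textbf{A simplification.} You do not need a separate argument for \eqref{eq:tcp_opt_bound}. Once $W_z\le\Opt-\int_z^1 n_w\,dw$ holds for all $z\in(0,1]$, drop $W_z\ge0$ and let $z\to0^+$.
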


Indeed, \eqref{eq:tcp_Dz_bound} can be derived by combining (1) and Lemma 4 of \citet{KaKeRa03}, while \eqref{eq:tcp_opt_bound} is Corollary 5 of the same paper.

We are now ready to prove \cref{lem:tcp_ub}, restated below for convenience.
Recall that $\Rand_r$ initially draws a value $z \in [r, 1]$ from the mixed probability distribution that has
    \begin{itemize}
        \item a discrete mass of $q_r \define \frac{r \cdot e^r}{(r+1) \cdot e - e^r}$ at $z = r$, and
        \item a density $p(z) \define \frac{r+1}{(r+1) \cdot e-e^r} \cdot e^z$ for any $z \in (r, 1]$,
    \end{itemize}
and then executes $\Det_z$ with the sampled $z$.

\tcpublem*

\begin{proof}
    Consider an input $\sigma$. The expected cost incurred by $\Rand_r$ is written as
    \[
        \E[\Rand_r(\sigma)] = q_r \cdot \Det_r(\sigma) + \int_r^1 p(z) \cdot \Det_z(\sigma) \, dz.
    \]
    For any $z \in [r, 1]$, let $\Det_z'(\sigma) = n_z(\sigma) - \int_z^1 n_w(\sigma) \, dw$.
    Using \eqref{eq:tcp_Dz_bound}, we can bound the expected cost by
    \[
        \E\left[\Rand_r(\sigma)\right]
        \leq \Opt(\sigma) + q_r\cdot\Det_r'(\sigma) + \int_r^1 p(z)\cdot\Det_z'(\sigma) \, dz.
    \]
    Note that the last term can be rephrased as follows:
    \begin{align*}
    \int_r^1 p(z)\cdot\Det_z'(\sigma) \, dz
    &= \int_r^1 p(z)\cdot\left(n_z(\sigma) - \int_z^1 n_w(\sigma) \, dw\right) \, dz\\
    &
    \overset{(a)}{=} \int_{r}^1 \left( p(z)\cdot n_z(\sigma) - n_z(\sigma) \cdot \int_{r}^z p(w) \, dw \right) \, dz\\
    &
    \overset{(b)}{=} \int_{r}^1 \Big( p(z)\cdot n_z(\sigma) - n_z(\sigma) \cdot \big(p(z)-p(r) \big) \Big) \, dz\\
    &= p(r)\cdot \int_{r}^1  n_z(\sigma) \, dz,
    \end{align*}
    where $(a)$ follows from changing the order of integration, and $(b)$ from the definition of $p(\cdot)$.
    
    We moreover remark that the number of acknowledgments $n_z(\sigma)$ by $\Det_z$ is non-increasing as $z$ increases. To see this, fix $z$ and $z'$ with $z \leq z'$, and consider two consecutive acknowledgments of $\Det_{z'}$, where the start of the execution counts as an acknowledgment; let $T'_{i-1}$ and $T'_i$ denote the times of these acknowledgments of $\Det_{z'}$. Note that $\Det_z$ must acknowledge at least once in $(T'_{i-1}, T'_i]$ since otherwise all packets unacknowledged by $\Det_{z'}$ at time $t \in (T'_{i-1}, T'_i]$ are also unacknowledged by $\Det_z$ at the same time $t$, from which we deduce that \eqref{eq:tcp:detzdef} for $\Det_{z'}$ contradicts the absence of acknowledgments of $\Det_z$ in $(T'_{i-1}, T'_i]$. We therefore have $n_z(\sigma) \geq n_{z'}(\sigma)$, yielding that 
    \begin{equation}\label{eq:tcp_decreasing_n}
        r \cdot n_r(\sigma) \leq \int_0^r n_z(\sigma) \, dz.
    \end{equation}
    
    We thus derive
    \begin{align*}
    \E\left[\Rand_r(\sigma)\right]
    &\leq \Opt(\sigma) + q_r\cdot\Det_r'(\sigma) + p(r)\cdot \int_{r}^1  n_z(\sigma) \, dz\\
    &= \Opt(\sigma) + q_r\cdot\left(n_r(\sigma) - \int_r^1 n_z(\sigma) \, dz\right) + p(r)\cdot \int_{r}^1  n_z(\sigma) \, dz\\
    &= \Opt(\sigma) + q_r\cdot n_r(\sigma) + \left(p(r)-q_r\right) \cdot \int_r^1 n_z(\sigma) \, dz\\
    & \overset{(a)}{=} \Opt(\sigma) + q_r\cdot n_r(\sigma) + \frac{q_r}{r} \cdot \int_r^1 n_z(\sigma) \, dz\\
    & \overset{(b)}{\leq} \Opt(\sigma) + \frac{q_r}{r} \cdot \int_0^r n_z(\sigma) \, dz + \frac{q_r}{r} \cdot \int_r^1 n_z(\sigma) \, dz \\
    &= \Opt(\sigma) + \frac{q_r}{r} \cdot \int_0^1 n_z(\sigma) \, dz\\
    & \overset{(c)}{\leq} \left(1 + \frac{q_r}{r}\right) \cdot \Opt(\sigma),
    \end{align*}
    where $(a)$ follows by $p(r) = (1 + 1/r) \cdot q_r$, $(b)$ by \eqref{eq:tcp_decreasing_n}, and $(c)$ by \eqref{eq:tcp_opt_bound}. 
    Notice that $1 + q_r/r = 1 + \frac{e^r}{(1+r) \cdot e - e^r}$.
\end{proof}

We can now prove \cref{thm:res_tcp}, restated below.

\tcpub*

\begin{proof}
    For any $z \in [r, 1]$, $\Det_z$ is $\gamma(z)$-dependable by \cref{lem:tcp_dependable} and therefore $\gamma(r)$-dependable. For any $r \in (0, 1]$, $\Rand_r$ is then supported only by $\gamma(r)$-dependable algorithms, and thus it is $\gamma(r)$-dependable. Moreover, by \cref{lem:tcp_ub}, the algorithm is $(1 + \frac{e^{r}}{(1+r) \cdot e - e^{r}})$-competitive.
\end{proof}
\section{Lower Bound for TCP Acknowledgment}
\label{sec:lb_tcp}

In this section, we show a nontrivial lower bound for the TCP acknowledgment problem. 
Recall that no algorithm is better than $2$-dependable~\citep{DoGoSc01} and no algorithm is better than \mbox{$e/(e-1)$}-competitive~\citep{Seiden00}. Here, we show a lower bound on the dependability-competitiveness tradeoff, which is stronger than these two bounds combined.

Similarly to \cref{thm:res_tcp}, we parametrize both the dependability and the resulting bound on the expected competitive ratio by a single real $r \in [(\sqrt5-1)/2, 1]$, setting
\begin{align*}
    c(r) & \define r^2 + r - 1 && \in [0,1], \\
    \gamma(r) & \define 1 + 1/r && \in [2, (3+\sqrt5)/2], \\
    \rho(r) & \define 1 + \frac{(2+r) \cdot r^{2+r}}{(1+r)^2 - r^{2+r}} .
\end{align*}
The range of $r$ is chosen so that $c(r) \in [0,1]$. We show that every $\gamma(r)$-dependable randomized algorithm for TCP acknowledgment has expected competitive ratio at least $\rho(r)$, which proves \cref{thm:res_tcp_lb}. See \cref{fig:tcp} for a plot of $\rho(r)$ versus $\gamma(r)$.

Throughout this section, we fix $r \in [(\sqrt5-1)/2, 1]$ and write $c \define c(r)$, $\gamma \define \gamma(r)$ and $\rho \define \rho(r)$. We extensively use the following relationships:
\[
    \gamma \geq 2, \qquad \frac{\gamma}{\gamma-1} = 1 + r \leq 2, \qquad 1 + c = r \cdot (1+r).
\]

The proof consists of two parts. First, in \cref{sec:latencies}, we study instances similar to those by \citet{Seiden00}, consisting of bursts of packets whose sizes grow quickly. We show that $\gamma$-dependability forces any deterministic algorithm to satisfy certain inequalities. In particular, such an algorithm has to acknowledge the first burst no earlier than at latency $c$; more complex inequalities hold also for the subsequent bursts. Second, in \cref{sec:theorem}, we apply the inequalities obtained this way for the first two bursts. We construct a probability distribution over instances with one or two bursts, on which every deterministic algorithm satisfying the bounds of \cref{sec:latencies} has expected competitive ratio at least~$\rho$. \cref{thm:res_tcp_lb} then follows by the application of the min-max principle (\cref{lem:minmax}).

\paragraph{Burst sequences.}

For a vector $\rvx$, we denote its length by $|\rvx|$ and its $i$-th component by $x_i$. We write $\perp$ for the vector of length $0$ and $\reals_{\geq 0}^* \define \bigcup_{k \geq 0} \reals_{\geq 0}^k$ for the set of all finite vectors with non-negative entries. For $\rvx \in \reals^k$ and $j \leq k$, we write $\rvx_{\leq j} \define \vct{x_1, \dots, x_j}$ for the prefix of $\rvx$ of length~$j$.

An instance $E(\rvs, \rvy)$, defined for a vector $\rvs \in \ints_{\geq 1}^n$ of \emph{burst sizes} and a vector $\rvy \in \reals_{\geq 0}^{n-1}$ of \emph{gaps}, consists of $n$ \emph{bursts}; burst~$i$ arrives at time $t_i$ and consists of $s_i$ packets. We define $t_1 = 0$ and $t_{i+1} = t_i + y_i / s_i$ for $i \geq 1$.

We say that an unacknowledged burst $i$ has \emph{latency} $z$ at time $t_i + z/s_i$, i.e., the latency of a burst is the total latency which its packets accrued so far. Note that burst $i+1$ arrives at latency $y_i$ of burst $i$. We also define the \emph{overhead} of burst~$i$ as $\theta_i \define (\sum_{j \leq i} s_j) / s_i \geq 1$. While burst~$i$ accrues latency $z$, the bursts $1, \dots, i$ together accrue latency $\theta_i \cdot z$. Note that $\theta_1 = 1$ and that $\theta_i$ is close to $1$ whenever burst $i$ is much larger than all the preceding ones together.

In all our constructions, $s_1 = 1$, i.e., the first burst is a single packet, and $E(\vct{1}, \perp)$ is the instance which consists of this packet only.

For $i \in \set{1, \dots, n}$, we write $\sigma_{\leq i} \define E(\rvs_{\leq i}, \rvy_{\leq i-1})$ for the instance consisting of the first $i$ bursts.

\paragraph{Prefixes and costs.}

Consider a deterministic algorithm \Det with finite dependability, and assume that an adversary creates an instance $E(\rvs, \rvy)$ \emph{adaptively against~\Det}: it chooses the size~$s_{i+1}$ of burst $i+1$ and places this burst at latency $y_i$ of burst $i$, and it fixes both values only after it observes the acknowledgment of burst $i$ by \Det. (Such acknowledgment must eventually occur, as the dependability of \Det is finite.) The process results in an instance $E(\rvs, \rvy)$ with $n \define |\rvs|$ bursts. We denote the latency at which \Det acknowledges burst $i \leq n$ by $x_i$, which yields the vector
\[
    \rvx(\Det, \rvs, \rvy) = \vct{x_1, \dots, x_n},
\]
for which $x_i < y_i$ for every $i < n$.

For brevity, we write $x_1(\Det)$ for the latency at which \Det acknowledges the single burst of $E(\vct{1},\perp)$, i.e., $\rvx(\Det, \vct{1}, \perp) = \vct{x_1(\Det)}$.

Fix $j < n$. As \Det is deterministic and online, when it is executed on instance $E(\rvs_{\leq j+1}, \rvy_{\leq j})$ (a~prefix of $E(\rvs, \rvy)$), it acknowledges all bursts \emph{at the same latencies} $x_1, \dots, x_{j+1}$ as on $E(\rvs, \rvy)$. 

For $i \in \set{0, \dots, n}$, we write $C_i \define \sum_{j \leq i} (1 + x_j)$ and when $i < n$, 
we let $Y_i \define \sum_{j \leq i} y_j$. In particular, $C_0 = 0$ and $Y_0 = 0$. Note that on the instance $E(\rvs_{\leq i}, \rvy_{\leq i-1})$ \Det pays $1$ for each of its first $i$ acknowledgments, and the packets of burst $j \leq i$ accrue latency $x_j$. Hence $C_i$ is the cost of \Det on such instance. 

\paragraph{Flat instances.}

The bounds which we present are the strongest when all overheads $\theta_i$ are close to $1$. The adversary can easily enforce this by making every burst much larger than all the previous ones together. 

For a real $\xi > 0$, we call an instance $E(\rvs, \rvy)$ \emph{$\xi$-flat} if $s_1 = 1$ and $\theta_i \leq 1 + \xi$ for every burst $i$. As $\theta_1 = 1$, the condition constrains only bursts $i \geq 2$, and the adversary may satisfy it simply by choosing $s_i \geq (\sum_{j<i} s_j) / \xi$. The gaps $y_i$ on flat instances may be arbitrary. 

For a deterministic algorithm \Det, an integer $n \geq 1$ and a real $\xi > 0$, we denote by $\gI_n(\Det, \xi)$ the set of all $\xi$-flat instances with $n$ bursts and gaps $y_1, \dots, y_{n-1} \leq 1$ that can be generated adaptively against \Det. We have $\gI_1(\Det, \xi) = \set{E(\vct{1}, \perp)}$ for every $\xi$. 

Note that an adaptively generated instance has to satisfy the condition $y_i > x_i$. Thus, if $\Det$ decides to acknowledge burst $i < n$ at latency $x_i \geq 1$, then the resulting instance does not belong to $\gI_n(\Det, \xi)$. In particular, if $x_1(\Det) \geq 1$, then $\gI_n(\Det, \xi)$ is empty for every $n \geq 2$.

\begin{definition}
\label{def:class_Deps}
For an integer $n \geq 1$ and a real $\eps > 0$, we let
\begin{equation*}
    \xi_n(\eps) \define \frac{r \cdot \eps}{(1+r)^2 \cdot n} ,
\end{equation*}
and we let $\gD^{(n)}_{\eps}$ be the set of all deterministic algorithms \Det which satisfy the following two properties.
\begin{itemize}
    \item \textup{(first burst)} $x_1(\Det) \geq c$, i.e., \Det acknowledges the single burst of $E(\vct{1}, \perp)$ no earlier than at latency $c$.
    \item \textup{(recursion)} For every instance in $\gI_n(\Det, \xi_n(\eps))$, it holds that
    \begin{equation*}
        C_i \geq (1+r) \cdot ( C_{i-1} + r - Y_{i-1} ) - \eps
        \qquad \text{for every $i \in \set{2, \dots, n}$.}
    \end{equation*}
\end{itemize}
\end{definition}

The recursion property lower-bounds the cost of \Det after each burst in terms of its cost after the previous one. Note that the first-burst property corresponds to the recursion property for $i = 1$, but without the error term: as $C_0 = Y_0 = 0$, we have $C_1 = 1 + x_1(\Det)$ and $(1+r) \cdot r = 1 + c$. The recursion property is vacuous whenever $\gI_n(\Det, \xi_n(\eps))$ is empty, e.g., if $x_1(\Det) \geq 1$. We will handle such deterministic algorithms in \cref{sec:theorem} by a separate straightforward argument. 

The lemma below states that classes $\gD^{(n)}_{\eps}$ contain all $\gamma$-dependable deterministic algorithms; its proof is deferred to \cref{sec:latencies}.

\begin{restatable}{lemma}{certificate}
    \label{lem:certificate}
    Consider a deterministic $\gamma$-dependable algorithm $\Det$, an integer $n \geq 1$, and a real $\eps > 0$. Then $\Det \in \gD^{(n)}_{\eps}$.
\end{restatable}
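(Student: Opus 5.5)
The plan is to prove a ``post-ack budget'' lemma (the \cref{lem:test} announced in \cref{sec:lb_tcp_overview}) and then unfold it algebraically into the two properties of \cref{def:class_Deps}. Fix a $\gamma$-dependable \Det, an instance in $\gI_n(\Det,\xi)$ with $\xi=\xi_n(\eps)$, and an index $i\le n$.

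\emph{Post-ack budget.} Right after \Det acknowledges burst $i$, define
\[
  \Phi^+_i \define \gamma\cdot A_i - C_i^{\mathrm{true}},
\]
where $A_i$ is the cost of the solution acknowledging once at that moment, and $C_i^{\mathrm{true}}$ is \Det's true cost so far. On a $\xi$-flat instance the packets of earlier bursts accrue only an extra factor $\theta_j\le 1+\xi$ of latency. Hence $A_i\le 1+(1+\xi)(Y_{i-1}+x_i)$, and $C_i^{\mathrm{true}}$ differs from $C_i$ by at most $\xi\cdot C_i$. With $y_j\le1$ and $n$ bursts, I expect the accumulated flatness error to be at most of order $n\xi(1+r)^2/r=\eps$; this is exactly why $\xi_n(\eps)$ carries the factor $r/((1+r)^2 n)$.

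\emph{Penalizing suffix.} I claim $\Phi^+_i\ge 1+r$ up to the error above. Suppose instead $\Phi^+_i<1+r-\eta$. The adversary then appends bursts, each issued right after \Det acknowledges the previous one, each arbitrarily large compared to everything before, with an arbitrarily small gap. (The suffix need not lie in $\gI_n$; only its flatness matters.) In suffix round $m$, \Det acknowledges at some latency $x_m$.
\begin{itemize}
  \item Dependability at the moment just before that acknowledgment, with the instance ending there, forces $1+x_m\le\Phi^+_{m-1}$. The comparison solution pays one more ack essentially for free at the tiny gap.
  \item Moreover, $\Phi^+_m\le\Phi^+_{m-1}+(\gamma-1)x_m-1\le\gamma(\Phi^+_{m-1}-1)$.
\end{itemize}
The map $z\mapsto\gamma(z-1)$ has fixed point $\gamma/(\gamma-1)=1+r$ and expands distances by the factor $\gamma\ge2$. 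So the gap $1+r-\Phi^+$ grows geometrically and $\Phi^+$ eventually becomes negative. The instance may stop at that moment, so this contradicts dependability. The small per-round errors are absorbed by making the suffix bursts large enough relative to the current gap $\eta$, which is possible because the adversary picks them adaptively. The same argument with $i=1$ (no flatness error, since $\theta_1=1$) gives $\Phi^+_1=(\gamma-1)(1+x_1)\ge1+r$, i.e.\ $1+x_1\ge r(1+r)=1+c$. This is the first-burst property with no $\eps$.

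\emph{Unfolding into the recursion property.} For $i\ge2$, I substitute $x_i=C_i-C_{i-1}-1$ into $\gamma(1+Y_{i-1}+x_i)-C_i\ge1+r-(\text{error})$. This gives $(\gamma-1)C_i\ge 1+r+\gamma(C_{i-1}-Y_{i-1})$. Dividing by $\gamma-1=1/r$ and using $r\gamma=1+r$ yields
\[
  C_i\ge(1+r)(C_{i-1}+r-Y_{i-1})-\eps,
\]
which is the recursion property. The main obstacle is the error bookkeeping: making sure the flatness slack in $A_i$ and $C_i$, amplified by $r$ from the division, stays below $\eps$ uniformly in $i\le n$. A second delicate point is making the suffix contradiction rigorous despite the errors, by choosing the suffix flatness in terms of $\eta$ rather than $\eps$.
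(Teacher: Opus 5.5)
Your proposal is correct and follows the paper's proof essentially step for step. It uses the same post-ack budget, the same penalizing suffix pushing $\Phi^+$ away from the fixed point $1+r$ of $z\mapsto\gamma(z-1)$, and the same unfolding via $x_i=C_i-C_{i-1}-1$ and $r\gamma=1+r$. The paper proves $\Phi^+_k\ge 1+r$ exactly for any adaptively generated prefix, so flatness enters only in that last step. For the bookkeeping you flag, the cost of \Det on $\sigma_{\le i}$ is exactly $C_i$, because each burst is acknowledged before the next arrives, so there is no $\xi C_i$ slack; charging one would overshoot $\eps$ for the fixed $\xi_n(\eps)$. The only error is $\gamma\xi(Y_{i-1}+x_i)\le\eps/r$, using $Y_{i-1}\le i-1$ and $x_i\le i/r$, where the latter follows from $1+x_i=C_i-C_{i-1}\le\gamma i-(i-1)$ via $\Opt(\sigma_{\le i})\le i$.
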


In \cref{sec:theorem}, we only need the properties of algorithms belonging to the class $\gD^{(2)}_{\eps}$ on sequences whose first gap is at most $1$. In this case, \cref{def:class_Deps} reduces to the following statement.

\begin{lemma}
    \label{lem:certificate2}
    Fix a real $\eps > 0$ and a deterministic algorithm $\Det \in \gD^{(2)}_{\eps}$, and let $x_1 \define x_1(\Det)$. Then $x_1 \geq c$ and, for every integer $s \geq 1/\xi_2(\eps)$ and every real $y \in (x_1, 1]$, it holds that
    \[
        \Det(E(\vct{1, s}, \vct{y})) \geq (1+r) \cdot (1 + x_1 + r - y) - \eps .
    \]
\end{lemma}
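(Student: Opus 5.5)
The plan is to unfold \cref{def:class_Deps} for $n = 2$ and check that the instance $E(\vct{1,s},\vct{y})$ lies in $\gI_2(\Det, \xi_2(\eps))$. The bound will then be exactly the recursion property for $i=2$, combined with the observation that $C_2$ is at most the cost of \Det on this instance.

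The first-burst property of \cref{def:class_Deps} gives $x_1 \geq c$ immediately. For the second claim, fix $s \geq 1/\xi_2(\eps)$ and $y \in (x_1, 1]$, and consider the adversary that proceeds as follows. It issues the single packet. It waits until \Det acknowledges it, which happens at latency $x_1(\Det) = x_1$, since \Det behaves on the prefix $E(\vct{1},\perp)$ exactly as on the one-packet instance. It then issues a burst of $s$ packets at latency $y$ of the first burst, i.e., at time $y$.

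The generation rules are respected. We need $y > x_1$, which holds, so the second burst arrives after the first acknowledgment. Both $s$ and $y$ are fixed only after this acknowledgment is observed. The gap satisfies $y \le 1$. For flatness, $s_1 = 1$ and $\theta_2 = (1+s)/s = 1 + 1/s \le 1 + \xi_2(\eps)$ by the choice of $s$. Hence $E(\vct{1,s},\vct{y}) \in \gI_2(\Det,\xi_2(\eps))$.

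Applying the recursion property with $i = 2$, and using $C_1 = 1 + x_1$ and $Y_1 = y$, gives
\[
    C_2 \geq (1+r)\cdot(1 + x_1 + r - y) - \eps .
\]
It remains to relate $C_2 = (1+x_1) + (1+x_2)$ to $\Det(E(\vct{1,s},\vct{y}))$. On this instance \Det acknowledges the first packet at latency $x_1$, paying $1 + x_1$. It must acknowledge the second burst at some point, by finite dependability, or, if the class allows otherwise, because of the convention defining $x_2$. Any packets arriving after the first acknowledgment belong to the second burst, so that acknowledgment costs $1$ plus latency $x_2$. Any extra acknowledgments only add cost. Hence $\Det(E(\vct{1,s},\vct{y})) \ge C_2$, which proves the claim.

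The main subtlety is this last identification. It relies on $x_2$ being defined as the latency at which \Det acknowledges the second burst, and on the cost accounting behind $C_i$. Here one must be careful that additional acknowledgments do not make $\Det$'s cost smaller than $C_2$: they cannot, since every acknowledgment costs at least $1$.
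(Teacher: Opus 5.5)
Your proposal is correct and follows the paper's proof. You check that $E(\vct{1,s},\vct{y})$ lies in $\gI_2(\Det,\xi_2(\eps))$: it is generated adaptively since $y > x_1$, it is flat since $\theta_2 = 1+1/s \le 1+\xi_2(\eps)$, and its gap satisfies $y \le 1$. You then read off the recursion property for $i=2$ with $C_1 = 1+x_1$ and $Y_1 = y$; the only difference is that you justify $\Det(E(\vct{1,s},\vct{y})) \geq C_2$ explicitly, whereas the paper simply takes $C_2$ to be this cost.
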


\begin{proof}
    The bound $x_1 \geq c$ is the first-burst property of \cref{def:class_Deps}. Next, fix $s$ and $y$ as in the statement. The instance $E(\vct{1, s}, \vct{y})$ is generated adaptively against \Det, as its second burst arrives at latency~$y > x_1$ of the first one. It is $\xi_2(\eps)$-flat, as $\theta_1 = 1$ and $\theta_2 = 1 + 1/s \leq 1 + \xi_2(\eps)$, and its only gap satisfies $y \leq 1$ by the lemma assumption. Thus, it belongs to $\gI_2(\Det, \xi_2(\eps))$, and therefore the recursion property of \cref{def:class_Deps} together with $C_1 = 1 + x_1$, $Y_1 = y$ and $C_2 = \Det(E(\vct{1, s}, \vct{y}))$ yields the lemma.
\end{proof}

\subsection{Structural Consequences of Dependability}
\label{sec:latencies}

In this section, we prove \cref{lem:certificate}. The idea behind our construction is the following: when \Det acknowledges too early, the adversary can append burst after burst, each of them arriving just after \Det acknowledges the previous one. We call such a chain of bursts a \emph{penalizing suffix}. This forces \Det to pay $1$ for an acknowledgment in every round, while $\Opt$ may acknowledge the whole sequence at the last burst only. To remain $\gamma$-dependable, \Det then has to make its acknowledgment latencies smaller and smaller, until this becomes impossible.

We make this argument applicable at an arbitrary moment of an arbitrary adaptively generated instance. More formally, we define the \emph{post-ack budget}, a quantity which is computed on the basis of the history. Whenever it drops below $1+r$, the adversary may append the penalizing suffix of bursts, which contradicts the $\gamma$-dependability of \Det. As this construction can be applied after each burst of a flat instance, we obtain one bound per burst, and these bounds are exactly the inequalities stated in \cref{def:class_Deps}.

Throughout this subsection, we fix deterministic algorithm \Det with finite dependability. All instances considered below are generated adaptively against \Det, and we let $\rvx(\Det, \rvs, \rvy) = \vct{x_1, \dots, x_n}$. Recall that $\Det(\sigma_{\leq i}) = C_i$.

\begin{lemma}
    \label{lem:opt_bound}
    Let $\sigma$ be an instance with $n$ bursts generated adaptively against \Det. For every $i \in \set{1, \dots, n}$, it holds that $\Opt(\sigma_{\leq i}) \leq 1 + \sum_{j < i} \theta_j \cdot y_j$ and $\Opt(\sigma_{\leq i}) \leq i$.
\end{lemma}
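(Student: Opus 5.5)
The plan is to exhibit, for each bound, one explicit offline solution on $\sigma_{\leq i}$ whose cost meets it. Since $\Opt(\sigma_{\leq i})$ is at most the cost of any feasible solution, this suffices.

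\textbf{First bound.} Take the solution that sends a single acknowledgment at time $t_i$, the arrival time of burst $i$. This costs $1$ for the acknowledgment. Each packet of burst $j \leq i$ waits from $t_j$ to $t_i$, and packets of burst $i$ incur zero latency. So the total latency is $\sum_{j<i} s_j (t_i - t_j)$. We split $t_i - t_j = \sum_{k=j}^{i-1} (t_{k+1}-t_k) = \sum_{k=j}^{i-1} y_k/s_k$ and exchange the order of summation:
\[
\sum_{j<i} s_j \sum_{k=j}^{i-1} \frac{y_k}{s_k} = \sum_{k<i} \frac{y_k}{s_k}\sum_{j\leq k} s_j = \sum_{k<i} \theta_k \cdot y_k .
\]
This gives the bound $1+\sum_{j<i}\theta_j y_j$. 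The interchange is the only step needing care, and it just recovers the remark from the paper that while burst $k$ accrues latency $y_k$, bursts $1,\dots,k$ together accrue $\theta_k y_k$.

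\textbf{Second bound.} Take the solution that acknowledges at each arrival time $t_1,\dots,t_i$, i.e., immediately when each burst arrives. Every packet is acknowledged at its arrival instant, so the total latency is $0$. There are $i$ acknowledgments, so the cost is $i$.

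Neither step looks like a real obstacle; the interchange of summation in the first bound is the main computation.
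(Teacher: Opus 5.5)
Your proof is correct and uses the same two schedules as the paper: a single acknowledgment at the arrival of burst $i$, and one acknowledgment at each burst arrival. Your summation interchange just makes explicit the paper's remark that the latency accrued during the gap after burst $j$ is $\theta_j \cdot y_j$.
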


\begin{proof}
    The term $1 + \sum_{j<i} \theta_j \cdot y_j$ is the cost of the schedule that serves $\sigma_{\leq i}$ by a single acknowledgment at the time of burst $i$. Such schedule pays $1$ for the acknowledgment, and gathers latency $\theta_j \cdot y_j$ within the gap between bursts $j$ and $j+1$. Next, the term $i$ is the cost of the schedule that serves~$\sigma_{\leq i}$ by acknowledging at each of its $i$ bursts.
\end{proof}

\paragraph{Post-ack budgets.}

For an instance $\sigma$ with $n$ bursts and $i \in \set{1, \dots, n}$, we define the \emph{budget at burst $i$} as
\begin{equation}
\label{eq:phi_i}
    \Phi_i \define \gamma \cdot (1 + \sum_{j < i} \theta_j \cdot y_j) - C_{i-1} .
\end{equation}
By \cref{lem:opt_bound}, the first term is at least $\gamma \cdot \Opt(\sigma_{\leq i})$, i.e., at least the total cost which $\gamma$-dependability allows \Det to spend on $\sigma_{\leq i}$. $C_{i-1}$ is the amount which \Det has already paid before burst $i$ arrived, and thus $\Phi_i$ is the part which is not spent yet. Note that $\Phi_1 = \gamma$.

\begin{lemma} \label{lem:bounds}
    Let $\sigma$ be an instance with $n$ bursts generated adaptively against \Det. If \Det is $\gamma$-dependable, then for every $i \in \set{1, \dots, n}$, $x_i \leq \Phi_i - 1$ and $x_i \leq i/r$.
\end{lemma}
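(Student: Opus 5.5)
The plan is to obtain both inequalities by truncating the instance right after burst $i$: consider the prefix $\sigma_{\leq i}$ and compare the cost of \Det on it with $\gamma \cdot \Opt(\sigma_{\leq i})$. As noted in the paper, \Det is deterministic and online, so on $\sigma_{\leq i}$ it acknowledges bursts $1, \dots, i$ at the same latencies $x_1, \dots, x_i$ as on $\sigma$. For $i < n$ this holds because it is acknowledged before burst $i+1$ arrives. For $i = n$ it holds trivially. Hence $\Det(\sigma_{\leq i}) = C_i = C_{i-1} + 1 + x_i$. One subtlety is that on the prefix, \Det might leave no further pending packets after acknowledging burst $i$. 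This is indeed the case, since all packets of bursts up to $i$ are cleared at the $i$-th acknowledgment and no more arrive. So the cost is exactly $C_i$, or at least $C_i$ if \Det also issues superfluous acknowledgments, which only helps.

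For the first inequality, I would apply $\gamma$-dependability on $\sigma_{\leq i}$ to get $C_{i-1} + 1 + x_i \leq \gamma \cdot \Opt(\sigma_{\leq i})$. Then I would invoke the first bound of \cref{lem:opt_bound}, $\Opt(\sigma_{\leq i}) \leq 1 + \sum_{j<i} \theta_j y_j$. Multiplying by $\gamma$ and subtracting $C_{i-1}$ gives $1 + x_i \leq \Phi_i$ by the definition \eqref{eq:phi_i}, i.e., $x_i \leq \Phi_i - 1$.

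For the second inequality, I would use the other bound of \cref{lem:opt_bound}, $\Opt(\sigma_{\leq i}) \leq i$, to obtain $C_{i-1} + 1 + x_i \leq \gamma \cdot i = (1 + 1/r)\, i$. Dropping $1 + x_i$ is not what we want; instead I would drop $C_{i-1}$ and then need to absorb the remaining $1$. Since $C_{i-1} = \sum_{j<i}(1+x_j) \geq i-1$, we get $i + x_i \leq C_{i-1} + 1 + x_i \leq i + i/r$, hence $x_i \leq i/r$.

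There is no real obstacle here. The only points needing care are the prefix-consistency argument, which the paper has already asserted, and making sure the cost of \Det on the prefix is at least $C_i$ rather than exactly equal. Both inequalities then follow by direct substitution.
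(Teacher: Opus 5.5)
Your proposal is correct and follows the paper's proof essentially step for step. You apply $\gamma$-dependability to the prefix $\sigma_{\leq i}$, combine it with the two bounds of \cref{lem:opt_bound} and the definition of $\Phi_i$, and use $C_{i-1} \geq i-1$ for the second inequality. Your remark that \Det's cost on the prefix is at least $C_i$, in case of superfluous acknowledgments, is harmless extra care; the paper simply writes this as an equality.
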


\begin{proof}
    Fix $i \in \set{1, \dots, n}$. As \Det is $\gamma$-dependable, $C_i = \Det(\sigma_{\leq i}) \leq \gamma \cdot \Opt(\sigma_{\leq i})$. Thus, by \cref{lem:opt_bound} and \eqref{eq:phi_i}, we have  $C_i \leq \gamma \cdot (1 + \sum_{j < i} \theta_j \cdot y_j) = \Phi_i + C_{i-1}$. As $C_i = C_{i-1} + 1 + x_i$, the first inequality of the lemma follows.
    
    By \cref{lem:opt_bound} again, $C_i \leq \gamma \cdot i$. As $C_{i-1} \geq i - 1$, we obtain $1 + x_i = C_i - C_{i-1} \leq \gamma \cdot i - (i - 1)$, and thus $x_i \leq (\gamma - 1) \cdot i = i/r$.
\end{proof}

By the budget definition \eqref{eq:phi_i}, for every instance with $n$ bursts and every $i < n$,
\begin{equation}
    \label{eq:step}
    \Phi_{i+1} = \Phi_i + \gamma \cdot \theta_i \cdot y_i - (1 + x_i) .
\end{equation}
Assume that \Det has just acknowledged burst $i$ at latency $x_i$ and that the adversary is about to issue burst $i+1$. The cheapest continuation for the adversary is to place this burst at latency $y_i$ which is arbitrarily close to $x_i$. By \eqref{eq:step}, the resulting budget $\Phi_{i+1}$ is then arbitrarily close to the amount denoted $\Phi_i^+$, which we call \emph{post-ack budget at burst $i$}:
\begin{align}
    \label{eq:phi_plus}
    \Phi^+_i
    & \define \Phi_i + \gamma \cdot \theta_i \cdot x_i - (1 + x_i) \nonumber \\
    & = \gamma \cdot (1 + \sum_{j < i} \theta_j \cdot y_j + \theta_i \cdot x_i) - C_i .
\end{align}
The second equality above follows by \eqref{eq:phi_i} and $C_i = C_{i-1} + 1 + x_i$. It is worth noting that $\Phi^+_i$ depends only on the history up to the acknowledgment of burst $i$ by \Det. In particular, it does not depend on the future choices of the adversary. By \cref{lem:bounds}, $\Phi^+_i \geq \gamma \cdot \theta_i \cdot x_i \geq 0$.

\paragraph{The penalizing suffix.}

Recall the definition \eqref{eq:phi_plus} of the post-ack budget. The lemma below states that the post-ack budget of a $\gamma$-dependable algorithm can never drop below $1+r$. 

The idea behind the lemma is that $1+r$ is the fixed point of the mapping $z \mapsto \gamma \cdot (z-1)$, which, up to lower-order terms, describes how the post-ack budget evolves from burst to burst once the adversary appends bursts greedily. We show that if the post-ack budget falls below $1+r$, then its distance to $1+r$ grows geometrically with each subsequent burst. Eventually, the budget drops below zero, which contradicts \cref{lem:bounds}.

We will use the following technical claim, whose proof is deferred to \cref{sec:appendix}.

\begin{restatable}{claim}{recurrence}
    \label{cla:recurrence}
    Fix reals $a \geq 2$ and $\beta \in (0,2]$, and let $\ell \define \lceil \log_a \frac{2}{\beta} \rceil$. If the sequence $z_0, z_1, \dots, z_\ell$ satisfies
    \begin{align*}
        z_0 & \leq \frac{a}{a-1} - 2 \cdot \beta, \\
        z_{i+1} & \leq a \cdot (z_i - 1) + \beta
            && \text{for every $i \in \set{0, \dots, \ell-1}$},
    \end{align*}
    then $z_\ell < 0$.
\end{restatable}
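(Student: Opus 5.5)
The natural approach is to track the deviation $w_i \define \frac{a}{a-1} - z_i$ from the fixed point $p \define \frac{a}{a-1}$ of the map $z \mapsto a(z-1)$, and to show that $w_i$ grows geometrically. Since $a(p-1) = p$, the recurrence gives
\[
    z_{i+1} - p \leq a \cdot (z_i - 1) - a\cdot(p-1) + \beta = a \cdot (z_i - p) + \beta ,
\]
that is, $w_{i+1} \geq a \cdot w_i - \beta$, with initial condition $w_0 \geq 2\beta$.

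The key step is to show by induction that $w_i \geq \beta \cdot (a^i + 1)$, or any similar bound growing like $\beta a^i$. For $i=0$ this reads $w_0 \geq 2\beta$, which is exactly the hypothesis. For the inductive step,
\[
    w_{i+1} \geq a\beta(a^i+1) - \beta = \beta a^{i+1} + \beta(a-1) \geq \beta(a^{i+1}+1),
\]
where the last inequality uses $a \geq 2$. We need the induction to keep $w_i$ strictly positive, because otherwise the geometric amplification would fail. This is exactly why the hypothesis asks for $2\beta$ of slack rather than just $\beta$. Even the weaker bound $w_i \geq \beta a^i + \beta$ suffices, and it follows the same way.

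To finish, take $i = \ell = \lceil \log_a (2/\beta) \rceil$, so that $a^\ell \geq 2/\beta$. Then
\[
    w_\ell \geq \beta \cdot a^\ell + \beta \geq 2 + \beta > 2 \geq \frac{a}{a-1} = p,
\]
using $a \geq 2$ once more. Hence $z_\ell = p - w_\ell < 0$. The assumption $\beta \leq 2$ guarantees $\ell \geq 0$, so the sequence is well defined.

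The only delicate point is choosing the right inductive invariant. The naive bound $w_{i+1} \geq a w_i - \beta$ with $w_0 \geq \beta$ yields only $w_i \geq \beta$, with no growth at all. The $+\beta$ term in the invariant, together with $a - 1 \geq 1$, is what makes the amplification work. Everything else is routine.
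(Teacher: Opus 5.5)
Your proof is correct and follows essentially the same argument as the paper: both track the distance of $z_i$ to a fixed point, show that this distance grows geometrically, and conclude that $z_\ell$ is forced below zero. The only cosmetic difference is the choice of fixed point. The paper centers at $\frac{a-\beta}{a-1}$, the fixed point of the perturbed map, which gives the purely multiplicative recurrence $g_{i+1} \geq a \cdot g_i$ with $g_0 \geq \beta$. You center at $\frac{a}{a-1}$ and carry the affine invariant $w_i \geq \beta(a^i+1)$, which is the paper's bound $w_i \geq \beta a^i + \frac{\beta}{a-1}$ with $\frac{\beta}{a-1}$ replaced by $\beta$.
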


\begin{lemma}
    \label{lem:test}
    Let $\sigma$ be an instance with $n$ bursts generated adaptively against \Det. If \Det is $\gamma$-dependable, then $\Phi^+_k \geq 1+r$ for every $k \in \set{1, \dots, n}$.
\end{lemma}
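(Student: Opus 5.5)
The argument is by contradiction. Suppose that for some instance $\sigma$ generated adaptively against \Det and some $k$ we have $\Phi^+_k < 1+r$. We extend the prefix $\sigma_{\leq k}$ by a penalizing suffix and show that some post-ack budget along the extended instance becomes negative. This contradicts the bound $\Phi^+_m \geq \gamma \cdot \theta_m \cdot x_m \geq 0$ from \cref{lem:bounds}.

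Set $a \define \gamma \geq 2$ and note that $a/(a-1) = 1+r$. Choose $\beta \in (0,2]$ small enough that $\Phi^+_k \leq 1 + r - 2\beta$, and let $\ell \define \lceil \log_a (2/\beta) \rceil$. The suffix consists of bursts $k+1, \dots, k+\ell$. Because \Det is deterministic and online, it acknowledges the first $k$ bursts exactly as before, so $\Phi^+_k$ is unchanged. Each subsequent burst $m+1$ is issued after \Det acknowledges burst $m$, at latency $y_m \define x_m + \eta$, where $\eta > 0$ is tiny. Its size $s_{m+1}$ is chosen so large that $\theta_{m+1} \leq 1 + \xi$ for a tiny $\xi > 0$. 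Finite dependability guarantees that each acknowledgment happens, so the construction is well defined.

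The key step is the one-step recurrence. Combining \eqref{eq:phi_plus} with \eqref{eq:step} gives
\[
\Phi^+_{m+1} = \Phi_{m+1} + (\gamma\theta_{m+1}-1)x_{m+1} - 1 = \Phi^+_m + \gamma\theta_m\eta + (\gamma\theta_{m+1}-1)x_{m+1} - 1 .
\]
We bound the right-hand side using $x_{m+1} \leq \Phi_{m+1} - 1 = \Phi^+_m + \gamma\theta_m\eta - 1$ (from \cref{lem:bounds}) and $x_{m+1} \leq (m+1)/r$. This yields
\[
\Phi^+_{m+1} \leq \Phi^+_m + (\gamma-1)(\Phi^+_m - 1) - 1 + \text{err} = \gamma(\Phi^+_m - 1) + \text{err},
\]
where $\text{err} = O(\eta) + \gamma\xi (k+\ell+1)/r$. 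Since $\ell$ and $k$ are fixed before the suffix is built, we can choose $\eta$ and $\xi$ so that $\text{err} \leq \beta$.

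Apply \cref{cla:recurrence} with $z_i \define \Phi^+_{k+i}$. Its hypotheses are $z_0 \leq a/(a-1) - 2\beta$ and $z_{i+1} \leq a(z_i - 1) + \beta$, so it gives $\Phi^+_{k+\ell} < 0$, which contradicts \cref{lem:bounds}.

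The main obstacle is controlling the error terms uniformly. The bound on $x_{m+1}$ must hold in absolute terms, which is why $x_{m+1} \leq (m+1)/r$ is needed. The constants $\beta$, $\ell$, $\eta$, $\xi$ must be fixed in that order, before the suffix bursts are generated. One more check: the recurrence step uses $\gamma\theta_{m+1} - 1 \geq 0$ together with an upper bound on $x_{m+1}$, and this sign condition holds since $\gamma \geq 2$.
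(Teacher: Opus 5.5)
Your proof is correct and follows the paper's argument: a penalizing suffix of huge bursts, each placed a tiny latency after the previous acknowledgment, with the error terms controlled via $x_m \leq m/r$ and then fed into \cref{cla:recurrence} with $a = \gamma$. The only difference is bookkeeping: you iterate directly on the post-ack budgets $\Phi^+_{k+i}$, so $z_0 = \Phi^+_k$ and the contradiction comes from $\Phi^+_{k+\ell} \geq 0$. The paper instead iterates on the budgets $\Phi_{k+1}, \dots, \Phi_M$ and contradicts $x_M \leq \Phi_M - 1$.
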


\begin{proof}
    Assume towards contradiction that $\Phi^+_k < 1+r$ for some $k \leq n$. We show that a suitable adaptive extension of $\sigma_{\leq k}$ contradicts the $\gamma$-dependability of~\Det. To this end, we use the following constants.
    \begin{itemize}
        \item $\Delta \define (1+r) - \Phi^+_k$,
        \item $L \define \lceil \log_\gamma(8/\Delta) \rceil$ and $M \define k + L + 1$,
        \item $\delta \define \Delta / (8 \cdot \gamma \cdot \theta_k)$,
        \item $\eta \define \Delta / (8 \cdot \gamma \cdot (M/r + \delta))$.
    \end{itemize}
    As $0 < \Delta \leq 1 + r \leq 2$, all constants are positive and $L \geq 1$. Note that they depend on $\sigma_{\leq k}$ and on the behavior of \Det on it only.

    The extension consists of bursts $k+1, \dots, M$. For $m \in \set{k+1, \dots, M}$, once \Det acknowledges burst~$m-1$ at latency $x_{m-1}$, the adversary issues burst $m$ at latency $y_{m-1} \define x_{m-1} + \delta$ of burst $m-1$, and it chooses its size $s_m \define \lceil (\sum_{j < m} s_j) / \eta \rceil$. Thus, $\theta_m \leq 1 + \eta$ for every $m \in \set{k+1, \dots, M}$, while the overheads $\theta_1, \dots, \theta_k$ inherited from $\sigma_{\leq k}$ can be arbitrary.

    We denote the resulting instance by $\sigma'$. As $\sigma'$ and $\sigma$ share the prefix $\sigma_{\leq k}$, \Det acknowledges the first $k$ bursts of $\sigma'$ at the same latencies $x_1, \dots, x_k$, and $\Phi_k$ and $\Phi^+_k$ are the same for both instances. We first bound $\Phi_{k+1}$ as 
    \begin{align*}
        \Phi_{k+1}
        & = \Phi_k + \gamma \cdot \theta_k \cdot (x_k + \delta) - (1 + x_k) 
            && \text{(by \eqref{eq:step} and $y_k = x_k + \delta$)} \\
        & = \Phi^+_k + \gamma \cdot \theta_k \cdot \delta 
            && \text{(by the definition of $\Phi^+_k$)} \\ 
        & = (1+r) - \Delta + \Delta/8 
            && \text{(by the definitions of $\Delta$ and $\delta$)} \\
        & \leq (1+r) - \Delta / 2.
    \end{align*}
    Next, for $m \in \set{k+1, \dots, M-1}$, we bound $\Phi_{m+1}$ in terms of $\Phi_m$. By \cref{lem:bounds}, $x_m \leq m/r \leq M/r$, and thus 
    \begin{equation}
        \label{eq:eta_helper}
        \eta \cdot \gamma \cdot (x_m + \delta) \leq 
        \eta \cdot \gamma \cdot (M/r + \delta) =
        \Delta / 8.
    \end{equation}
    We obtain
    \begin{align*}
        \Phi_{m+1}
        & = \Phi_m + \gamma \cdot \theta_m \cdot (x_m + \delta) - (1 + x_m)
            && \text{(by \eqref{eq:step} and $y_m = x_m + \delta$)} \\
        & \leq \Phi_m + \gamma \cdot (1+\eta) \cdot (x_m + \delta) - (1 + x_m)
            && \text{(as $\theta_m \leq 1 + \eta$)} \\
        & = \Phi_m + (\gamma-1) \cdot x_m - 1 + \gamma \cdot \delta + \gamma \cdot \eta \cdot (x_m + \delta) \\
        & \leq \Phi_m + (\gamma-1) \cdot x_m - 1 + \Delta/8 + \Delta/8
            && \text{(as $\theta_k \geq 1$ and by \eqref{eq:eta_helper})} \\
        & \leq \gamma \cdot (\Phi_m - 1) + \Delta/4 ,
    \end{align*}
    where the last inequality follows by \cref{lem:bounds}, i.e., $x_m \leq \Phi_m - 1$.

    Now we apply \cref{cla:recurrence} to the sequence $z_i \define \Phi_{k+1+i}$ for $i \in \set{0, \dots, L}$, with $a = \gamma$, $\beta = \Delta/4$ and $\ell = \lceil \log_a(2/\beta) \rceil = L$; the two bounds above are its assumptions. The claim gives $\Phi_M - 1 = z_L - 1 < -1 < x_M$, which contradicts \cref{lem:bounds} for burst $M$.
\end{proof}

\paragraph{Constraints from all bursts.}

By applying \cref{lem:test} after every burst of a flat instance, we obtain one inequality per burst, which proves \cref{lem:certificate}, restated below.

\certificate*

\begin{proof}
    Consider an arbitrary instance $\sigma$ generated adaptively against \Det. By \cref{lem:test}, for every burst $i$ of $\sigma$, it holds that $\Phi^+_i \geq 1+r$, and thus \eqref{eq:phi_plus} implies
    \begin{equation}
        \label{eq:phi_plus2}
        C_i \leq \gamma \cdot (1 + \sum_{j < i} \theta_j \cdot y_j + \theta_i \cdot x_i) - (1+r) .
    \end{equation}

    We start with showing the first-burst property. On the instance $E(\vct{1}, \perp)$, we have $C_1 = 1 + x_1(\Det)$ and $\theta_1 = 1$, so \eqref{eq:phi_plus2} gives $C_1 \leq \gamma \cdot C_1 - (1+r)$. This gives us $C_1 \geq (1+r) / (\gamma - 1) = (1+r) \cdot r = 1 + c$, i.e., $x_1(\Det) \geq c$.

    It remains to show the recursion property. Let $\xi \define \xi_n(\eps)$, cf.~\cref{def:class_Deps}. Consider an instance $E(\rvs, \rvy) \in \gI_n(\Det, \xi)$ with $\rvx(\Det, \rvs, \rvy) = \vct{x_1, \dots, x_n}$, and fix $i \in \set{2, \dots, n}$. As all gaps are at most $1$, we have $Y_{i-1} \leq i - 1$, and by \cref{lem:bounds}, $x_i \leq i/r$. Thus, $Y_{i-1} + x_i \leq n \cdot (1 + 1/r)$, and hence
    \begin{equation}
        \label{eq:xi_error}
        \gamma \cdot \xi \cdot (Y_{i-1} + x_i) \leq \gamma \cdot \xi \cdot n \cdot (1 + 1/r) = \eps / r .
    \end{equation}
    Now, using \eqref{eq:phi_plus2} we have
    \begin{align*}
        C_i
        & \leq \gamma \cdot (1 + Y_{i-1} + x_i) + \gamma \cdot \xi \cdot (Y_{i-1} + x_i) - (1+r)
            && \text{(as $\theta_j \leq 1 + \xi$)} \\
        & \leq \gamma \cdot (1 + Y_{i-1} + x_i) - (1+r) + \eps/r
            && \text{(by \eqref{eq:xi_error})} \\
        & = \gamma \cdot (Y_{i-1} + C_i - C_{i-1}) - (1+r) + \eps/r .
            && \text{(as $1 + x_i = C_i - C_{i-1}$)}
    \end{align*}
    By reorganizing the terms, we get $(\gamma - 1) \cdot C_i \geq \gamma \cdot (C_{i-1} - Y_{i-1}) + (1+r) - \eps/r$. The lemma follows by multiplying both sides by $r = 1/(\gamma-1)$ and using $r \cdot \gamma = 1+r$.
\end{proof}

\subsection{Randomized Lower Bound}
\label{sec:theorem}

The argument below uses properties of deterministic algorithm $\Det \in \gD^{(2)}_\eps$ given by \cref{lem:certificate2}: the lower bound on $x_1(\Det)$ and the lower bound on its cost on a two-burst instance.

We note that the first bound alone already yields a nontrivial lower bound. Indeed, consider a $\gamma$-dependable randomized algorithm \Rand. Every $\Det \in \supp(\Rand)$ is $\gamma$-dependable, so $x_1(\Det) \geq c$ by \cref{lem:certificate}. On the instance $E(\vct{1}, \perp)$, $\Opt$ pays $1$ and \Det pays $1 + x_1(\Det) \geq 1 + c$. Thus, the expected competitive ratio of \Rand is at least $1 + c = r \cdot (1+r)$. In what follows, we construct a distribution over instances on which the two bounds together imply \cref{thm:res_tcp_lb}, restated below for convenience.

\tcplb*
 
\paragraph{Probability distribution.}

Probability distribution $\pi_\eps$ defined below consists of two parts: an atom on the single-packet instance and a density over two-burst instances, in which the second burst is large enough to make the instance flat, and the gap $y$ between the two bursts ranges over $[c, r]$.

For $y \in [c, r]$, we write $w(y) \define (1+r) - (1+y)/(1+r)$. The function $w$ is decreasing, $w(c) = 1$ as $1 + c = r \cdot (1+r)$, and $w(r) = r$, i.e., $w$ maps $[c, r]$ onto $[r, 1]$. Let
\begin{align*}
    q 
    & \define \frac{(2+r) \cdot r^{1+r}}{(1+r)^2 - r^{2+r}} \text{ and} \\
    p(y) 
    & \define q \cdot \frac{(1 + r - w(y)) \cdot w(y)^r}{r^{1+r}} 
    = q \cdot \frac{(1+y) \cdot w(y)^r}{(1+r) \cdot r^{1+r}} 
        \quad \text{for $y \in [c, r]$} .
\end{align*}
Note that $\rho = 1 + q \cdot r$. Fix a real $\eps > 0$ and let $s \define \lceil 1/\xi_2(\eps) \rceil$, cf.~\cref{def:class_Deps}. We define $\pi_\eps$ as the following probability distribution:
\begin{itemize}
    \item with probability $q$, the instance is $E_1 = E(\vct{1}, \perp)$;
    \item with density $p(y)$ on $y \in [c, r]$, the instance is $E_2 = E(\vct{1, s}, \vct{y})$.
\end{itemize}

Note that $\pi_\eps$ is indeed a probability distribution.

\begin{claim}
    \label{cla:normalization}
    It holds that $q + \int_c^r p(y) \, dy = 1$.
\end{claim}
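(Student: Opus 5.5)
The plan is to evaluate $\int_c^r p(y)\,dy$ in closed form by changing variables from $y$ to $w = w(y)$, and then to check that adding $q$ gives exactly $1$. Along the way I would also note that $q > 0$ and $p(y) \geq 0$ on $[c,r]$, since $w(y) \in [r,1]$ and $1+y>0$. These facts are needed for $\pi_\eps$ to be a genuine probability distribution, but they are not part of the normalization identity itself.

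\emph{Change of variables.} Since $w(y) = (1+r) - (1+y)/(1+r)$ is affine, we have $dw = -dy/(1+r)$ and $1 + y = (1+r)\cdot(1+r-w)$. As observed before the claim, $w(c) = 1$ and $w(r) = r$, so the interval $y \in [c,r]$ corresponds to $w \in [r,1]$, with the orientation reversed. Using the first expression for $p$, namely $p(y) = q\cdot(1+r-w)\,w^r / r^{1+r}$, I expect
\[
    \int_c^r p(y)\,dy = \frac{q\,(1+r)}{r^{1+r}} \int_r^1 \bigl((1+r)\,w^r - w^{1+r}\bigr)\,dw .
\]

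\emph{Elementary integration.} The integrand has the antiderivative $w^{1+r} - w^{2+r}/(2+r)$. Evaluating it between $r$ and $1$ and simplifying, using $(2+r)\,r^{1+r} = 2r^{1+r} + r^{2+r}$, should give
\[
    \int_c^r p(y)\,dy = \frac{q\,(1+r)\,\bigl(1+r-2r^{1+r}\bigr)}{(2+r)\,r^{1+r}} .
\]

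\emph{Combining with $q$.} Adding $q$ and putting everything over the common denominator $(2+r)\,r^{1+r}$, the numerator becomes
\[
    (2+r)\,r^{1+r} + (1+r)^2 - 2(1+r)\,r^{1+r} .
\]
The terms $(2+r)\,r^{1+r} - 2(1+r)\,r^{1+r}$ collapse to $-r^{2+r}$, so the numerator is $(1+r)^2 - r^{2+r}$. Hence
\[
    q + \int_c^r p(y)\,dy = q \cdot \frac{(1+r)^2 - r^{2+r}}{(2+r)\,r^{1+r}} ,
\]
which equals $1$ by the definition of $q$.

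There is no real obstacle here. The only points needing care are the orientation of the limits under the substitution and the algebraic bookkeeping of the $r^{1+r}$ and $r^{2+r}$ terms.
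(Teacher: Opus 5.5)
Your proposal is correct and follows essentially the same route as the paper: the same substitution $w = w(y)$ with reversed limits, the same antiderivative $w^{1+r} - w^{2+r}/(2+r)$, and the same closed form $\frac{q(1+r)(1+r-2r^{1+r})}{(2+r)\,r^{1+r}}$ for $\int_c^r p(y)\,dy$. The only cosmetic difference is at the end: you add $q$ over a common denominator to get $q \cdot \frac{1}{q}$, whereas the paper rewrites the integral in terms of $1/q$ to get $q \cdot (1/q - 1) = 1 - q$.
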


\begin{proof}
    We substitute $w = w(y)$, so that $dy = -(1+r) \, dw$, and $w$ runs from $1$ down to $r$ as $y$ runs from $c$ to $r$. Thus,
    \begin{align*}
        \int_c^r p(y) \, dy
        & = \frac{q \cdot (1+r)}{r^{1+r}} \cdot \int_r^1 (1 + r - w) \cdot w^r \, dw \\
        & = \frac{q \cdot (1+r)}{r^{1+r}} \cdot \left[ w^{1+r} - \frac{w^{2+r}}{2+r} \right]_r^1 \\
        & = \frac{q \cdot (1+r)}{r^{1+r}} \cdot \frac{(1+r) - 2 \cdot r^{1+r}}{2+r} \\
        & = q \cdot \left( \frac{(1+r)^2}{(2+r) \cdot r^{1+r}} - \frac{2 \cdot (1+r)}{2+r} \right) \\
        & = q \cdot \left( \frac{1}{q} + \frac{r}{2+r} - \frac{2 \cdot (1+r)}{2+r} \right)
            && \text{(by the definition of $q$)} \\
        & = q \cdot (1/q - 1) = 1 - q . && \qedhere
    \end{align*}
\end{proof}

Below, we show that, up to an additive $\eps$, the expected ratio on $\pi_\eps$ of any algorithm from $\gD^{(2)}_\eps$ which acknowledges the first burst at latency $t \in [c, r]$ is at least $F(t)$, where
\[
    F(t) \define q \cdot (1 + t) + \int_c^t p(y) \, dy + (1+r) \cdot \int_t^r \frac{1 + t + r - y}{1+y} \cdot p(y) \, dy .
\]
\begin{claim}
    \label{cla:constant}
    It holds that $F(t) = \rho$ for every $t \in [c, r]$.
\end{claim}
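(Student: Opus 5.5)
The plan is to show that $F$ is differentiable on $[c, r]$ with $F' \equiv 0$, and then to evaluate $F$ at the endpoint $t = r$, where it is easy to compute.

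\emph{Endpoint value.} At $t = r$ the last integral vanishes. By \cref{cla:normalization}, $\int_c^r p(y)\,dy = 1 - q$, so
\[
F(r) = q \cdot (1+r) + (1 - q) = 1 + q \cdot r = \rho .
\]

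\emph{Derivative.} Since $p$ is continuous, the Leibniz rule applies to both integrals. It gives
\[
F'(t) = q + p(t) - (1+r)\cdot\frac{1+r}{1+t}\cdot p(t) + (1+r)\cdot\int_t^r \frac{p(y)}{1+y}\,dy .
\]
The three non-constant terms come from the upper limit of the first integral, the lower limit of the second integral (where the integrand is $(1+r)/(1+t)$), and differentiating the integrand $1 + t + r - y$ in $t$.

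\emph{Simplifying in terms of $w$.} I will use the second form of $p$, namely $p(y)/(1+y) = q \cdot w(y)^r / \bigl((1+r)\cdot r^{1+r}\bigr)$. Substituting $w = w(y)$, so that $dy = -(1+r)\,dw$, as in \cref{cla:normalization}, the integral term becomes
\[
\frac{q\,(1+r)}{r^{1+r}} \int_r^{w(t)} w^r \,dw = q \cdot \left(\frac{w(t)^{1+r}}{r^{1+r}} - 1\right).
\]
For the boundary terms, write $w = w(t)$. From $1 + t = (1+r)\cdot(1+r-w)$ we get
\[
1 - \frac{(1+r)^2}{1+t} = -\frac{w}{1+r-w}.
\]
Combined with $p(t) = q\,(1+r-w)\,w^r / r^{1+r}$, the boundary terms contribute $-q\, w^{1+r}/r^{1+r}$. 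Adding everything gives $q - q\,w^{1+r}/r^{1+r} + q\,w^{1+r}/r^{1+r} - q = 0$.

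\emph{Conclusion.} Hence $F$ is constant on $[c, r]$, and therefore $F(t) = F(r) = \rho$ for every $t$.

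The only real obstacle is bookkeeping. One has to get the sign and orientation of the substitution right, since $w$ is decreasing, and correctly identify the boundary term from the lower limit of the last integral. The identity $1 + c = r \cdot (1+r)$ is not needed beyond confirming that $w$ maps $[c, r]$ onto $[r, 1]$.
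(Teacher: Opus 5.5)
Your proof is correct and follows essentially the same route as the paper. Both arguments show $F' \equiv 0$ on $[c,r]$, using the substitution $u = w(y)$ to get $\int_t^r w(y)^r\,dy = w(t)^{1+r} - r^{1+r}$, and then evaluate $F(r) = q(1+r) + 1 - q = \rho$ via \cref{cla:normalization}. The only cosmetic difference is the order of steps: the paper rewrites the last integral with the explicit form of $p$ before differentiating, while you differentiate first and then simplify the boundary terms using $1+t = (1+r)(1+r-w(t))$.
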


\begin{proof}
    By substituting $u = w(y)$, we obtain, for every $t \in [c, r]$,
    \[
        \int_t^r w(y)^r \, dy = (1+r) \cdot \int_r^{w(t)} u^r \, du = w(t)^{1+r} - r^{1+r} .
    \]
    Next, by the definition of $p$, we have
    \[
        F(t) = q \cdot (1 + t) + \int_c^t p(y) \, dy + \frac{q}{r^{1+r}} \cdot \int_t^r (1 + t + r - y) \cdot w(y)^r \, dy .
    \]
    We show that $F' = 0$ on $[c, r]$. Fix $t \in [c, r]$. At $y = t$, the integrand of the last integral equals $(1+r) \cdot w(t)^r$, and its derivative with respect to $t$ is $w(y)^r$. Thus,
    \begin{align*}
        F'(t) 
        & = q + p(t) - \frac{q \cdot (1+r) \cdot w(t)^r}{r^{1+r}} + \frac{q}{r^{1+r}} \cdot \int_t^r w(y)^r \, dy \\
        & = q + p(t) - \frac{q \cdot (1+r) \cdot w(t)^r}{r^{1+r}} + \frac{q \cdot w(t)^{1+r}}{r^{1+r}} - q \\
        & = p(t) - q \cdot \frac{(1 + r - w(t)) \cdot w(t)^r}{r^{1+r}} \\
        & = 0 .
            && \text{(by the definition of $p$)}
    \end{align*}
    Thus, $F$ is constant on $[c, r]$, and
    \begin{align*}
        F(t) = F(r) 
        & = q \cdot (1 + r) + \int_c^r p(y) \, dy \\
        & = q \cdot (1+r) + 1 - q 
            && \text{(by \cref{cla:normalization})} \\
        & = \rho 
            && \text{(by the definition of $\rho$).}
        \qedhere
    \end{align*}
\end{proof}

\begin{lemma}
    \label{lem:bad_distribution}
    For every deterministic algorithm $\Det \in \gD^{(2)}_{\eps}$, it holds that $\E_{\sigma \sim \pi_\eps}[\ratio(\Det, \sigma)] \geq \rho - \eps$.
\end{lemma}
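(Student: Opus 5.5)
Fix $\Det \in \gD^{(2)}_\eps$ and let $t \define x_1(\Det)$; by \cref{lem:certificate2}, $t \geq c$. The plan is to bound the expected ratio instance by instance. We split into the case $t \in [c, r]$, where we show the expectation is at least $F(t) - \eps$ and then invoke \cref{cla:constant}, and the case $t > r$, handled by a monotonicity argument.

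\textbf{Per-instance bounds.} On $E_1$ we have $\Opt = 1$ and $\Det$ pays $1 + t$, so the ratio is $1+t$.

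On $E_2 = E(\vct{1,s},\vct{y})$, by \cref{lem:opt_bound}, $\Opt(E_2) \leq 1 + \theta_1 y = 1 + y$. We distinguish two cases.
\begin{itemize}
\item If $y \leq t$, the algorithm has not acknowledged the first packet before burst~2 arrives. Then $\Det$ pays at least $1 + y$: it must acknowledge at least once, and the single packet accrues latency $y$. Hence the ratio is at least $1$.
\item If $y > t$ (and $y \leq r \leq 1$), \cref{lem:certificate2} applies with our choice $s \geq 1/\xi_2(\eps)$. It gives
\[
\Det(E_2) \geq (1+r)(1+t+r-y) - \eps,
\]
so the ratio is at least $(1+r)\frac{1+t+r-y}{1+y} - \frac{\eps}{1+y}$.
\end{itemize}

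\textbf{Integrating.} For $t \in [c,r]$, integrating these bounds against $\pi_\eps$ gives exactly $F(t)$ minus an error term. The error term is at most $\eps \int p(y)/(1+y)\,dy \leq \eps$, since $1+y \geq 1$ and the density has total mass at most $1$. By \cref{cla:constant}, $F(t) = \rho$, so the expectation is at least $\rho - \eps$.

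\textbf{The case $t > r$.} Here every $y \in [c,r]$ satisfies $y \leq t$. The expectation is therefore at least
\[
q(1+t) + \int_c^r p(y)\,dy \;\geq\; q(1+r) + 1 - q \;=\; F(r) \;=\; \rho.
\]
This covers $t \geq 1$ as well, which matters because the recursion property may be vacuous there; we never use it in this case.

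\textbf{Main obstacle.} The step needing care is the $y \leq t$ case. It requires making rigorous that $\Det$'s cost on $E_2$ is at least $1+y$ whenever the first packet is still pending at latency $y$, and that this holds even though the algorithm sees burst~2 before deciding. The argument is that the first acknowledgment costs $1$ and the first packet alone has latency at least $y$ by then.

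The rest is bookkeeping. One point to check is the strict inequality $y > t$ required by \cref{lem:certificate2}: the boundary $y = t$ has measure zero, so it does not affect the integral. The other is that the bound $y \leq 1$ is ensured by $r \leq 1$.
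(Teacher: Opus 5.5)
Your proof is correct and follows essentially the paper's route. You use the same three per-instance bounds, integrate them to get $F(t)$ minus an error of at most $\eps$ and apply \cref{cla:constant} when $t \in [c,r]$, and bound the expectation by $q(1+t) + (1-q) \geq \rho$ when $t > r$. The step you flag as the main obstacle is immediate: $\ratio(\Det, E_2) \geq 1$ holds for every algorithm because $\Opt$ is optimal, and this is exactly the trivial bound the paper uses there.
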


\begin{proof}
    Consider a deterministic algorithm $\Det \in \gD^{(2)}_{\eps}$ and let $x_1 \define x_1(\Det)$; by \cref{lem:certificate2}, $x_1 \geq c$. We may assume that $x_1$ is finite, as otherwise $\ratio(\Det, E_1) = \infty$ and the lemma holds trivially. We bound the ratio of \Det in the following three cases.
    \begin{itemize}
        \item On $E_1$: \Det pays $1 + x_1$ and $\Opt$ pays $1$, so $\ratio(\Det, E_1) = 1 + x_1$.
        \item On $E_2$ with $y \leq x_1$: the second burst arrives before \Det acknowledges the first one, so \cref{lem:certificate2} does not apply, and we use the trivial bound $\ratio(\Det, E_2) \geq 1$. This case covers also all algorithms with $x_1 \geq 1$.
        \item On $E_2$ with $y > x_1$: as $y \leq r \leq 1$, \cref{lem:certificate2} yields $\Det(E_2) \geq (1+r) \cdot (1 + x_1 + r - y) - \eps$, and $\Opt$ may acknowledge once at the second burst, paying $1 + y$. Thus, as $1 + y \geq 1$,
        \[
            \ratio(\Det, E_2)
            \geq \frac{(1+r) \cdot (1 + x_1 + r - y) - \eps}{1+y}
            \geq \frac{(1+r) \cdot (1 + x_1 + r - y)}{1+y} - \eps .
        \]
    \end{itemize}

    First, consider the case $x_1 > r$. As $y$ is chosen from $[c, r]$, we have $y \leq r < x_1$, so the second case above applies to every two-burst instance, and $\E_{\sigma \sim \pi_\eps}[\ratio(\Det,\sigma)] \geq q \cdot (1 + x_1) + (1 - q) = 1 + q \cdot x_1 > 1 + q \cdot r = \rho$.

    Second, consider the case $x_1 \in [c, r]$. The three cases above yield
    \begin{align*}
        \E_{\sigma \sim \pi_\eps}[\ratio(\Det,\sigma)]
        & \geq F(x_1) - \eps \cdot \int_{x_1}^r p(y) \, dy \\
        & \geq F(x_1) - \eps \\
        & = \rho - \eps. 
            && \text{(by \cref{cla:constant})}
        \qedhere
    \end{align*}
\end{proof}

\begin{proof}[Proof of \cref{thm:res_tcp_lb}]
    Fix a real $\eps > 0$. By \cref{lem:certificate}, every $\gamma$-dependable deterministic algorithm belongs to $\gD^{(2)}_{\eps}$, and by \cref{lem:bad_distribution}, each of them has expected ratio at least $\rho - \eps$ on $\pi_\eps$. The min-max principle (\cref{lem:minmax}) with $\gD = \gD^{(2)}_{\eps}$ gives the same bound for every $\gamma$-dependable randomized algorithm. As $\eps > 0$ can be arbitrarily small, the theorem follows.
\end{proof}

\subsection{Proofs of the Technical Claims}
\label{sec:appendix}

\recurrence*

\begin{proof}
    For every $i \in \set{0, \dots, \ell}$, we define the \emph{gap} $g_i \define \frac{a - \beta}{a-1} - z_i$. Using $a \geq 2$, we have
    \begin{align*}
        g_0
        & = \frac{a - \beta}{a-1} - z_0
        \geq \frac{a - \beta}{a-1} - \frac{a}{a-1} + 2 \cdot \beta
        = - \frac{\beta}{a-1} + 2 \cdot \beta \geq -\beta + 2 \cdot \beta = \beta,
    \intertext{and for every $i \in \set{0, \dots, \ell-1}$,}
        g_{i+1}
        & = \frac{a - \beta}{a-1} - z_{i+1}
        \geq \frac{a - \beta}{a-1} - a \cdot (z_i - 1) - \beta \\
        & = \frac{a - \beta}{a-1} + a - \beta - a \cdot z_i
        = a \cdot \left( \frac{a - \beta}{a-1} - z_i \right)
        = a \cdot g_i.
    \end{align*}
    Thus, $g_i \geq a^i \cdot \beta$ for every $i \in \set{0, \dots, \ell}$. In particular, $g_\ell \geq a^\ell \cdot \beta \geq (2/\beta) \cdot \beta = 2$. Furthermore, $a/(a-1) \leq 2$ as $a \geq 2$, and thus
    \[
        z_\ell
        = \frac{a}{a-1} - \frac{\beta}{a-1} - g_\ell
        \leq 2 - \frac{\beta}{a-1} - 2
        < 0 . \qedhere
    \]
\end{proof}

\end{document}